\documentclass[12pt]{amsart}

\usepackage[matrix,arrow,curve,frame]{xy}
\usepackage{amsmath,amsthm,amssymb,enumerate, euscript}
\usepackage{latexsym,amscd, xcolor, tikz, tikz-cd, comment, fullpage}
\usepackage[backref]{hyperref}
\usepackage[all]{xypic}
\newtheorem{theorem}{Theorem}[section]
\newtheorem{lemma}[theorem]{Lemma}

\newtheorem{proposition}[theorem]{Proposition}

\theoremstyle{definition}
\newtheorem{definition}[theorem]{Definition}

\newtheorem{remark}[theorem]{Remark}

\numberwithin{equation}{section}
\newcommand {\be}{\begin{equation}}
\newcommand {\ee}{\end{equation}}

\newcommand{\TT}{\mathbb{T}}
\newcommand{\ZZ}{\mathbb{Z}}
\newcommand{\RR}{\mathbb{R}}
\newcommand{\Q}{\mathbb{Q}}
\newcommand\bbT{\mathbb T}
\newcommand\GG{\mathbb{G}}
\newcommand{\hol}{\mathrm{hol}}
\newcommand{\sU}{{\sf U}}

\renewcommand{\cL}{\mathcal{L}}
\newcommand{\cG}{\mathcal{G}}
\newcommand{\Ca}{\mathcal{C}}
\newcommand{\pic}{\mathrm{Pic}^{\nabla}}
\newcommand{\Z}{\mathbb{Z}}
\newcommand{\Caf}{\mathcal{C}^{flat}}
\newcommand{\Asigma}{\mathcal{A}_{\Sigma, U(1)}}
\newcommand{\dvol}{\mathrm{dvol}}
\newcommand{\set}[1]{\left\{\,#1\,\right\}}
\providecommand{\Tors}{\operatorname{Tors}}

\DeclareMathOperator{\coker}{coker}
\DeclareMathOperator{\rank}{rank}
\DeclareMathOperator{\im}{im}

\begin{document}
 
\title[Moduli Spaces of Connections and B-fields from T-duality with $H$-flux]
{Moduli Spaces of Flat Mixed Fields and T-duality}
 
\author{Fei Han}
\address{Department of Mathematics,
National University of Singapore, Singapore 119076}
\email{mathanf@nus.edu.sg}
 
\author{Pedram Hekmati}
\address{University of Auckland Department of Mathematics 38 Princes Street, Auckland 1010, New Zealand}
\curraddr{}
\email{p.hekmati@auckland.ac.nz}

\author{Tsuyoshi Kato}
\address{Department of Mathematics, Graduate School of Science, Kyoto University, Kyoto 606-8502, Japan}
\email{tkato@math.kyoto-u.ac.jp}

\author{Varghese Mathai}
\address{School of Mathematical Sciences,
Adelaide University, Adelaide 5005, Australia}
\email{mathai.varghese@adelaide.edu.au}
 
\subjclass[2010]{Primary 55N91, Secondary 58D15, 58A12, 81T30, 55N20}
\keywords{}
\date{}

\maketitle

\begin{abstract} 
We study the geometry of mixed fields, consisting of a connection and a
$B$-field on a principal circle bundle over a Riemann surface, from the perspective of gauge theory
and T-duality. Motivated by the foundational work of Atiyah--Bott and Segal,
we introduce a twisted Yang--Mills functional whose critical locus, in the
flat case, is governed by the simultaneous vanishing of the curvature and the
$H$-flux. We show that the gauge group is a semi-direct product of abelian groups parametrised by an integer  $\lambda$.
The moduli spaces are constructed by presymplectic reduction and shown to be
 Heisenberg contact manifolds for $\lambda \neq 0$, whose topology we characterise completely. 
We show that T-duality preserves the twisted Yang--Mills functional and acts on the configuration space of flat mixed fields. We
identify the subgroups of gauge transformations that are compatible with the T-duality map and describe the induced action on the 
singular quotient of T-dualizable flat mixed fields. Precisely at $\lambda =1$  does  T-duality descend to an involutive contactomorphism of the moduli space.
\end{abstract}


\section{Introduction}
\subsection{Background and motivation}
T-duality in string theory can be understood as a transformation acting on
the worldsheet fields of a two-dimensional nonlinear sigma model~\cite{HLSUZ}.
In the presence of supersymmetry, this duality has been studied in
\cite{LRRUZ, KL}. When the target manifold admits an abelian isometry which
preserves the worldsheet action, the construction becomes particularly
transparent: one obtains not only an explicit transformation of the
worldsheet fields, but also the Buscher rules~\cite{Buscher}, which determine
the dual background. The procedure is symmetric in nature, in the sense that
starting from either the original theory or its dual, the same gauging
argument reproduces the other side of the correspondence~\cite{Alvarez2}.

\medskip

From a geometric point of view, T-duality for principal circle bundles with
flux was formulated in~\cite{BEM04a, BEM04b}.
Related perspectives involving twisted $K$-theory, bundle
gerbes and D-branes may be found in~\cite{BM00,BCMMS,MS}. 
Topological and noncommutative formulations were developed in
\cite{BunkeSchick,MR05}, while the relation with generalized geometry is
discussed in~\cite{CG}. Let $Z$ be a principal
$\mathbb{T}$-bundle over a manifold $X$
\[
\begin{CD}
\mathbb{T} @>>> Z \\
&& @V\pi VV \\
&& X,
\end{CD}
\]
equipped with an $H$-flux, that is, a closed $3$-form
$H\in \Omega^3(Z)$ with integral periods (we suppress the factor
$\frac{1}{2\pi i}$ for simplicity). Let $\{U_\alpha\}$ be a good cover of $X$,
and let $A$ be a connection $1$-form on $Z$. We choose local $2$-forms
$B=\{B_\alpha\}$ such that
\[
H|_{\pi^{-1}(U_\alpha)} = dB_\alpha,
\qquad B_\alpha \in \Omega^2(\pi^{-1}(U_\alpha)).
\]
Associated to this data $(Z,A,B,H)$, there is a canonical T-dual quadruple
$(\widehat Z,\widehat A,\widehat B,\widehat H)$, where
\[
\begin{CD}
\widehat{\mathbb{T}} @>>> \widehat Z \\
&& @V\widehat{\pi}VV \\
&& X
\end{CD}
\]
is a principal $\widehat{\mathbb{T}}$-bundle over $X$, equipped with the dual
connection $\widehat A$, dual $B$-fields $\{\widehat B_\alpha\}$, and dual
flux $\widehat H$.
The dual pair is characterized by the relations
\[
\pi_*(H)=F_{\widehat A}, 
\qquad 
\widehat\pi_*(\widehat H)=F_A,
\]
where $F_A=dA$ and $F_{\widehat A}=d\widehat A$ denote the curvatures.
In this way, one obtains a T-duality transformation
\[
\mathcal{T}\colon (Z,A,B)\longmapsto (\widehat Z,\widehat A,\widehat B).
\]
A fundamental consequence of this correspondence is that it induces
isomorphisms in twisted cohomology and twisted $K$-theory between the dual
spaces~\cite{BEM04a, BEM04b}.
A more detailed review will be given in
Section~\ref{rev}.

\medskip

From the viewpoint of gauge theory, one is naturally led to consider
abelian gauge fields on $Z$ together with $B$-fields encoding
gerbe-theoretic data.
The work of Atiyah and Bott~\cite{AB} revealed that the space of connections carries
a natural symplectic form, and the gauge group acts in a Hamiltonian
fashion.
This fits into the broader framework of Hamiltonian group
actions, moment maps and symplectic reduction developed in
\cite{MW,GS,DH82,DH83}.

The present paper extends this Atiyah--Bott framework by incorporating
$B$-fields into the configuration space. We therefore consider
\be \label{conf}
\mathcal{C}=\{(A,B)\},
\ee
consisting of a connection $A$ on $Z$ together with a $B$-field as in the T-duality picture. We
shall refer to such pairs as \emph{mixed fields}.

\medskip

Fix a Riemannian metric $h$ on $X$. The connection $A$
determines a $\bbT$-invariant metric on $Z$ by
\[
h_A:=\pi^*h+A\odot A,
\]
where $A\odot A$ denotes the symmetric product.
Motivated by the seminal work of Segal~\cite{Segal} and Brylinski \cite{Bry98}, we introduce the
\emph{twisted Yang--Mills functional}
\[
\mathrm{YM}(A,B)
:=
\int_X |F_A|^2\,d\operatorname{vol}_h
+
\int_Z H \wedge *_{h_A} H,
\]
which extends the classical Yang--Mills functional by the additional
term involving the $H$-flux. The second term measures the $L^2$-energy of the
$H$-flux with respect to the metric $h_A$ determined
by the connection $A$. 

\medskip

We will see that when $X=\Sigma$ is a closed Riemannian surface of genus $g$ and
$\langle c_1(Z),[\Sigma]\rangle=0$, the minima of this functional
are precisely those mixed fields satisfying
\[
F_A=0,
\qquad 
H=0,
\]
which we call \emph{flat mixed fields}. Moreover, the two-dimensionality of the base   
allows the gauge group of the circle bundle and the symmetry group of the 
$B$-field to couple, so rather than acting independently, they combine into a semi-direct 
product indexed by a {\em level} $\lambda\in \Z$.
The resulting moduli spaces of flat mixed fields are shown to exhibit a  Heisenberg-type geometry, that is a
principal circle bundle over the torus $T^{2g}\times T^{2g}$ whose topology is governed by   
the standard symplectic form  multiplied by $\lambda$.  

We then turn to
T-duality, showing that it preserves the twisted Yang--Mills functional
and gives rise to a natural involution on a suitably defined
T-dualizable moduli stack of flat mixed fields. At $\lambda =1$, the stack collapses onto the moduli space 
and T-duality descends to an involutive contactomorphism of the moduli space itself.

Two natural directions remain: higher-rank torus fibrations, and the extension from $U(1)$ to $SU(N)$ twisted 
Yang--Mills theory, where spherical T-duality should provide the appropriate counterpart. Both will be pursued elsewhere.

\subsection{Main results} In the following, we describe the main results of this paper in more detail.

\subsubsection{Presymplectic reduction and the moduli space of flat mixed fields}

$\, $

A first basic result (Proposition~\ref{YMprop}) is 
\be \label{2YM}
\begin{split}
&\mathrm{YM}(A,B)\\
=&
\int_\Sigma |F_A|^2\,d\operatorname{vol}_h
+
\int_Z H\wedge *_{h_A}H \\
=&
\int_\Sigma
\left|
F_A-\frac{2\pi i\,\langle c_1(Z),[\Sigma]\rangle}
{\operatorname{Area}_h(\Sigma)}
\,d\operatorname{vol}_h
\right|^2
d\operatorname{vol}_h
+
\frac{4\pi^2}{\operatorname{Area}_h(\Sigma)}
\langle c_1(Z),[\Sigma]\rangle^2
+
\int_Z f^2\,d\operatorname{vol}_{h_A},
\end{split}
\ee
where
\[
H=f\,d\operatorname{vol}_{h_A}.
\]
The corresponding Euler--Lagrange equations are
\[
\left\{
\begin{aligned}
d_h^*F_A &=0,\\
d_{h_A}^*H &=0.
\end{aligned}
\right.
\]
It follows from \eqref{2YM} that when
\[
\langle c_1(Z),[\Sigma]\rangle=0,
\]
the absolute minima are precisely the flat mixed fields satisfying
\[
F_A=0,
\qquad
H=0.
\]
The
case of non-trivial degree,
\[
\langle c_1(Z),[\Sigma]\rangle\neq 0
\]
is also of interest, corresponding to {\em projectively flat mixed fields}, but we defer this to future study.

\medskip

The configuration space $\mathcal{C}$ in (\ref{conf}) carries, for each integer 
$\lambda\in\ZZ$, a natural {\em level-$\lambda$ action} of the group
\[
\GG_\lambda=\GG_1\ltimes_{\rho_\lambda}\GG_2 ,
\qquad
\GG_1=C^\infty(\Sigma,\mathbb T),\quad
\GG_2=\mathrm{Pic}^{\nabla}_{\mathbb T}(Z),
\]
where $C^\infty(\Sigma,\mathbb T)$ denotes the group of smooth circle-valued functions on $\Sigma$, 
while $\mathrm{Pic}^{\nabla}_{\mathbb T}(Z)$ is the Picard group of $\mathbb T$-equivariant line bundles 
over $Z$ equipped with $\mathbb T$-invariant connections. The  integrality of $\lambda$ is exactly what 
makes the semi-direct group law close (Lemma~\ref{lem:semidirect}), and we shall refer to $\GG$ as the {\em gauge group of mixed fields}.
The corresponding Lie algebras are 
\[
\mathfrak g_1:=\mathrm{Lie}(\GG_1)
=
C^\infty(\Sigma,\RR),
\]
where the infinitesimal action of
$f\in C^\infty(\Sigma,\RR)$
on $\Omega^1(\Sigma)$ is given by $df$, and by Section~5 of \cite{Kleiman},
\[
\mathfrak g_2:=\mathrm{Lie}(\GG_2)
=
\Omega^1(Z)^{\bbT},
\]
the space of $\bbT$-invariant $1$-forms on $Z$. For
$\omega\in\Omega^1(Z)^{\bbT}$,
the induced infinitesimal action on
$\Omega^2(Z)^{\bbT}$ is given by $d\omega$.

A first indication that $\GG_\lambda$ is the correct symmetry group is provided by Proposition~\ref{YMprop}, 
where we prove that the twisted Yang--Mills functional is invariant under its action.
The terminology ``gauge group'' is further justified in Section~\ref{gaugeloop}. There we show that the action of
$\mathrm{Pic}^{\nabla}_{\mathbb T}(Z)$
on $B$-fields admits a natural interpretation in loop space geometry. More precisely, it corresponds to 
gauge transformations of the connections on the holonomy line bundles over the free loop space $LZ$ 
associated to the $B$-fields on $Z$. 

\medskip

We next return to the geometry of the configuration space itself. In Section~\ref{geomstructure}, we 
construct a natural presymplectic form $\Omega$ on $\mathcal C$ which is degenerate,  but 
nevertheless retains enough structure to support a Hamiltonian description of the gauge action.
Indeed, the action of $\GG_\lambda$ is Hamiltonian, with moment map
\[
\mu(A,B)
=
\int_\Sigma F_A(\cdot)
+
\int_\Sigma \iota_v H\wedge(\iota_v\cdot)
\in
\mathfrak g_1^*\oplus\mathfrak g_2^*,
\]
as shown in Proposition~\ref{prop:moment}. The twisted Yang--Mills functional may be recovered directly 
from the moment map, namely with respect to the natural $L^2$-metric, Proposition~\ref{YMmoment} establishes the identity
\[
\mathrm{YM}(A,B)
=
|\mu(A,B)|^2.
\]

 The {\em moduli space of flat mixed fields} is defined by the presymplectic reduction
\[
\mu^{-1}(0)/\GG_\lambda,
\]
where $\mu^{-1}(0)$ consists precisely of those configurations satisfying
\[
F_A=0,
\qquad
H=0.
\] 

\medskip

\noindent\textbf{Theorem A} (Theorems~\ref{main1}, \ref{thm:cohomology} and~\ref{thm:cohomology-K}).  Let $\lambda\in\ZZ\setminus\{0\}$.
\begin{enumerate} 
\item[$(i)$] 
The moduli space  $\mu^{-1}(0)/\GG_\lambda$ is a  Heisenberg contact manifold of dimension $4g+1$, 
that is a principal circle bundle over the torus $T^{2g}\times T^{2g}$ with first Chern class
\[
c_1\bigl(\mu^{-1}(0)/\GG_\lambda\bigr)
=
-\lambda\sum_{j,k} (\Theta_g)_{jk}\,\alpha_j\cup\beta_k,
\]
where
\[
\Theta_g=
\begin{pmatrix}
0&I_g\\
-I_g&0
\end{pmatrix}
\]
is the standard symplectic matrix in $Sp(2g,\RR)$, $\alpha_j \in H^1(T^{2g}, \mathbb{Z})$ is the dual basis to 
the coordinate $x_j$ and $\beta_k \in H^1(T^{2g}, \mathbb{Z})$ is the dual basis to the coordinate $y_k$.

\item[$(ii)$] 
The fundamental group of $\mu^{-1}(0)/\GG_\lambda$ is a $\ZZ$ central extension of $\ZZ^{2g}\times \ZZ^{2g}$,
\[
\pi_1\bigl(\mu^{-1}(0)/\GG_\lambda\bigr)
=
(\ZZ^{2g}\times \ZZ^{2g})\times_{c_\lambda} \mathbb Z,
\]
with the group law 
\[
(\mathbf m,\mathbf n,\ell)(\mathbf m',\mathbf n',\ell')
=
\bigl(
\mathbf m+\mathbf m',
\mathbf n+\mathbf n',
\ell+\ell'+c_\lambda((\mathbf m,\mathbf n),(\mathbf m',\mathbf n'))
\bigr)
\]
and $2$-cocycle   
\[
c_\lambda((\mathbf m,\mathbf n),(\mathbf m',\mathbf n'))
=
\mathbf n' \Theta_g \mathbf m^T
+
(\lambda-1)\,\mathbf m' \Theta_g \mathbf n^T.
\]

\item[$(iii)$]  The integral cohomology and
$K$-theory groups are given by
\[
H^k\bigl(\mu^{-1}(0)/\GG_\lambda,\mathbb Z\bigr)
\cong
\mathbb Z^{\,b_k}\oplus\bigoplus_{j\ge1}\mathbb Z_{|\lambda|j}^{\,\varepsilon_{k-2}(j)}, 
\]
and
\[
K^{n}(\mu^{-1}(0)/\GG_\lambda)\cong \Z^{\,\binom{4g+1}{2g}}\oplus \bigoplus_{\substack{k\equiv_2 n\\ j\ge1}} \Z_{|\lambda|j}^{\,\varepsilon_{k-2}(j)},
\] 
with the Betti numbers  
\[b_k=
\begin{cases}
\binom{4g}{k}-\binom{4g}{k-2}, & 0\le k\le 2g,\\[2mm]
\binom{4g}{k-1}-\binom{4g}{k+1}, & 2g+1\le k\le 4g+1,
\end{cases}
\] and torsion multiplicities \[
\varepsilon_r(j)=\binom{4g}{r+2-2j}-\binom{4g}{r-2j}
\]
for $0\le r\le 2g-1$, and extended to $2g\le r\le 4g-2$ by $\varepsilon_r(j)=\varepsilon_{4g-2-r}(j)$. 
\end{enumerate}
\medskip

The   cohomology of Heisenberg manifolds was computed by Lee and Packer in \cite[Thm.~2.1]{LP} and the $K$-theory at $\lambda=1$ is
\cite[Thm.~3.9]{ALP}. What the present paper contributes is the $K$-theory for $|\lambda|\ge2$, which to our 
knowledge does not appear in the literature and which follows from the $\lambda$-uniform integral intertwiner of Lemma~\ref{lem:conj}. 

At $\lambda =0$, the gauge group becomes a product $\GG_1\times\GG_2$ and each factor acts independently 
on the mixed fields, so the moduli space degenerates to a torus $T^{4g+1}$.

\subsubsection{T-duality of the twisted Yang--Mills functional and the T-dualizable moduli stack}

$\, $
In T-duality, the radius of the circle fibre is transformed under the Buscher rules. Hence, we consider the family of
metrics  
\[
h_{A,R}
=
\pi^*h_X+R^2\,A\odot A,
\qquad
\widehat h_{\widehat A,1/R}
=
\widehat\pi^*h_X+\frac{1}{R^2}\,\widehat A\odot\widehat A,
\]
where $h_X$ is a Riemannian metric on $X$. Under T-duality, the fibre
radius is exchanged according to
\begin{center}
$R\Longleftrightarrow 1/R$.
\end{center}
It is therefore natural to ask whether the twisted Yang--Mills functional introduced above is compatible with this symmetry. 
To incorporate the radius parameter into the picture, we consider the rescaled functional
\[
\mathrm{YM}(A,B)_R
:=
\sqrt{R}\left(
\int_X |F_A|^2d\operatorname{vol}_h
+
\int_Z H\wedge *_{h_{A,R}}H
\right).
\]

The volume of the circle fibre scales linearly with $R$, while the Hodge star operator acting on forms with a vertical component 
introduces an additional $R$-dependence. The factor $\sqrt{R}$ precisely compensates for these scaling effects.
The resulting functional enjoys the expected transformation law under T-duality.

\medskip

\noindent\textbf{Theorem B} (Theorem~\ref{mainYM}). 
\[
\mathrm{YM}(A, B)_R
=
\mathrm{YM}(\widehat A, \widehat B)_{1/R}.
\]

\medskip
\medskip

From the discussion in Section~\ref{rev}, it follows that the T-duality transformation restricts to a self-map
\[
\mathcal T\colon \mu^{-1}(0)\longrightarrow\mu^{-1}(0).
\]
This suggests that one should look for  gauge transformations that are compatible with T-duality.
Accordingly, we make the following definition. A gauge transformation
\[
(g,(L,\nabla^{L}))\in\GG_\lambda
\]
is said to be {\em T-dualizable} at
\[
(A,B)\in\mu^{-1}(0)
\]
if there exists a dual pair
\[
(\widehat g,(\widehat L,\nabla^{\widehat L}))
\]
such that
\[
\mathcal T \left((g,(L,\nabla^{L}))\cdot(A,B)\right)
=
(\widehat g,(\widehat L,\nabla^{\widehat L}))
\cdot
\mathcal T(A,B).
\]
The collection of all T-dualizable gauge transformations at $(A,B)$ forms a  subgroup
\[
\GG^{T}_{(A,B)}
:=
\Bigl\{
(g,(L,\nabla^{L}))\in\GG_\lambda
\;\Big|\;
(g,(L,\nabla^{L}))
\text{ is T-dualizable at }(A,B)
\Bigr\}
<\GG_\lambda.
\]
Let $ h\in\GG^{T}_{(A,B)}$, then  $h\cdot(A,B)\in\mu^{-1}(0)$  and
$
\GG^{T}_{h\cdot(A,B)}
=
\GG^{T}_{(A,B)}.$
This observation allows us to introduce an equivalence relation on $\mu^{-1}(0)$ by declaring
\[
(A,B)\underset{T}{\sim}(A',B')
\quad\Longleftrightarrow\quad
(A',B')
=
h\cdot(A,B)
\text{ for some }
h\in\GG^{T}_{(A,B)}.
\]
The quotient
\[
\mu^{-1}(0)\big/\underset{T}{\sim}
\]
will be called the {\em T-dualizable moduli stack of flat mixed fields}.

\medskip
We can describe $\GG^{T}_{(A,B)}$ and the moduli stack more explicitly.  
Let $(\mathbf{x},\mathbf{y})
\in
\mathbb R^{2g}\times\mathbb R^{2g}
$ and define
\[
G_{(\mathbf{x},\mathbf{y})}
:=
\left\{
(\mathbf m,\mathbf n)
\in
\mathbb Z^{2g}\times\mathbb Z^{2g}
\;\Big|\;
(\lambda-1)\bigl(
\mathbf n\Theta_g\mathbf x^T
-
\mathbf m\Theta_g\mathbf y^T
\bigr)
\in\mathbb Z
\right\}.
\]
A direct verification shows that
\[
G_{(\mathbf{x},\mathbf{y})}
=
G_{(\mathbf{x}+\mathbf a,\mathbf y+\mathbf b)},
\qquad
(\mathbf a,\mathbf b)
\in
\mathbb Z^{2g}\times\mathbb Z^{2g}.
\]
Using $G_{(\mathbf x,\mathbf y)}$ we introduce a relation $\approx_T$ on
$\mathbb R^{2g}\times\mathbb R^{2g}\times\RR/\ZZ$ by declaring
\[
(\mathbf x,\mathbf y,[t])
\approx_T
(\mathbf x',\mathbf y',[t'])
\quad\Longleftrightarrow\quad
\mathbf x'=\mathbf x+\mathbf m,\ \
\mathbf y'=\mathbf y+\mathbf n,\ \
[t']=\bigl[t+\phi^{\lambda}_{(\mathbf m,\mathbf n)}(\mathbf x,\mathbf y)\bigr]
\]
for some $(\mathbf m,\mathbf n)\in G_{(\mathbf x,\mathbf y)}$, where
$\phi^{\lambda}_{(\mathbf m,\mathbf n)}(\mathbf x,\mathbf y)
=\mathbf n\Theta_g\mathbf x^T+(\lambda-1)\,\mathbf m\Theta_g\mathbf y^T$
is the level-$\lambda$ fibre shift.  
\newpage

\noindent\textbf{Theorem C} (Theorem~\ref{main2}).  
\begin{enumerate}
\item[$(i)$] 
For every \( (A,B)\in\mu^{-1}(0) \),
\[
\GG^{T}_{(A,B)}
\;\cong\;
\GG_{0}\oplus \ker l_{(A,B)},
\]
where
\[
\GG_{0}
:=
\left\{
(g,(L,\nabla^{L}))\in\GG
\;\Big|\;
g=e^{2\pi i f_{1}},\;
\mathrm{hol}^{v}(\nabla^{L})=e^{2\pi i f_{2}},\;
f_{1},f_{2}\in C^\infty(\Sigma,\RR)
\right\}
<\GG_\lambda,
\]
and
\[
l_{(A,B)} :
\mathbb{Z}^{2g}\times\mathbb{Z}^{2g}
\longrightarrow
\RR/\ZZ,
\qquad
(\mathbf m,\mathbf n)
\longmapsto
\bigl\{
(\lambda-1)\bigl(
\mathbf n \Theta_g [a_{0}]^{T}
-
\mathbf m \Theta_g [b_{0}]^{T}
\bigr)
\bigr\},
\]
is a homomorphism. Here
\(a_0\in \Omega^{1}_{\mathrm{cl}}(\Sigma)\) and
\(b_0\in \Omega^{1}_{\mathrm{cl}}(\Sigma)\)
are differential forms canonically determined by \(A\) and \(B\)
(see~\eqref{expression}), while
\([a_0]\) and \([b_0]\) denote the corresponding vectors in
\(H^1(\Sigma,\RR)\cong\RR^{2g}\) after fixing a symplectic basis.

\item[$(ii)$] 
The relation $\approx_T$ is an equivalence relation, and there is a natural bijection
\[
\mu^{-1}(0)\big/\underset{T}{\sim}
\;\cong\;
\bigl(\RR^{2g}\times\RR^{2g}\times \RR/\ZZ\bigr)\big/\approx_T .
\]

\item[$(iii)$] 
The T-duality map $\mathcal T:\mu^{-1}(0)\longrightarrow\mu^{-1}(0)$
induces a well-defined involutive transformation
\[
\mathcal T_{*}:
\mu^{-1}(0)\big/\underset{T}{\sim}
\longrightarrow
\mu^{-1}(0)\big/\underset{T}{\sim}, 
\]
which, under the identification in $(ii)$, reads
\[
\mathcal T_{*}\bigl[(\mathbf x,\mathbf y),[t]\bigr]
=
\bigl[
(-\mathbf y,-\mathbf x),
\,[t+\mathbf x\Theta_g\mathbf y^{T}]
\bigr].
\]
\end{enumerate}

\medskip

There is a canonical   surjection from the T-dualizable moduli stack to the   moduli space of flat mixed fields, 
\[p\colon \mu^{-1}(0)\big/\underset{T}{\sim} \ \to \ \mu^{-1}(0)\big/\GG_\lambda \]
where the pre-image 
\[ p^{-1}(A,B) \cong \mathbb{G}_\lambda / \mathbb{G}_{(A,B)}^T \cong \operatorname{im}(l_{(A,B)}) \]
 is determined by the action of the full gauge group $\mathbb{G}_\lambda$ relative to the T-dualizable subgroup $\mathbb{G}_{(A,B)}^T$.
Hence, along $p$ the $\RR/\ZZ$ fibers of the two quotients are related by a fiber-wise map whose kernel is precisely the 
finitely generated subgroup $\operatorname{im}(l_{(A,B)}) \subset \RR/\ZZ$. Over generic points in the base 
$T^{2g}\times T^{2g}$, $\operatorname{im}(l_{(A,B)})$ is a dense subgroup of rank $4g$, making $p^{-1}(A,B)$ 
countably infinite, while over rational points in $T^{2g}\times T^{2g}$, $\operatorname{im}(l_{(A,B)})$ is a finite cyclic group, and the fiber collapses.

We conclude the paper by presenting the moduli space as an action Lie groupoid $\mathcal H$ carrying a canonical
$\mathbb{R}/\mathbb{Z}$-valued cocycle $\Phi$.  The kernel of $\Phi$ is a Lie groupoid $\mathcal K$ that presents the moduli stack.  
T-duality acts on $\mathcal H$, but descends to a genuine groupoid automorphism only on $\mathcal K$.

\subsection*{Organization of the paper}
In Section~\ref{geomstructure}, we introduce the configuration space of mixed fields and its gauge symmetry, define the 
twisted Yang--Mills functional, and establish its basic geometric properties. In particular, we construct a natural 
presymplectic structure on the configuration space and identify the corresponding moment map.

Section~\ref{gaugeloop} provides an interpretation of the $B$-field gauge symmetry in terms of holonomy line bundles over the free loop space.

In Section~\ref{moduli}, we perform the presymplectic reduction of the configuration space and determine the topology and 
geometry of the moduli space of flat mixed fields, proving Theorem~A.

The second part of the paper is devoted to the interaction of these constructions with T-duality. Section~\ref{rev} reviews the 
geometric framework of T-duality for principal circle bundles with flux. In Section~\ref{TYM}, we establish the T-duality 
invariance of the suitably rescaled twisted Yang--Mills functional, proving Theorem~B.  Section~\ref{Tmoduli} introduces the 
T-dualizable moduli stack and studies the induced action of T-duality on this quotient, leading to Theorem~C. 
Finally, Section~\ref{Groupoid} gives  a groupoid presentation of the moduli stack and describes the action of T-duality.

\subsection*{Acknowledgement}  Fei Han was partially supported by the AcRF grant from the National University of Singapore. 
Tsuyoshi Kato is supported by JSPS KAKEHI Grant number 23K22394. Varghese Mathai thanks the Australian Research Council  
for support via the Australian Laureate Fellowship FL170100020. Pedram Hekmati is grateful to Max Planck Institute for Mathematics in
Bonn for its hospitality and financial support.


\section{The configuration space of mixed fields} \label{geomstructure}

\subsection{The configuration space of mixed fields and gauge transformations} 
$\, $

Let $\Sigma$ be a closed oriented $2$-manifold of genus $g$. Fix a Riemannian metric $h$ on $\Sigma$ together with a good open cover $\mathcal U=\{U_\alpha\}$.
Let $Z$ be a principal $\bbT$-bundle over $\Sigma$,
\[
\begin{CD}
\bbT @>>> Z \\
@.   @V\pi VV \\
@.   \Sigma
\end{CD}
\]
The metric $h$ on the base, together with the standard metric on the circle, induces a natural metric on the total space $Z$.
Let $A$ be a connection on $Z$ and let $B$
be a $\TT$-invariant $B$-field on $Z$. Motivated by  topological T-duality reviewed in Section~\ref{rev}, we consider the \emph{configuration space}
\[
\Ca=\set{(A,B)}.
\]

Let us first  clarify what is meant by $B$, since the notation $(A,B)$ suppresses some data.
A $\bbT$-invariant $B$-field is a {\em curving} for a $\bbT$-invariant gerbe on $Z$. Relative to the cover
$\{U_\alpha\times\bbT\}$ it consists of local $2$-forms $B_\alpha$ subject to
\be\label{eq:curving}
B_\beta-B_\alpha=dA_{\alpha\beta},
\ee
\be\label{eq:triple}
A_{\alpha\beta}+A_{\beta\gamma}+A_{\gamma\alpha}=g_{\alpha\beta\gamma}^{-1}dg_{\alpha\beta\gamma},
\ee
where $(A_{\alpha\beta},g_{\alpha\beta\gamma})$  is part of the connective structure on the gerbe,  see Section~\ref{gaugeloop}. 
The class of the $\bbT$-invariant \v Cech $2$-cocycle $g_{\alpha\beta\gamma}$  in
$H^2(Z;\underline{U(1)})\cong H^3(Z,\ZZ)$ is the Dixmier--Douady class of the gerbe.
Throughout we write simply $B$ for the curving and suppress the cochain
$(A_{\alpha\beta},g_{\alpha\beta\gamma})$ from the notation, except at places  where it matters. This is   harmless 
since the flux $H=dB$ is a global $3$-form 
and both the presymplectic form \eqref{eq:Omega-def} and the moment map \eqref{moment} are built
from $H$ and from deformations $\delta B$, which by the discussion below are global. 

With this understood, we fix the cohomology class of the cocycle $g_{\alpha\beta\gamma}$ (i.e. the $H$-flux), and the 
configuration space is  the affine space 
\be
\Ca=\bigl\{\,(A,B)\ :\ B=\{B_\alpha, A_{\alpha\beta}\}\ \text{a $\bbT$-invariant cochain satisfying
\eqref{eq:curving} and \eqref{eq:triple}}\,\bigr\}.
\ee
Consider the vector space of $\bbT$-invariant cochains,
\[V:=\bigl\{\,(b_\alpha,a_{\alpha\beta})\ :\ b_\beta-b_\alpha=da_{\alpha\beta},\ \
a_{\alpha\beta}+a_{\beta\gamma}+a_{\gamma\alpha}=0\,\bigr\}.
\]
Then $\Ca$  is a torsor over  $\Omega^1(\Sigma)\times V$,
the first factor acting on $A$ and the second on $(B_\alpha,A_{\alpha\beta})$. 
Indeed, if
$(B_\alpha,A_{\alpha\beta})$ and $(B'_\alpha,A'_{\alpha\beta})$ are two configurations with the
same $A$, their difference $(b_\alpha,a_{\alpha\beta})=(B'_\alpha-B_\alpha,
A'_{\alpha\beta}-A_{\alpha\beta})$ satisfies
$b_\beta-b_\alpha=dA'_{\alpha\beta}-dA_{\alpha\beta}=da_{\alpha\beta}$ by
\eqref{eq:curving}, and $\sum_{\rm cyc}a_{\alpha\beta}=0$ by \eqref{eq:triple} with the term
$g_{\alpha\beta\gamma}^{-1}dg_{\alpha\beta\gamma}$ cancelling because
$g_{\alpha\beta\gamma}$ is the same for both, so the difference lies in $V$. Conversely, adding
$(b_\alpha,a_{\alpha\beta})\in V$ to a configuration preserves \eqref{eq:curving} and
\eqref{eq:triple}, so the action  is free and transitive.  We also note that the sub-torsor at fixed gluing data, on which
$a_{\alpha\beta}=0$ and $b_\alpha$ is a global form, is a torsor over the space of \(\bbT\)-invariant \(2\)-forms
$\Omega^2(Z)^\bbT$.

It will be convenient to record the local decomposition
\be\label{eq:local-decomp}
B_\alpha=b_{1,\alpha}+A\wedge b_{0,\alpha},
\qquad
b_{1,\alpha}\in\Omega^{2}(U_\alpha),
\quad
b_{0,\alpha}\in\Omega^{1}(U_\alpha),
\ee
which separates the part of $B_\alpha$ with a vertical leg from the part without. On a single
chart the decomposition is a canonical isomorphism
$\Omega^2(U_\alpha\times\bbT)^\bbT\cong\Omega^2(U_\alpha)\oplus\Omega^1(U_\alpha)$, where
$b_{0,\alpha}=\iota_vB_\alpha$ and $b_{1,\alpha}=B_\alpha-A\wedge b_{0,\alpha}$, the latter being
basic since $\iota_vb_{1,\alpha}=\iota_vB_\alpha-b_{0,\alpha}=0$,  and $\iota_v$ denoting the contraction with the 
fibre-generating vector field. The decomposition is however not a product
$\prod_\alpha\Omega^2(U_\alpha)\times\prod_\alpha\Omega^1(U_\alpha)$ since  $\{B_\alpha\}$ is subject to the gluing
\eqref{eq:curving}, which by \eqref{eq:local-decomp} reads
\[
b_{0,\beta}-b_{0,\alpha}=-d(\iota_vA_{\alpha\beta}),
\qquad
b_{1,\beta}-b_{1,\alpha}=dA_{\alpha\beta}+A\wedge d(\iota_vA_{\alpha\beta}),
\]
so the local pieces are tied together across overlaps. On the flat locus of
Section~\ref{moduli}, where $Z$ is trivialized and the gerbe data may be taken to vanish, the
decomposition becomes global and will be used in that form in Lemma~\ref{lem:finite-model}.

The deformation of a connection $A$ does not change the cohomology class represented by its curvature.
{\em For the $B$-field, we restrict to deformations that preserve the
gluing data $A_{\alpha\beta}$, or equivalently, we deform the curving only}. Such a deformation
satisfies $\delta B_\beta-\delta B_\alpha=0$ by \eqref{eq:curving}, and is therefore a global
$\bbT$-invariant $2$-form on $Z$. The transgression of
such a $2$-form gives the corresponding deformation of the connection on the holonomy line bundle
over the loop space associated to the $B$-field; see Remark~\ref{add2form}. Therefore
the tangent space at a point \((A,B)\) is
\[
T_{(A,B)}\Ca
\cong
\Omega^1(\Sigma)\times \Omega^2(Z)^{\TT}.
\]
 
\medskip

Let $\GG_1:=C^\infty(\Sigma,\bbT)$ be the gauge group for connections. For $g\in \GG_1$,
\[
g\cdot A \;=\; A + d\ln g .
\]
Let $\GG_2:=\pic_{\bbT}(Z)$ be the Picard group of $\bbT$-equivariant line bundles with $\bbT$-invariant connections over $Z$.
Its action on $B$-fields is
\[
(L,\nabla^L)\cdot B \;=\; \{\,B_\alpha + (\nabla^L)^2\,\}.
\]
We call $\GG_2$ the \emph{gauge  group of the $B$-fields}.
In Section~\ref{gaugeloop} we  justify this terminology by relating the Picard group action on $B$-fields
to gauge transformations of line bundle connections over free loop spaces (see Proposition~\ref{loop gauge}).
The relation between gerbes, holonomy and transgression can be found
in~\cite{Bry}.

Throughout, we use the notation
\[
u:=g^{-1}dg,
\qquad
\eta:=F_{\nabla^L}=(\nabla^L)^2,
\]
for \(g\in\GG_1\) and
\(\mathcal L=(L,\nabla^L)\in\GG_2\). We shall make no distinction
between a form on \(\Sigma\) and its pullback to \(Z\). Thus \(u\) is
a closed \(1\)-form on \(\Sigma\) with integral periods, while \(\eta\)
is a closed \(\bbT\)-invariant \(2\)-form on \(Z\), again with integral
periods.

The vertical holonomy of \(\nabla^L\) is a \(\bbT\)-valued function
\[
\hol^v(\nabla^L)\colon \Sigma\longrightarrow \bbT,
\]
and one has
\be\label{eq:iota-eta}
\iota_v\eta
=
-\,d\log\hol^v(\nabla^L).
\ee
In particular, \(\iota_v\eta\) is basic and descends to a closed
\(1\)-form on \(\Sigma\) with integral periods.

Interestingly, a $\TT$-invariant gerbe over a circle bundle over a Riemann surface admits a natural action of the 
gauge group $\GG_1$ by gerbe automorphisms, indexed by an integer $\lambda \in \Z $. These combine into 
an action by a semi-direct product $\GG_1 \ltimes_{\rho_\lambda} \GG_2$ on the configuration space as explained below.

\begin{definition}\label{def:twisted-action}
Let \(\lambda\in\ZZ\). For $(g, \mathcal L=(L,\nabla^L))\in \GG_1\times \GG_2$,
the associated \emph{level-\(\lambda\) transformation} of
\((A,B)\in\Ca\) is defined by
\be\label{gautran}
\begin{aligned}
\widetilde A
&=A+u,\\
\widetilde B_\alpha
&=B_\alpha+\eta
  +\lambda\,u\wedge\iota_vB_\alpha,\\
\widetilde A_{\alpha\beta}
&=A_{\alpha\beta}
  +\lambda\,(\iota_vA_{\alpha\beta})\,u.
\end{aligned}
\ee
We write this transformation as
\[
(g,\mathcal L)\cdot(A,B)
=
(\widetilde A,\widetilde B).
\]
\end{definition}
There are two elementary identities behind these formulas. Since \(u\)
is basic and \(\iota_v^2=0\),
\be\label{eq:iota-twist}
\iota_v\bigl(u\wedge\iota_vB_\alpha\bigr)
=
(\iota_vu)\,\iota_vB_\alpha
-u\wedge\iota_v^2B_\alpha
=
0,
\ee
and
\be\label{eq:iota-twist-A}
\iota_v\widetilde A_{\alpha\beta}
=
\iota_v\bigl(
A_{\alpha\beta}
+\lambda(\iota_vA_{\alpha\beta})u
\bigr)
=
\iota_vA_{\alpha\beta}.
\ee
Thus the $\lambda$-dependent terms are nilpotent in the simplest possible
sense, namely once inserted, they do not produce a second
correction of the same kind.

There is also a dimensional fact which will be used repeatedly.
Since the gerbe data is \(\bbT\)-invariant and 
\(\mathcal L_v=d\iota_v+\iota_vd=0\), we have
\[
d\iota_vB_\alpha=-\iota_vH.
\]
It follows that
\be\label{eq:d-twist}
d\bigl(u\wedge\iota_vB_\alpha\bigr)
=-u\wedge d(\iota_vB_\alpha)
=u\wedge\iota_vH =0.
\ee
Indeed, both \(u\) and \(\iota_vH\) are basic, so their wedge product
is the pullback of a \(3\)-form on the surface \(\Sigma\). This is the
only point at which the assumption \(\dim\Sigma=2\) enters, but in an essential way.

\begin{lemma}\label{lem:cocycle-repair}
Let \((A,B)\in\Ca\), with gerbe data
\[
(B_\alpha,A_{\alpha\beta},g_{\alpha\beta\gamma}),
\]
and let \((\widetilde A,\widetilde B)\) be its transform under
\eqref{gautran}. Then
\[
\widetilde B_\beta-\widetilde B_\alpha
=
d\widetilde A_{\alpha\beta},
\]
and
\[
\widetilde A_{\alpha\beta}
+\widetilde A_{\beta\gamma}
+\widetilde A_{\gamma\alpha}
=
g_{\alpha\beta\gamma}^{-1}dg_{\alpha\beta\gamma}.
\]
Thus the transformed local data again defines a \(B\)-field on \(Z\).
Moreover,
\[
\widetilde H=H,
\qquad
F_{\widetilde A}=F_A.
\]
In particular, the Dixmier--Douady class of the gerbe is unchanged.
\end{lemma}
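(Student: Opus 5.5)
This is a direct computation from \eqref{gautran}, using the identities \eqref{eq:iota-twist}, \eqref{eq:iota-twist-A} and \eqref{eq:d-twist}. I would treat the $\eta$-part and the $\lambda$-part separately. Since $\eta$ is a global closed $2$-form, it cancels in $\widetilde B_\beta-\widetilde B_\alpha$ and contributes nothing to $d\widetilde B_\alpha$. So everything reduces to checking the $\lambda$-corrections.

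\textbf{Curving condition.} Using \eqref{eq:curving},
\[
\widetilde B_\beta-\widetilde B_\alpha = dA_{\alpha\beta}+\lambda\,u\wedge\iota_v(B_\beta-B_\alpha)=dA_{\alpha\beta}+\lambda\,u\wedge\iota_v dA_{\alpha\beta}.
\]
On the other side, since $u$ is closed,
\[
d\widetilde A_{\alpha\beta}=dA_{\alpha\beta}+\lambda\,d(\iota_vA_{\alpha\beta})\wedge u .
\]
The key step is to use $\bbT$-invariance of $A_{\alpha\beta}$, which by Cartan's formula gives $\iota_v dA_{\alpha\beta}=-d\iota_vA_{\alpha\beta}$. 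Then
\[
u\wedge\iota_vdA_{\alpha\beta}=-u\wedge d\iota_vA_{\alpha\beta}=d(\iota_vA_{\alpha\beta})\wedge u,
\]
so the two expressions agree. I expect the sign bookkeeping here, namely swapping a $1$-form with a $1$-form, to be the only delicate point. If the signs failed, I would recheck the convention $u\wedge\iota_vB$ against the gluing formula for $b_{0}$ recorded after \eqref{eq:local-decomp}.

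\textbf{Cocycle condition.} Summing cyclically gives
\[
\sum_{\mathrm{cyc}}\widetilde A_{\alpha\beta}=g^{-1}_{\alpha\beta\gamma}dg_{\alpha\beta\gamma}+\lambda\,\iota_v\bigl(g^{-1}_{\alpha\beta\gamma}dg_{\alpha\beta\gamma}\bigr)\,u .
\]
Since $g_{\alpha\beta\gamma}$ is $\bbT$-invariant, it is pulled back from $U_{\alpha\beta\gamma}$. Its log-derivative is therefore basic, and the contraction vanishes. This is the step where $\bbT$-invariance of the gerbe data is genuinely needed.

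\textbf{Flux and curvature.} $F_{\widetilde A}=F_A+du=F_A$. For the flux, $\widetilde H=d\widetilde B_\alpha=H+d\eta+\lambda\,d(u\wedge\iota_vB_\alpha)=H$, using $d\eta=0$ and \eqref{eq:d-twist}; the latter uses $\dim\Sigma=2$ to kill $u\wedge\iota_vH$. Finally, the \v Cech cocycle $g_{\alpha\beta\gamma}$ is literally unchanged, so the Dixmier--Douady class is preserved.
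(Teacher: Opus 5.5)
Your proposal is correct and follows essentially the same route as the paper: contract the descent equation with $v$ using $\bbT$-invariance to get $\iota_vB_\beta-\iota_vB_\alpha=-d(\iota_vA_{\alpha\beta})$, recognise the resulting $\lambda$-correction as $d\bigl(\lambda(\iota_vA_{\alpha\beta})u\bigr)$ via the sign swap $-u\wedge d(\iota_vA_{\alpha\beta})=d(\iota_vA_{\alpha\beta})\wedge u$, and kill the extra triple-overlap term because $\iota_v(g_{\alpha\beta\gamma}^{-1}dg_{\alpha\beta\gamma})=0$. You then conclude $\widetilde H=H$ from \eqref{eq:d-twist} and $F_{\widetilde A}=F_A$ from $du=0$, exactly as the paper does.
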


\begin{proof}
Since all the local data is \(\bbT\)-invariant, applying \(\iota_v\) to the descent equation
\[
B_\beta-B_\alpha=dA_{\alpha\beta}
\]
yields
\[
\iota_vB_\beta-\iota_vB_\alpha
=
-d(\iota_vA_{\alpha\beta}).
\]
Put $h_{\alpha\beta}:=\iota_vA_{\alpha\beta}$.  Since \(du=0\), we have
\[
d(h_{\alpha\beta}u)
=
dh_{\alpha\beta}\wedge u
=
-u\wedge dh_{\alpha\beta}, 
\]
and it follows that
\[
\lambda u\wedge\iota_vB_\beta
-\lambda u\wedge\iota_vB_\alpha
=
-\lambda u\wedge dh_{\alpha\beta}\\
=
d(\lambda h_{\alpha\beta}u).
\]
The form \(\eta\) is global and disappears on taking
differences, so
\[
\begin{aligned}
\widetilde B_\beta-\widetilde B_\alpha
&=
dA_{\alpha\beta}
+d(\lambda h_{\alpha\beta}u)\\
&=
d\bigl(
A_{\alpha\beta}
+\lambda(\iota_vA_{\alpha\beta})u
\bigr)\\
&=
d\widetilde A_{\alpha\beta}.
\end{aligned}
\]
The correction in $\widetilde A_{\alpha\beta}$  is thus precisely the term required to
repair the descent equation.

On a triple overlap, we obtain from \eqref{eq:triple}
\[
\widetilde A_{\alpha\beta}
+\widetilde A_{\beta\gamma}
+\widetilde A_{\gamma\alpha}
=
g_{\alpha\beta\gamma}^{-1}dg_{\alpha\beta\gamma}  +\lambda\,
\iota_v\bigl(
g_{\alpha\beta\gamma}^{-1}
dg_{\alpha\beta\gamma}
\bigr)u.
\]
The transition function \(g_{\alpha\beta\gamma}\) is
\(\bbT\)-invariant. Hence
\[
\iota_v\bigl(
g_{\alpha\beta\gamma}^{-1}
dg_{\alpha\beta\gamma}
\bigr)
=
g_{\alpha\beta\gamma}^{-1}
v(g_{\alpha\beta\gamma})
=
0,
\]
and the triple-overlap condition is unchanged. Finally, using \eqref{eq:d-twist},
\[
\begin{aligned}
\widetilde H|_{U_\alpha}
&=
d\widetilde B_\alpha\\
&=
dB_\alpha+d\eta
+\lambda\,d(u\wedge\iota_vB_\alpha)\\
&=
H|_{U_\alpha}.
\end{aligned}
\]
Likewise, $F_{\widetilde A}=F_{A+u}=F_A,$ as \(du=0\).
\end{proof}

We now give a geometric interpretation of the group law encoded in
\eqref{gautran}. The mixed term which appears when two
transformations are composed is the curvature
of a natural line bundle on the base. In fact, let
\[
p_i\colon\bbT\times\bbT\longrightarrow\bbT,
\qquad i=1,2,
\]
denote the two projections. In our normalization, the curvature of the Poincaré line bundle $\mathcal{P}$ with standard connection over $\bbT\times\bbT$
  is
\[
F_{\nabla^{\mathcal P}}
=
p_1^*\vartheta\wedge p_2^*\vartheta.
\]
For \(\mathcal L=(L,\nabla^L)\), set
\[
h_{\mathcal L}
:=
\bigl(\hol^v(\nabla^L)\bigr)^{-1}.
\]
By \eqref{eq:iota-eta}, we have $h_{\mathcal L}^{-1}dh_{\mathcal L} = \iota_v\eta$.  Define the line bundle with connection
\[
(\xi_{g,\mathcal L},\nabla^{\xi_{g,\mathcal L}})
:=
(g,h_{\mathcal L})^*
(\mathcal P,\nabla^{\mathcal P})
\longrightarrow\Sigma.
\]
Its curvature is
\be\label{eq:xi-curvature}
F_{\nabla^{\xi_{g,\mathcal L}}}
=
u\wedge\iota_v\eta.
\ee
For \(\lambda\in\ZZ\), define
\be\label{action}
\rho_\lambda(g)(\mathcal L)
:=
\mathcal L\otimes
\pi^*
(\xi_{g,\mathcal L},
 \nabla^{\xi_{g,\mathcal L}})^{\otimes\lambda},
\ee
where negative tensor powers are understood as powers of the dual line
bundle. At the level of curvature forms,
\be\label{eq:rho-curvature}
F_{\rho_\lambda(g)(\mathcal L)}
=
\eta+\lambda\,u\wedge\iota_v\eta.
\ee

The Poincaré line bundle is multiplicative in each variable and gives canonical connection-preserving
isomorphisms
\[
\xi_{g,\mathcal L_1\otimes\mathcal L_2}
\cong
\xi_{g,\mathcal L_1}
\otimes
\xi_{g,\mathcal L_2}
\]
and
\[
\xi_{g_2g_1,\mathcal L}
\cong
\xi_{g_2,\mathcal L}
\otimes
\xi_{g_1,\mathcal L}.
\]
Moreover, tensoring \(\mathcal L\) with the pullback of a line bundle
from \(\Sigma\) does not alter its vertical holonomy. Hence
\[
\xi_{g_2,\rho_\lambda(g_1)(\mathcal L)}
\cong
\xi_{g_2,\mathcal L}.
\]
These identities show that
\[
\rho_\lambda(g_2g_1)
=
\rho_\lambda(g_2)\circ\rho_\lambda(g_1),
\]
and therefore define a homomorphism
\[
\rho_\lambda\colon
\GG_1\longrightarrow\operatorname{Aut}(\GG_2).
\]

At the level of curvature forms, the same fact is reflected in the
elementary identity
\be\label{eq:rho-composition}
\begin{aligned}
&\eta+\lambda u_1\wedge\iota_v\eta
+\lambda u_2\wedge
 \iota_v\bigl(
   \eta+\lambda u_1\wedge\iota_v\eta
 \bigr)\\
&\hspace{4em}
=
\eta+\lambda(u_1+u_2)\wedge\iota_v\eta,
\end{aligned}
\ee
where we have used $\iota_v(u_1\wedge\iota_v\eta)=0.$

We may therefore form the semi-direct product
\[
\GG_\lambda
:=
\GG_1\ltimes_{\rho_\lambda}\GG_2.
\]
Writing \(\mathcal L_i=(L_i,\nabla^{L_i})\), its multiplication is
\be\label{eq:semidirect-law}
(g_2,\mathcal L_2)(g_1,\mathcal L_1)
=
\bigl(
g_2g_1,\,
\mathcal L_2\otimes\rho_\lambda(g_2)(\mathcal L_1)
\bigr).
\ee
Equivalently,
\[
\begin{aligned}
&(g_2,(L_2,\nabla^{L_2}))
(g_1,(L_1,\nabla^{L_1}))\\
&\qquad
=
\Bigl(
g_2g_1,\,
(L_2,\nabla^{L_2})
\otimes
(L_1,\nabla^{L_1})
\otimes
\pi^*
(\xi_{g_2,\mathcal L_1},
 \nabla^{\xi_{g_2,\mathcal L_1}})^{\otimes\lambda}
\Bigr).
\end{aligned}
\]

\begin{lemma}\label{lem:semidirect}
For every \(\lambda\in\ZZ\), the transformations
\eqref{gautran} define an action of
\[
\GG_\lambda
=
\GG_1\ltimes_{\rho_\lambda}\GG_2
\]
on \(\Ca\).
\end{lemma}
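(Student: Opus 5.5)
Lemma~\ref{lem:cocycle-repair} already shows that each transformation \eqref{gautran} maps $\Ca$ to itself. So the content of Lemma~\ref{lem:semidirect} is that the identity acts trivially and that the composition law matches \eqref{eq:semidirect-law}:
\[
(g_2,\mathcal L_2)\cdot\bigl((g_1,\mathcal L_1)\cdot(A,B)\bigr)
=
\bigl((g_2,\mathcal L_2)(g_1,\mathcal L_1)\bigr)\cdot(A,B).
\]
The identity $(1,\text{trivial bundle})$ has $u=0$ and $\eta=0$, so it acts trivially at once. For composition I would compare the three components $A$, $B_\alpha$ and $A_{\alpha\beta}$ one at a time, using only the nilpotency identities \eqref{eq:iota-twist}, \eqref{eq:iota-twist-A} and the curvature formula \eqref{eq:rho-curvature}. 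Throughout, $u_i=g_i^{-1}dg_i$ and $\eta_i=F_{\nabla^{L_i}}$.

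\emph{The connection.} Acting twice gives $A+u_1+u_2$. Since $(g_2g_1)^{-1}d(g_2g_1)=u_1+u_2$, the product acts the same way.

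\emph{The gluing 1-forms.} Write $\widetilde A_{\alpha\beta}=A_{\alpha\beta}+\lambda(\iota_vA_{\alpha\beta})u_1$. By \eqref{eq:iota-twist-A}, $\iota_v\widetilde A_{\alpha\beta}=\iota_vA_{\alpha\beta}$. The second transformation therefore adds $\lambda(\iota_vA_{\alpha\beta})u_2$, and the total is $A_{\alpha\beta}+\lambda(\iota_vA_{\alpha\beta})(u_1+u_2)$. This is exactly the result of acting by $g_2g_1$. The $\GG_2$ component plays no role here, so the gluing data agree exactly.

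\emph{The curvings.} Acting first by $(g_1,\mathcal L_1)$ gives $\widetilde B_\alpha=B_\alpha+\eta_1+\lambda u_1\wedge\iota_vB_\alpha$. Then, by \eqref{eq:iota-twist} and the fact that $u_1$ is basic,
\[
\iota_v\widetilde B_\alpha=\iota_vB_\alpha+\iota_v\eta_1 .
\]
Applying $(g_2,\mathcal L_2)$ next yields
\[
B_\alpha+\eta_1+\eta_2+\lambda(u_1+u_2)\wedge\iota_vB_\alpha+\lambda\,u_2\wedge\iota_v\eta_1 .
\]
The product element has $\GG_1$-part $g_2g_1$ and $\GG_2$-part $\mathcal L_2\otimes\rho_\lambda(g_2)(\mathcal L_1)$. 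By \eqref{eq:rho-curvature}, the curvature of the latter is $\eta_2+\eta_1+\lambda u_2\wedge\iota_v\eta_1$. Substituting into \eqref{gautran} reproduces the displayed expression term by term. The cross term $\lambda u_2\wedge\iota_v\eta_1$ is precisely the curvature \eqref{eq:xi-curvature} of $\xi_{g_2,\mathcal L_1}^{\otimes\lambda}$, which is why the twist $\rho_\lambda$ is needed.

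The main obstacle is bookkeeping rather than the form identities. The action \eqref{gautran} depends on $\mathcal L$ only through its curvature $\eta$. So a second bundle with the same curvature (one differing by a flat line bundle) acts identically, and the "action" is really an action of the group through its image in curvature data. I expect this to be harmless: the statement only requires the map $\GG_\lambda\times\Ca\to\Ca$ to be compatible with the group law, and compatibility has been checked at the level of forms. Two further points need care. First, the ordering convention in \eqref{eq:semidirect-law}, where $\rho_\lambda(g_2)$ is applied to $\mathcal L_1$, must match the order of composition; the curvature computation above confirms that it does. Second, the integrality of $\lambda$ is used only so that $\xi^{\otimes\lambda}$ is a genuine line bundle, so that $\rho_\lambda$ lands in $\GG_2$. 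This was already settled in the construction of $\rho_\lambda$ and its homomorphism property \eqref{eq:rho-composition}, so no further work is needed.
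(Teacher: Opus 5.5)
Your proposal is correct and follows essentially the same route as the paper: a component-by-component check of $A$, $B_\alpha$ and $A_{\alpha\beta}$, using \eqref{eq:iota-twist}, \eqref{eq:iota-twist-A} and the curvature $\eta_2+\eta_1+\lambda\,u_2\wedge\iota_v\eta_1$ of $\mathcal L_2\otimes\rho_\lambda(g_2)(\mathcal L_1)$ to absorb the cross term. The only differences are minor: you verify that the identity acts trivially, whereas the paper instead records the inverse element.
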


\begin{proof}
Let
\[
(g_i,\mathcal L_i)\in\GG_\lambda,
\qquad
u_i=g_i^{-1}dg_i,
\qquad
\eta_i=F_{\nabla^{L_i}},
\quad i=1,2.
\]
The line-bundle component of the product
\((g_2,\mathcal L_2)(g_1,\mathcal L_1)\) has curvature
\be\label{eq:composite-eta}
\eta_{21}
=
\eta_2+\eta_1
+\lambda\,u_2\wedge\iota_v\eta_1.
\ee
The \(\GG_1\)-component has Maurer--Cartan form
\[
(g_2g_1)^{-1}d(g_2g_1)=u_1+u_2.
\]
Apply first \((g_1,\mathcal L_1)\), and then
\((g_2,\mathcal L_2)\). The ordinary connection transforms as
\[
A\longmapsto A+u_1+u_2.
\]
For the local \(2\)-forms, the first transformation gives
\[
B_\alpha'
=
B_\alpha+\eta_1
+\lambda u_1\wedge\iota_vB_\alpha.
\]
By \eqref{eq:iota-twist},
\[
\iota_vB_\alpha'
=
\iota_vB_\alpha+\iota_v\eta_1.
\]
The second transformation therefore gives
\[
\begin{aligned}
B_\alpha''
&=
B_\alpha'
+\eta_2
+\lambda u_2\wedge\iota_vB_\alpha'\\
&=
B_\alpha+\eta_1+\eta_2
+\lambda(u_1+u_2)\wedge\iota_vB_\alpha
+\lambda u_2\wedge\iota_v\eta_1\\
&=
B_\alpha+\eta_{21}
+\lambda(u_1+u_2)\wedge\iota_vB_\alpha.
\end{aligned}
\]
This is exactly the transformation associated with the product
\((g_2,\mathcal L_2)(g_1,\mathcal L_1)\).

For the overlap \(1\)-forms, \eqref{eq:iota-twist-A} gives
\[
\iota_vA_{\alpha\beta}'
=
\iota_vA_{\alpha\beta}.
\]
Hence
\[
\begin{aligned}
A_{\alpha\beta}''
&=
A_{\alpha\beta}
+\lambda(\iota_vA_{\alpha\beta})u_1
+\lambda(\iota_vA_{\alpha\beta})u_2\\
&=
A_{\alpha\beta}
+\lambda(\iota_vA_{\alpha\beta})(u_1+u_2),
\end{aligned}
\]
which again agrees with the transformation defined by the product
element. Thus \eqref{gautran} respects the multiplication
\eqref{eq:semidirect-law}, and hence defines a group action.

For completeness, the inverse is
\[
(g,\mathcal L)^{-1}
=
\bigl(
g^{-1},
\rho_\lambda(g^{-1})(\mathcal L^\vee)
\bigr).
\]
If \(u=g^{-1}dg\) and \(F_{\nabla^L}=\eta\), then the curvature of its
line-bundle component is
\[
-\eta+\lambda\,u\wedge\iota_v\eta.
\]
\end{proof}

We call $\GG_\lambda$ {\em the gauge group of the mixed fields}. The integrality of $\lambda$ is exactly what is
needed for the group to close and is the source of the quantisation of the level. The action $\rho_\lambda$ of $\GG_1$ on $\GG_2$ is 
infinite-dimensional origin of the Heisenberg extension found in Section~\ref{moduli}.

\medskip

The following proposition shows that, once one asks for a semi-direct product  action  built naturally from the fields, \eqref{gautran} is  the only
possibility.

\begin{proposition}\label{prop:c-uniqueness}
Let $c : \GG_1 \times \Ca \to \GG_2$ be a map and suppose $c(g,A, B)$ depends
on $g \in \GG_1$   $\RR$-linearly through $u = g^{-1}dg$, and is
constructed from $u$, $A$ and $B$ using only the operations $\wedge$, $d$ and
$\iota_v$. Suppose further that
\[
  (A,B) \;\longmapsto\; \bigl(A+u,\; B+\eta+c(g,A,B)\bigr)
\]
defines an action of a semi-direct product $\GG_1 \ltimes \GG_2$ on $\Ca$. Then
\[
  c(g,A,B) = \lambda\, u \wedge \iota_v B,\qquad \lambda\in\ZZ .
\]
\end{proposition}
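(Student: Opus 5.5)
The plan is to first enumerate all possible forms of $c$, and then let the action and group-law constraints cut the list down to one term. Since $c(g,A,B)$ is $\RR$-linear in $u$ and must be a $\bbT$-invariant $2$-form (a correction to $B_\alpha$), $u$ appears exactly once in each term. We list the expressions built from $u$, $A$, $B$ with $\wedge$, $d$, $\iota_v$. Because $u$ is closed and basic, $du=0$ and $\iota_v u=0$.

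Degree counting restricts the candidates to the following list:
\[
u\wedge A,\quad u\wedge\iota_vB,\quad u\wedge \iota_v dA,\quad (\iota_v A)\,du=0,\quad \iota_v(u\wedge B)=-u\wedge\iota_vB,\quad d(\dots),
\]
together with scalar multiples by functions such as $\iota_vA=1$ or $\iota_v\iota_vB=0$. Terms of the shape $d(\cdot)$ or $\iota_v(\cdot)$ of something containing $u$ reduce to the listed ones. So we reduce to
\[
c=\lambda\,u\wedge\iota_vB+\mu\,u\wedge A+\nu\,u\wedge\iota_vF_A,
\]
and $\iota_vF_A=0$ since $F_A$ is basic, which kills the $\nu$ term.

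Next we rule out $\mu$. The term $u\wedge A$ is not basic, and it also breaks the requirement that $c$ land in $\GG_2$ compatibly with the action. Concretely, impose the action property under composition exactly as in the proof of Lemma~\ref{lem:semidirect}. Applying $g_1$ then $g_2$ yields the cross term $\mu\,u_2\wedge u_1$, coming from $A\mapsto A+u_1$. This term is antisymmetric in $(g_1,g_2)$, whereas $\GG_1$ is abelian and the product's $\GG_1$-component $g_2g_1$ sees only $u_1+u_2$. Any cross term must therefore be absorbable into the curvature of the $\GG_2$-component of the semidirect product, which is of the form $\rho(g_2)(\mathcal L_1)$ and depends only on $\mathcal L_1$. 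Taking $\mathcal L_1$ trivial then forces $\mu\,u_2\wedge u_1=0$ for all $g_1,g_2$, so $\mu=0$ when $g\ge1$. We also need $d c=0$ so that $H$ is preserved, as in \eqref{eq:d-twist}. Here $d(u\wedge\iota_vB)=u\wedge\iota_vH=0$ holds for dimension reasons, whereas $d(u\wedge A)=-u\wedge F_A\neq0$ in general. Checking $dc=0$ gives a second, independent reason that $\mu=0$.

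Finally, we show $\lambda\in\ZZ$; I expect this to be the main obstacle. The composition computation of Lemma~\ref{lem:semidirect} shows that the cross term is $\lambda\,u_2\wedge\iota_v\eta_1$. It must be the curvature of a genuine line bundle with connection, $\pi^*\xi^{\otimes\lambda}$, with $\xi=(g_2,h_{\mathcal L_1})^*\mathcal P$ as in \eqref{eq:xi-curvature}. Moreover, it must define an automorphism $\rho$ of $\GG_2$. I would pick $g_2$ and $\mathcal L_1$ such that $u_2$ and $\iota_v\eta_1$ are the Poincaré duals of a symplectic pair of cycles, for instance pulled back from $\Sigma\to T^2$. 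Then $\int_\Sigma u_2\wedge\iota_v\eta_1=1$, so integrality of the periods of the curvature of a line bundle on $Z$ forces $\lambda\in\ZZ$. The subtle point is arguing that the correction term $\rho(g_2)(\mathcal L_1)\otimes\mathcal L_1^{-1}$ must be a line bundle whose curvature is exactly this form, rather than merely a closed form. I would obtain this from the requirement that $\GG_1$ act on $\GG_2$ by automorphisms, which requires that $\rho$ land in $\pic_\bbT(Z)$.
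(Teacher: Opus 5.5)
Your proposal is correct and follows the paper's proof step by step. You make the same reduction to $c=\lambda\,u\wedge\iota_vB+\mu\,u\wedge A$, the same elimination of $\mu$ through the cross term $\mu\,u_2\wedge u_1$ (which must vanish because $(g_2,1)(g_1,1)=(g_2g_1,1)$ in a semi-direct product), and the same integrality argument via the periods of $\lambda\,u_2\wedge\iota_v\eta_1$. Your explicit choice of $g_2,\mathcal L_1$ with $\int_\Sigma u_2\wedge\iota_v\eta_1=1$ usefully sharpens the paper's terse last line.

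Drop the claimed ``second, independent reason'' for $\mu=0$. The form $u\wedge F_A$ is a basic $3$-form on a surface, so $d(u\wedge A)=-u\wedge F_A=0$ identically, by the same dimension argument you use for $u\wedge\iota_vH$. The caveat ``when $g\ge1$'' is also unnecessary: $u_2\wedge u_1$ need not vanish pointwise even when $u_1,u_2$ are exact, so the composition argument gives $\mu=0$ in every genus.
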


\begin{proof}
Linearity in $u$ and the fact that $c$ has degree $2$ force $c(g,A,B)=u\wedge X$,
with $X$ a $1$-form built from $u$, $A$ and $B$ using $\wedge$, $d$ and $\iota_v$.
Since $u$ is
basic, $\iota_v u=0$ and $u\wedge u=0$, so any $u$-dependent term of $X$
does not contribute to $u\wedge X$, and we may take $X$ built from $A$ and $B$
alone. Applying the operations to $A$ and $B$ produces $A$, $B$, $H=dB$,
$F_A=dA$, $\iota_vB$, $\iota_vH$, $d\iota_vB$, together with $\iota_vA=1$ and the
vanishing combinations $\iota_v\iota_vB=0$ and
$\iota_vF_A=\iota_v dA=-\,d(\iota_vA)=0$. Contractions of mixed wedge products
reduce to these, for instance $\iota_v(A\wedge\iota_vB)=(\iota_vA)\,\iota_vB
-A\wedge\iota_v\iota_vB=\iota_vB$. Of the resulting nonzero forms, only
$\iota_vB$ and $A$ have degree $1$. Hence $X=\lambda\,\iota_vB+\kappa\,A$ for
scalars $\lambda,\kappa$, that is
\[
  c(g,A,B)=\lambda\,u\wedge\iota_vB+\kappa\,u\wedge A .
\]

Composing two transformations, $(g_1,\eta_1)$ and then
$(g_2,\eta_2)$, the second uses the already-transformed connection $A+u_1$, so
the $B$-component acquires
\[
  \bigl(\eta_1+\eta_2\bigr)
  +\lambda\bigl(u_1+u_2\bigr)\wedge\iota_vB
  +\lambda\,u_2\wedge\iota_v\eta_1
  +\kappa\bigl(u_1+u_2\bigr)\wedge A
  +\kappa\,u_2\wedge u_1 ,
\]
where we used $\iota_v(u_1\wedge\iota_vB)=0$ from \eqref{eq:iota-twist}. The
terms proportional to $(u_1+u_2)$ are precisely the correction for the composite
parameter, and $\lambda\,u_2\wedge\iota_v\eta_1$ is the twist of
Lemma~\ref{lem:semidirect}. The remaining term $\kappa\,u_2\wedge u_1$ is a
closed $2$-form on $\Sigma$ that pairs two elements of $\GG_1$, so it is the
$2$-cocycle of a central extension of $\GG_1$ by $\GG_2$, rather than a term in
an action of $\GG_1$ on $\GG_2$. To obtain a semi-direct product
$\GG_1\ltimes\GG_2$, we must therefore have $\kappa=0$.

The  term
$\lambda\,u_2\wedge\iota_v\eta_1$ is a closed basic $2$-form whose period
$\lambda\int_\Sigma u_2\wedge\iota_v\eta_1$ must be integral for the composite to
lie in $\GG_2$. As $u_2$ and $\iota_v\eta_1$ have integral periods, this forces
$\lambda\in\ZZ$.
\end{proof}

Clearly,
\[
\mathfrak g_1 := \mathrm{Lie}(\GG_1) = C^\infty(\Sigma,\RR),
\]
and for $f\in C^\infty(\Sigma,\RR)$ the associated Killing vector field on $\Ca$ is
\be\label{eq:killing-1}
\rho(f)=\bigl(df,\ \lambda\,df\wedge\iota_vB\bigr)\in \Omega^1(\Sigma)\times\Omega^2(Z)^\bbT ,
\ee
together with the deformation $\lambda\,(\iota_vA_{\alpha\beta})\,df$ of the gluing data. Moreover (cf.~Sec.~5 in \cite{Kleiman}),
\[
\mathfrak g_2 := \mathrm{Lie}(\GG_2) = \Omega^1(Z)^{\bbT},
\]
the space of $\bbT$-invariant $1$-forms on $Z$. For $\omega\in  \Omega^1(Z)^{\bbT}$, the induced Killing vector field on
$\Omega^2(Z)^{\bbT}$ is $d\omega$.

 \subsection{Twisted Yang-Mills functional}

Motivated by Segal's  paper \cite{Segal}, we introduce the {\em twisted Yang-Mills functional}
\[ \mathrm{YM}(A, B):=\int_\Sigma |F_A|^2\,d\operatorname{vol}_h+ \int_Z H \wedge *_{h_A} H,\]
where the twisted term $\int_Z H \wedge *_{h_A} H$ is added into the usual Yang-Mills functional and measures 
the $L^2$-energy of the $H$-flux with respect to the metric $h_A$. Thus the functional couples the gauge field $A$
and the $B$-field through the geometry of the total space $Z$.

\begin{proposition}\label{YMprop}
\begin{enumerate}[(i)]
\item The twisted Yang-Mills functional $\mathrm{YM}(A,B)$ is gauge invariant.

\item One has the identity
\begin{equation}\label{22YM}
\begin{aligned}
&\mathrm{YM}(A,B)\\
=&
\int_\Sigma
\left|
F_A-\frac{2\pi i\,\langle c_1(Z),[\Sigma]\rangle}{\operatorname{Area}_h(\Sigma)}
\,d\operatorname{vol}_h
\right|^2
\,d\operatorname{vol}_h \\
\quad&
+
\frac{4\pi^2}{\operatorname{Area}_h(\Sigma)}
\langle c_1(Z),[\Sigma]\rangle^2
+
\int_Z f^2\,d\operatorname{vol}_{h_A},
\end{aligned}
\end{equation}
where $H=f\,d\operatorname{vol}_{h_A}$.

\item The Euler--Lagrange equations of $\mathrm{YM}(A,B)$ are
\[
\begin{cases}
d_h^* F_A = 0,\\
d_{h_A}^* H = 0.
\end{cases}
\]
\end{enumerate}
\end{proposition}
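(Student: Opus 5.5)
For (i), I will use Lemma~\ref{lem:cocycle-repair}. It shows that the level-$\lambda$ transformation \eqref{gautran} leaves both $F_A$ and $H$ unchanged. The first term $\int_\Sigma |F_A|^2\,d\operatorname{vol}_h$ is therefore invariant at once. For the second term, the metric $h_A=\pi^*h+A\odot A$ changes to $h_{A+u}$, because $A$ is shifted by the basic closed form $u$. The plan is to observe that $\Phi:=\int_Z H\wedge *_{h_A}H$ does not depend on $A$ at all. Since $\dim Z=3$, $H$ is a top-degree form, and $*_{h_A}H=f$ where $H=f\,d\operatorname{vol}_{h_A}$. Moreover $d\operatorname{vol}_{h_A}=\pi^*d\operatorname{vol}_h\wedge A$, up to orientation convention, and replacing $A$ by $A+u$ leaves this unchanged because $u$ is basic and $\pi^*d\operatorname{vol}_h\wedge u=0$ on the 2-dimensional base. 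So $f$ and $d\operatorname{vol}_{h_A}$ are both unchanged, and the second term is invariant.

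For (ii), I will expand the square in the first term. Write $F_A=2\pi i\,\phi\,d\operatorname{vol}_h$, using the suppressed $2\pi i$ normalisation, so that $\int_\Sigma F_A=2\pi i\,\langle c_1(Z),[\Sigma]\rangle$ by Chern--Weil. Put $c=2\pi i\langle c_1(Z),[\Sigma]\rangle/\operatorname{Area}_h(\Sigma)$. Then
\[
\int_\Sigma|F_A-c\,d\operatorname{vol}_h|^2\,d\operatorname{vol}_h=\int_\Sigma|F_A|^2\,d\operatorname{vol}_h-2\operatorname{Re}\Bigl(\bar c\int_\Sigma F_A\Bigr)+|c|^2\operatorname{Area}_h(\Sigma).
\]
The cross term equals $-2|c|^2\operatorname{Area}_h(\Sigma)$, so the right-hand side reduces to $\int|F_A|^2-4\pi^2\langle c_1,[\Sigma]\rangle^2/\operatorname{Area}_h(\Sigma)$. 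Adding back the constant gives the first two terms of \eqref{22YM}. The last term is $\int_Z H\wedge *H=\int_Z f^2\,d\operatorname{vol}_{h_A}$, since $*_{h_A}d\operatorname{vol}_{h_A}=1$. The point I expect to need the most care is the sign and normalisation convention, namely whether $F_A$ is imaginary-valued and how $|\cdot|^2$ treats the factor $i$. I will fix these so that the cross term is real and the constant comes out as $4\pi^2\langle c_1,[\Sigma]\rangle^2/\operatorname{Area}$.

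For (iii), I will vary the functional along $T_{(A,B)}\Ca\cong\Omega^1(\Sigma)\times\Omega^2(Z)^{\bbT}$. Varying $B$ by a global invariant 2-form $\beta$ gives $\delta H=d\beta$, so $\delta\,\mathrm{YM}=2\langle d\beta,H\rangle_{L^2(h_A)}=2\langle\beta,d^*_{h_A}H\rangle$ on the closed manifold $Z$. This vanishes for all invariant $\beta$ exactly when $d^*_{h_A}H=0$, where I use that $H$ is $\bbT$-invariant and averaging over $\bbT$ lets us test against invariant forms only. Varying $A$ by $a\in\Omega^1(\Sigma)$ gives $2\langle da,F_A\rangle=2\langle a,d_h^*F_A\rangle$ from the first term. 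By the computation in (i), the second term is independent of $A$ on each fibre, since $\pi^*d\operatorname{vol}_h\wedge a=0$, so it contributes nothing. This yields $d_h^*F_A=0$. The main obstacle here is verifying carefully that the $A$-dependence of $h_A$ really drops out of the $H$-term. This is the same 3-dimensional top-degree argument as in (i), and I will check it in a local orthonormal coframe $(e^1,e^2,A)$.
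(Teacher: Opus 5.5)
Your proposal is correct and follows essentially the same route as the paper. For (i) you use the strict invariance of $F_A$, $H$ and $d\operatorname{vol}_{h_A}=A\wedge\pi^*d\operatorname{vol}_h$ under $A\mapsto A+u$; for (ii) you complete the square using Chern--Weil; for (iii) you take first variations in $A$ (where the $H$-term does not depend on $A$) and in $B$. Your remark that $h_{A+u}\neq h_A$ in general, and that only the volume form matters for the top-degree term, is slightly more accurate than the paper's claim that the metric itself is unchanged.
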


\begin{proof}
\textit{(i)} The connection transforms by $A\mapsto A+u$ with $u$ closed and basic, so
$F_{g\cdot A}=F_A$ and
\[
  \dvol_{h_{g\cdot A}}=(A+u)\wedge\pi^*(\dvol_h)=A\wedge\pi^*(\dvol_h)=\dvol_{h_A},
\]
since $u\wedge\pi^*(\dvol_h)$ is a basic $3$-form on $\Sigma$. In particular the
metric $h_A$, and hence the Hodge star $\ast_{h_A}$, are unchanged. By
Lemma~\ref{lem:cocycle-repair}, the transformed data again patches to a $B$-field
on $Z$ with $\widetilde H=H$, thus both $F_A$ and $H$ are strictly invariant  for every level $\lambda$, and  therefore so is $\mathrm{YM}(A,B)$.
Writing $H=f\,\dvol_{h_A}$, one has
\[
\int_Z H\wedge *_{h_A}H
=
\int_Z f^2\,d\operatorname{vol}_{h_A}.
\]

\medskip

\textit{(ii)} 
Since $\Sigma$ is two-dimensional, any $i\mathbb R$-valued $2$-form is proportional to the volume form. Thus
\[
F_A=f_A\,d\operatorname{vol}_h,
\qquad f_A\in C^\infty(\Sigma;i\mathbb R).
\]
By Chern--Weil theory,
\[
\int_\Sigma F_A
=
2\pi i\,\langle c_1(Z),[\Sigma]\rangle,
\]
and hence
\[
\int_\Sigma f_A\,d\operatorname{vol}_h
=
2\pi i\,\langle c_1(Z),[\Sigma]\rangle.
\]

Let
\[
\bar f
:=
\frac{1}{\operatorname{Area}_h(\Sigma)}
\int_\Sigma f_A\,d\operatorname{vol}_h
=
\frac{2\pi i\,\langle c_1(Z),[\Sigma]\rangle}{\operatorname{Area}_h(\Sigma)}.
\]
Decomposing $f_A=(f_A-\bar f)+\bar f$ and integrating, the cross term vanishes, so
\[
\int_\Sigma |F_A|^2\,d\operatorname{vol}_h
=
\int_\Sigma |f_A-\bar f|^2\,d\operatorname{vol}_h
+
|\bar f|^2\,\operatorname{Area}_h(\Sigma).
\]
Substituting back yields exactly \eqref{22YM}.

\medskip

\textit{(iii)} 
We compute the first variation. Note that
\[
d\operatorname{vol}_{h_A}=A\wedge \pi^*(d\operatorname{vol}_h).
\]
For a variation $A_t=A+t\,\pi^*a$, one checks that
\[
d\operatorname{vol}_{h_{A_t}}=d\operatorname{vol}_{h_A},
\]
since $\pi^*a\wedge \pi^*(d\operatorname{vol}_h)=0$. In particular,  the term
\[
\int_Z H\wedge *_{h_A}H=\int_Z f^2\,d\operatorname{vol}_{h_A}
\]
is independent of $A$.

Now let $A_t=A+t\,\pi^*a$ with $a\in\Omega^1(\Sigma)$, so that $F_{A_t}=F_A+t\,da$, and hence
\[
\frac{d}{dt}\Big|_{t=0}
\int_\Sigma |F_{A_t}|^2\,d\operatorname{vol}_h
=
2\int_\Sigma \langle F_A,da\rangle\,d\operatorname{vol}_h
=
2\int_\Sigma \langle d_h^*F_A,a\rangle\,d\operatorname{vol}_h.
\]
Thus $d_h^*F_A=0$.

Similarly, for $B_t=B+t\,\beta$, one has $H_t=H+t\,d\beta$, and therefore
\[
\frac{d}{dt}\Big|_{t=0}
\int_Z H_t\wedge *_{h_A}H_t
=
2\int_Z \langle H,d\beta\rangle\,d\operatorname{vol}_{h_A}
=
2\int_Z \langle d_{h_A}^*H,\beta\rangle\,d\operatorname{vol}_{h_A}.
\]
Hence $d_{h_A}^*H=0$.

Combining the two equations gives the result.
\end{proof}

\subsection{Hamiltonian geometry on the configuration space}\label{sec:hamgeo}

On $\Ca$ define a $2$-form
\be
\label{eq:Omega-def}
\Omega :=\int_{\Sigma} \bigl[\;\delta A\wedge \delta A\;+\;\iota_v\delta B\wedge \iota_v\delta B\;\bigr],
\ee
with $v$ the vector field generating the $\bbT$-action along the fibres of $Z\to\Sigma$.
This $2$-form is translation invariant on each model tangent space.
It is also clear that $\Omega$ degenerates along the $b_1$-direction of
\eqref{eq:local-decomp}, since $\iota_vB_\alpha=b_{0,\alpha}$, so that
$\Omega$ sees only $\delta A$ and $\delta b_0$, and not $\delta b_1$.

The form $\Omega$ is  insensitive to the level $\lambda \in \ZZ$. Indeed, by \eqref{eq:killing-1} the twist
contributes $\lambda\,df\wedge\iota_vB$ to the Killing vector field of $f\in\mathfrak g_1$, and by
\eqref{eq:iota-twist} this contribution satisfies $\iota_v(\lambda\,df\wedge\iota_vB)=0$. It
therefore drops out of the second term of \eqref{eq:Omega-def} identically, while leaving the
first term untouched. {\em Consequently $\Omega$, its moment map, and the identity
$\mathrm{YM}=|\mu|^2$ below are the same for every $\lambda$, so the level is invisible to the
presymplectic geometry of $\Ca$}.

\medskip

\begin{proposition}\label{prop:moment}The action of $\GG_\lambda$ on $\Ca$ given by \eqref{gautran} is
Hamiltonian for the $2$-form \eqref{eq:Omega-def}, with moment map
\be\label{moment}
\mu(A,B)=\int_\Sigma F_A\,\cdot\;+\;\int_\Sigma \iota_vH\wedge(\iota_v\cdot)
\;\in\;\mathfrak g_1^*\oplus\mathfrak g_2^* ,
\ee
where $F_A=dA$ and $H=dB$. 
\end{proposition}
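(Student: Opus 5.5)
We have to verify the moment map condition: for each $\xi=(f,\omega)\in\mathfrak g_1\oplus\mathfrak g_2$, the differential of $\langle\mu,\xi\rangle$ equals $\iota_{\rho(\xi)}\Omega$, up to the sign convention. We also need $\mu$ to be equivariant. Since $\Omega$ is constant on the affine model $\Omega^1(\Sigma)\times\Omega^2(Z)^{\bbT}$, it is closed, and the computation is linear. The plan is to split $\xi$ into its $\mathfrak g_1$ and $\mathfrak g_2$ components and treat each separately. The pairing of $\mathfrak g_2^*$ with $\omega\in\Omega^1(Z)^{\bbT}$ is read as $\int_\Sigma\iota_vH\wedge\iota_v\omega$. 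Here $\iota_v\omega$ is a basic function and $\iota_vH$ is a basic $2$-form, so the integrand is a $2$-form on $\Sigma$.

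For $f\in\mathfrak g_1$, the Killing field is given by \eqref{eq:killing-1}. Its $B$-component $\lambda\,df\wedge\iota_vB$ is killed by $\iota_v$, by \eqref{eq:iota-twist}, so only $\delta A=df$ enters. This gives
\[
\iota_{\rho(f)}\Omega=2\int_\Sigma df\wedge\delta A=-2\int_\Sigma f\,d\delta A,
\]
after integrating by parts on the closed surface. On the other side, $\delta\langle\mu,f\rangle=\int_\Sigma \delta F_A\, f=\int_\Sigma f\,d\delta A$. These agree after fixing the normalisation constant and the orientation and sign convention. A factor of $-2$, or a choice such as $\iota_X\Omega=-d\langle\mu,\xi\rangle$, has to be absorbed into the definition. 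I will fix this once and use the same convention for both factors.

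For $\omega\in\mathfrak g_2$, the Killing field is $(0,d\omega)$, and $\iota_v d\omega=-d(\iota_v\omega)$ because $\omega$ is $\bbT$-invariant. Hence
\[
\iota_{\rho(\omega)}\Omega=-2\int_\Sigma d(\iota_v\omega)\wedge\iota_v\delta B .
\]
Integrating by parts turns this into $2\int_\Sigma\iota_v\omega\, d(\iota_v\delta B)$. Using $d\iota_v\delta B=-\iota_v d\delta B=-\iota_v\delta H$, it becomes $\mp2\int_\Sigma\iota_v\delta H\,\iota_v\omega$. This is exactly the variation of $\int_\Sigma\iota_vH\wedge\iota_v\omega$, with the same constant as in the first case, which confirms that the conventions are consistent.

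For equivariance: by Lemma~\ref{lem:cocycle-repair}, both $F_A$ and $H$ are strictly invariant under $\GG_\lambda$. So $\mu$ is invariant, and for an abelian-type coadjoint action this is equivariance. The semidirect twist $\rho_\lambda$ acts on $\mathfrak g_2$ by $\omega\mapsto\omega+\lambda u\,\iota_v\omega$. This leaves $\iota_v\omega$ unchanged, so the coadjoint action is trivial on the image of $\mu$.

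The main obstacle is bookkeeping for the gerbe data. The $\mathfrak g_1$ Killing field also deforms $A_{\alpha\beta}$, so it is not strictly a tangent vector in $\Omega^2(Z)^{\bbT}$. I plan to handle this by noting that $\Omega$ only sees $\iota_v\delta B$. The relevant $\iota_v$-part of the variation is globally defined, since $\iota_v$ of the twist vanishes and the gluing term $\lambda(\iota_vA_{\alpha\beta})df$ contributes nothing after contraction. The pairing is therefore well defined and the computation above goes through.
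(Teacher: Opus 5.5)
Your proposal is correct and follows essentially the same route as the paper: a direct variational check on each factor separately, integrating by parts on $\Sigma$ and using $\iota_v(\lambda\,df\wedge\iota_vB)=0$ to discard the level-dependent part of the $\mathfrak g_1$ Killing field. You are more careful than the paper about constants, since both components give the same factor ($\iota_{\rho(\xi)}\Omega=-2\,d\langle\mu,\xi\rangle$ with your reading of $\delta A\wedge\delta A$), so one convention serves both. You also supply the equivariance and gluing-data bookkeeping that the paper leaves implicit; for equivariance, note additionally that $\mathrm{Ad}_{(1,\mathcal L)}$ sends $(f,0)$ to $f$ plus a basic element of $\mathfrak g_2$, which $\iota_v$ likewise kills.
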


\begin{proof}
For any $f\in C^\infty(\Sigma,\RR)$, one has
\[
\int_{\Sigma} \delta F_A \cdot f
\;=\; \int_{\Sigma} d(\delta A)\, f
\;=\; \int_{\Sigma} df \wedge \delta A .
\]
For any $\omega\in  \Omega^1(Z)^{\bbT}$, one has
\[
\int_{\Sigma} \delta(\iota_v H)\wedge \iota_v\omega
\;=\; \int_{\Sigma} d(\iota_v \delta B)\wedge \iota_v\omega
\;=\; \int_{\Sigma} \iota_v(d\omega)\wedge \iota_v \delta B .
\]

These identities are precisely the variational statements that identify the contraction of a vector
field with $d\langle \mu,(\,f,\omega\,)\rangle$, yielding \eqref{moment}.
\end{proof}

\begin{proposition}\label{YMmoment}
With respect to the induced $L^2$-metric, one has
\[
\mathrm{YM}(A,B)=|\mu(A,B)|^2.
\]
\end{proposition}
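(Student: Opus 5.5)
The plan is to make precise the ``induced $L^2$-metric'' on $\mathfrak g_1^*\oplus\mathfrak g_2^*$. Once it is fixed, the identity reduces to comparing the two components of $\mu$ in \eqref{moment} with the two terms of $\mathrm{YM}$.

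\emph{Identifying the dual elements.} On $\mathfrak g_1=C^\infty(\Sigma,\RR)$ we use the $L^2$ inner product of $h$. The functional $f\mapsto\int_\Sigma F_A f$ is then represented by the function $\ast_h F_A$, i.e.\ $f_A$ in the notation $F_A=f_A\,\dvol_h$ of the proof of Proposition~\ref{YMprop}. Its squared norm is $\int_\Sigma |f_A|^2\dvol_h=\int_\Sigma|F_A|^2\dvol_h$, which is the first term of $\mathrm{YM}$.

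For the second component, note that $\mu_2(\omega)=\int_\Sigma\iota_vH\wedge\iota_v\omega$ depends on $\omega\in\Omega^1(Z)^\bbT$ only through the function $\iota_v\omega\in C^\infty(\Sigma)$. This is because $\iota_vH$ is basic: it is a $2$-form on $\Sigma$. Accordingly I would equip $\mathfrak g_2=\Omega^1(Z)^\bbT$ with the $L^2$ metric of $h_A$ and project onto the vertical part.

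\emph{Computing $|\mu_2|^2$.} Write $\omega=\omega_h+\phi A$ with $\phi=\iota_v\omega$. Then $\mu_2$ is represented by the vertical $1$-form $(\ast_h\iota_vH)\,A$, suitably normalised by the fibre length. Writing $H=f\,\dvol_{h_A}=f\,A\wedge\pi^*\dvol_h$ gives $\iota_vH=f\,\dvol_h$, so $\ast_h\iota_vH=f$.

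The $L^2(Z,h_A)$ norm squared of $fA$ is $\int_Z f^2|A|^2_{h_A}\dvol_{h_A}=\int_Z f^2\dvol_{h_A}$, using $|A|_{h_A}=1$ since $h_A=\pi^*h+A\odot A$. This equals $\int_Z H\wedge\ast_{h_A}H$, the second term of $\mathrm{YM}$. The cross terms vanish because $\mathfrak g_1^*$ and $\mathfrak g_2^*$ are taken orthogonal in the direct sum metric.

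\emph{Main obstacle.} The delicate step is normalisation: making the pairing $\int_\Sigma$ used in $\mu$ compatible with the $L^2$ metric on $Z$ for $\mathfrak g_2$. With the fibre of length $1$ (the suppressed $2\pi$ convention), $\int_Z(\pi^*\cdot)\dvol_{h_A}=\int_\Sigma(\cdot)\dvol_h$, so fibre integration introduces no constant. I would check this explicitly. If a factor of the fibre length appears, I would absorb it into the definition of the induced metric on $\mathfrak g_2$, namely $\langle\omega,\omega'\rangle=\int_\Sigma \iota_v\omega\,\iota_v\omega'\,\dvol_h$, which is the metric naturally induced by $\Omega$. I would also state that the $\mathfrak g_2^*$ norm is computed modulo the degenerate horizontal directions, consistent with the presymplectic degeneracy noted after \eqref{eq:Omega-def}.
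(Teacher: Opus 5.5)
Your proposal is correct and follows the paper's proof: both compute the two components of $\mu$ separately, using $H=f\,\dvol_{h_A}$, $\iota_vH=f\,\pi^*\dvol_h$ and fibre integration to identify the $\mathfrak g_2^*$-norm with $\int_Z f^2\,\dvol_{h_A}$. Your explicit representative $fA$ with $|A|_{h_A}=1$ and your fibre-length-one normalisation spell out what the paper leaves implicit; in particular, its intermediate expression $\int_Z f\,\omega\,\dvol_{h_A}$ should be read as $\int_Z f\,\iota_v\omega\,\dvol_{h_A}$.

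One small caution concerns your fallback. The metric $\int_\Sigma \iota_v\omega\,\iota_v\omega'\,\dvol_h$ would leave a factor equal to the fibre length rather than absorb it. With the paper's convention of fibre length $1$, however, this case never arises.
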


\begin{proof}
We compute the  squared $L^2$-norm  of each component of the moment map.
For the curvature component, one has
\[
\left|\int_\Sigma F_A\,\cdot\right|^2
=
\int_\Sigma |F_A|^2\,d\operatorname{vol}_h.
\]
For the second component, write $H=f\,d\operatorname{vol}_{h_A}$
and since
\[
d\operatorname{vol}_{h_A}=A\wedge \pi^*(d\operatorname{vol}_h),
\quad \text{and} \quad A(v)=1,
\]
we obtain
\[
\iota_v(d\operatorname{vol}_{h_A})
=
\pi^*(d\operatorname{vol}_h),
\qquad
\iota_v H
=
f\,\pi^*(d\operatorname{vol}_h).
\]
Now for any $\omega\in \Omega^1(Z)^{\mathbb T}$,
\[
\int_\Sigma \iota_v H \wedge (\iota_v\omega)
=
\int_\Sigma f\,\pi^*(d\operatorname{vol}_h)\wedge \iota_v\omega
=
\int_Z f\,A\wedge \pi^*(d\operatorname{vol}_h)\wedge \omega
=
\int_Z f\,\omega\,d\operatorname{vol}_{h_A}.
\]
It follows that the $L^2$-norm of the second component is
\[
\left|\int_\Sigma \iota_v H \wedge (\iota_v\cdot)\right|^2
=
\int_Z f^2\,d\operatorname{vol}_{h_A}
=
\int_Z H\wedge *_{h_A}H.
\]
Combining the two components, we obtain
\[
|\mu(A,B)|^2
=
\int_\Sigma |F_A|^2\,d\operatorname{vol}_h
+
\int_Z H\wedge *_{h_A}H
=
\mathrm{YM}(A,B),
\]
as claimed.
\end{proof}


\section{Picard transformation and abelian gauge theory on free loop spaces} \label{gaugeloop}

In this section we explain how the action of the Picard group on
\(B\)-fields is reflected on the free loop space. For differential forms and the fundamental equivariant cohomology theory on
loop spaces and relation to index theory, see ~\cite{A85,B85}. Related developments involving T-duality and loop space
geometry may also be found in~\cite{FH08,JP, GJP,  HM15, HM18, HM22, HM24, HM25}.

Recall that a \(B\)-field on a smooth manifold \(M\) determines a holonomy line bundle
\[
\cL^B \longrightarrow LM
\]
equipped with a natural connection \(\nabla^{\cL^B}\).
Our goal is to identify the effect of the Picard action
\[
(L,\nabla^L)\in \pic(M)
\]
on the corresponding loop space data.
More precisely, we show that replacing \(B\) by
\[
(L,\nabla^L)\cdot B
\]
does not change the underlying holonomy line bundle, but modifies the
connection \(\nabla^{\cL^B}\) by the gauge transformation
\[
\mathrm{hol}(\nabla^L):LM\to\TT,
\]
where \(\mathrm{hol}(\nabla^L)\) denotes the holonomy of
\(\nabla^L\) around loops. Thus the Picard action on gerbes is
transgressed to the ordinary gauge action on the loop space.

Throughout this subsection, \(M\) is an arbitrary smooth manifold.
In the T-duality applications considered later, we specialize to
\(M=Z\) and assume all geometric data are \(\TT\)-invariant.

\medskip

Let $M$ be a smooth manifold. We recall several standard constructions on the loop space $LM$.  
The evaluation map
\[
ev : S^{1}\times LM \longrightarrow M, \qquad (t,\gamma)\mapsto \gamma(t),
\]
induces the \emph{transgression map}
\[
\tau : \Omega^{\bullet}(M)\to \Omega^{\bullet-1}(LM),\qquad 
\tau(\xi)=\int_{S^{1}} ev^{*}\xi .
\]
Given $\omega\in \Omega^{i}(M)$, define for each $s\in [0,1]$ a form $\widehat\omega_{s}\in\Omega^{i}(LM)$ by
\[
\widehat\omega_s(X_1,\ldots,X_i)(\gamma)
=
\omega\bigl(X_1|_{\gamma(s)},\ldots,X_i|_{\gamma(s)}\bigr),
\]
for vector fields $X_1,\ldots,X_i$ defined near the loop $\gamma$.  
Since $d\widehat\omega_s=\widehat{d\omega}_s$, the averaged form
\[
\overline\omega := \int_{0}^{1}\widehat\omega_s\,ds \in \Omega^{i}(LM)
\]
is invariant under the rotation action of $S^{1}$ generated by the vector field~$K$.  
Moreover,
\[
\tau(\omega)=\iota_{K}\,\overline\omega .
\]

\medskip

Let $\{\sU_\alpha\}$ be an open cover of $M$.  
To ensure that $\{L\sU_\alpha\}$ covers $LM$, we choose a \emph{Brylinski open cover}: a maximal family satisfying
\[
H^{2}(\sU_{\alpha_I})=H^{3}(\sU_{\alpha_I})=0
\quad \text{for all finite intersections }\sU_{\alpha_I}.
\]
Such covers always exist and include, for example, tubular neighbourhoods of individual loops.

Let $H$ be a closed $3$-form on $M$ with integral periods.  
Since $H^{3}(\sU_\alpha)=0$, we may write
\[
H|_{\sU_\alpha}=dB_\alpha,
\]
for purely imaginary $2$-forms $B_\alpha$, and since $H^{2}(\sU_{\alpha\beta})=0$,
\[
B_\beta - B_\alpha = dA_{\alpha\beta},
\]
for purely imaginary $1$-forms $A_{\alpha\beta}$.  
The triple $(H,B,A)$ defines a connective structure of a gerbe $\cG_B$ on~$M$.

A geometric realization of $\cG_B$ is a family of line bundles  
\[
(L_{\alpha\beta},\nabla^{L}_{\alpha\beta})\longrightarrow \sU_{\alpha\beta},
\]
with the usual gerbe compatibility conditions.  
Since $H^{2}(\sU_{\alpha\beta})=0$, each $L_{\alpha\beta}$ is trivial, and
\[
\nabla^L_{\alpha\beta}=d+A_{\alpha\beta}, \qquad 
(\nabla^L_{\alpha\beta})^{2}=B_\beta - B_\alpha .
\]

\medskip

The holonomy of $\cG_B$ defines a line bundle
\[
\cL^{B} \longrightarrow LM,
\]
equipped with Brylinski local sections $\{\sigma_\alpha\}$ over $\{L\sU_\alpha\}$.  
On overlaps these satisfy
\[
\sigma_\alpha
=
e^{-\int_{0}^{1}\iota_K A_{\alpha\beta}}
\sigma_\beta
=
e^{-\tau(A_{\alpha\beta})}\sigma_\beta .
\]
The natural connection on $\cL^{B}$ is described in this basis by
\be\label{connection}
\nabla^{\cL^{B}} 
= d - \iota_K \overline B_\alpha
= d - \tau(B_\alpha),
\ee
with curvature
\[
F_B = (\nabla^{\cL^{B}})^2 = -\tau(H).
\]

\begin{remark}\label{add2form}
From the construction, we see that adding a global 2-form \(\omega\) to the $B$-field does not change the holonomy line bundle \(\cL^{B}\). 
It only modifies the connection \(\nabla^{\cL^{B}}\) by the transgressed 1-form \(\tau(\omega)\).
\end{remark}

\medskip

Let $\pic(M)$ denote the Picard group of line bundles with connection on $M$.  
The action $(L,\nabla^L)\cdot B$ on $B$-fields has the following interpretation on the loop space:

\begin{proposition}\label{loop gauge}
For $(L,\nabla^L)\in \pic(M)$,
\[
\nabla^{\cL^{(L,\nabla^L)\cdot B}}
=
\nabla^{\cL^{B}} + d\ln \mathrm{hol}(\nabla^L),
\]
where $\mathrm{hol}(\nabla^L):LM\to \TT$ is the holonomy of the connection $\nabla^L$ around loops.
\end{proposition}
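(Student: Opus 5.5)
The plan is to compute everything in the Brylinski local bases $\{\sigma_\alpha\}$ over $\{L\sU_\alpha\}$. The goal is to show two things. First, the transformed gerbe $(L,\nabla^L)\cdot B$ has the same holonomy line bundle, possibly after an explicit change of local sections. Second, the connection one-forms differ by $d\ln\hol(\nabla^L)$. I will first fix what $(L,\nabla^L)\cdot B=\{B_\alpha+(\nabla^L)^2\}$ means at the level of the full cochain. Since the cover is Brylinski, $H^2(\sU_\alpha)=0$, so $L|_{\sU_\alpha}$ is trivialisable. Choose unitary trivialisations $s_\alpha$ with $\nabla^L s_\alpha = a_\alpha s_\alpha$ and transition functions $s_\beta = \phi_{\alpha\beta}s_\alpha$. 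Then on $\sU_\alpha$ we have $\eta:=(\nabla^L)^2=da_\alpha$, and on overlaps $a_\beta-a_\alpha=d\ln\phi_{\alpha\beta}$.

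Next I will treat the curving and the gluing data. Since $\eta$ is a global $2$-form, adding it keeps $A_{\alpha\beta}$ unchanged, and by Remark~\ref{add2form} the holonomy line bundle $\cL^B$ is unchanged as a bundle with the same local sections $\sigma_\alpha$. By \eqref{connection}, the new connection in the basis $\sigma_\alpha$ is
\[
d-\tau(B_\alpha+\eta)=\nabla^{\cL^B}-\tau(\eta).
\]
So the proposition reduces to the identity $-\tau(\eta)=d\ln\hol(\nabla^L)$ of $1$-forms on $LM$, with the sign conventions of the paper. This is a statement purely about the line bundle with connection $(L,\nabla^L)$ and does not involve the gerbe.

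To prove the identity, I will work on $L\sU_\alpha$. There, for a loop $\gamma$ in $\sU_\alpha$, the holonomy is $\hol(\nabla^L)(\gamma)=\exp\bigl(-\int_{S^1}\gamma^*a_\alpha\bigr)$, in the convention where $\nabla=d+a$ gives holonomy $e^{-\oint a}$. Equivalently, $\ln\hol(\nabla^L)=-\tau(a_\alpha)$ up to a locally constant term. I will then use the standard fibre-integration formula $d\tau(\omega)=\tau(d\omega)$, which holds for $\tau=\int_{S^1}ev^*$ because $S^1$ is closed and $d$ commutes with fibre integration up to sign. This gives
\[
d\ln\hol(\nabla^L)=-\tau(da_\alpha)=-\tau(\eta).
\]
The result is independent of $\alpha$, since $\eta$ is global; equivalently, $\tau(d\ln\phi_{\alpha\beta})=0$. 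Hence the local computations glue to the global identity on $LM$, and substituting back gives the proposition.

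The main obstacle I expect is bookkeeping of signs and conventions. I need to reconcile the sign in $\tau(\omega)=\iota_K\overline\omega$ and $d\tau=\pm\tau d$, the orientation convention for fibre integration over $S^1$, and the sign convention for holonomy (is it $e^{-\oint a}$ or $e^{\oint a}$?). These must be made consistent with \eqref{connection}, where the connection is $d-\tau(B_\alpha)$ and the curvature is $-\tau(H)$. I will first check the consistency of $F_B=-\tau(H)$ with $d\tau=\tau d$, and then fix the holonomy convention so that the final identity carries the stated sign.

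If needed, I will also remark on loops $\gamma$ not contained in a single $\sU_\alpha$. These are handled by the Brylinski cover, which by maximality contains a tubular neighbourhood of every loop, so every $\gamma$ lies in some $L\sU_\alpha$ and the local argument applies.
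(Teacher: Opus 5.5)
Your proposal is correct and follows the same route as the paper. In the Brylinski basis $\{\sigma_\alpha\}$, adding the global form $\eta=(\nabla^L)^2$ leaves $A_{\alpha\beta}$, hence $\cL^B$, unchanged and shifts the connection form by $-\tau(\eta)$, which is then identified with $d\ln\hol(\nabla^L)$; the paper cites Brylinski's Proposition~6.1.1 for this identity, while you derive it from local trivialisations of $L$ and $d\tau=\tau d$.

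Your sign convention ($d\tau=\tau d$, matching $F_B=-\tau(H)$, with $\hol=e^{-\oint a}$) is the coherent choice. Note that it is incompatible with reading $\tau(\omega)$ as $\iota_K\overline\omega$ on $2$-forms, since Cartan's formula gives $d(\iota_K\overline\omega)=-\iota_K\overline{d\omega}$. One small correction: $\tau(d\ln\phi_{\alpha\beta})$ is only locally constant (valued in $2\pi i\ZZ$), not zero. This is harmless, because $d\ln\hol(\nabla^L)$ and $\tau(\eta)$ are both global $1$-forms, so no gluing is needed.
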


\begin{proof}
By Proposition~6.1.1 of~\cite{Bry},
\[
-\iota_K\,\overline{(\nabla^L)^2}
=
\tau\bigl((\nabla^L)^2\bigr)
=
d\ln \mathrm{hol}(\nabla^L).
\]
Using \eqref{connection}, we compute under the basis $\{\sigma_\alpha\}$:
\[
\nabla^{\cL^{(L,\nabla^L)\cdot B}}
=
d - \iota_K\bigl(\overline B_\alpha + \overline{(\nabla^L)^2}\bigr)
=
d - \iota_K \overline B_\alpha + d\ln \mathrm{hol}(\nabla^L)
=
\nabla^{\cL^{B}} + d\ln \mathrm{hol}(\nabla^L).
\]
The desired equality follows. 
\end{proof}

Thus the Picard transformation $(L,\nabla^L)\cdot B$ induces the gauge transformation
\[
\mathrm{hol}(\nabla^L):LM\to\TT
\]
on the connection $\nabla^{\cL^{B}}$ of the holonomy line bundle $\cL^{B}\to LM$.


\section{Moduli spaces of flat mixed fields} \label{moduli}

The aim of this section is to study the presymplectic quotient of the configuration space of
mixed fields by the gauge group. We shall see that, although the defining equations are linear, the
quotient exhibits a non-trivial geometry. 

We begin by introducing the locus on which the  reduction
takes place. Denote by $\Caf$ the subset of the configuration space $\Ca$ given by
\[
\Caf \;:=\; \bigl\{\, (A,B)\ \big|\ F_A\ \text{exact on }\Sigma,  \,  H \ \text{exact on }Z\bigr\}\subset \Ca.
\]
It follows immediately from \eqref{gautran} that $\Caf$ is preserved by the action of the gauge group $\GG$.

Associated to $\Caf$ is a topologically trivial complex line bundle $\cL$ equipped with a connection
\[
\nabla^{\cL} \;=\; d + \Lambda,
\]
such that at a point $(A,B)$ and for a tangent vector
\[
(\delta A,\delta B)\in T_{(A,B)}\Caf
\cong
\Omega^1(\Sigma)\times\Omega^2(Z)^{\TT},
\]
the connection $1$-form is defined by
\be\label{prequantumconnection}
\Lambda_{(A,B)}\bigl(\delta A,\delta B\bigr)
\;=\;
\int_{\Sigma}\Bigl(\,A^{\mathrm{bas}}\wedge \delta A
\;+\;
\iota_v B \wedge \iota_v \delta B\,\Bigr),
\ee
where $A^{\mathrm{bas}}$ denotes the basic form on $\Sigma$ associated to $A$, and $v$ is the vector field 
generating the $\TT$-action along the fibres of $Z\to\Sigma$.

A direct computation shows that the curvature of $\nabla^{\cL}$ is precisely
\[
(\nabla^{\cL})^2 \;=\; \Omega,
\]
where $\Omega$ is the $2$-form on the configuration space introduced in \eqref{eq:Omega-def}. 
Consequently, $(\cL,\nabla^{\cL})$ is a prequantum line bundle ~\cite{Kostant} over $\Caf$:
\[
\begin{CD}
(\cL,\nabla^{\cL})
\\
@V\pi VV
\\
\Caf .
\end{CD}
\]

The action of $\GG_\lambda$ on $\Caf$ lifts naturally to an action on $\cL$, and the connection $\nabla^{\cL}$ is 
invariant under this lifted action. Moreover, one verifies that
\[
\mu(X)
=
L_{\rho(X)}
-
\nabla^{\cL}_{\rho(X)},
\]
where $\mu$ is the moment map of Proposition~\ref{prop:moment}, $X\in \mathfrak g_1^*\oplus \mathfrak g_2^*$, 
and $\rho(X)$ denotes the corresponding Killing vector field on $\Caf$.
We therefore obtain the presymplectic reduction
\[
\Caf/ \GG_\lambda
=
\mu^{-1}(0)/\GG_\lambda,
\]
which we shall refer to as the {\em moduli space of flat mixed fields}.  The  quotient has the following simple finite-dimensional description.

\begin{lemma}\label{lem:finite-model}
Let $\lambda\in\ZZ$ be the level of the action \eqref{gautran}. The product
\( \RR^{2g} \times \RR^{2g} \times \RR/\ZZ \) carries a natural
\( (\mathbb{Z}^{2g}\times \mathbb{Z}^{2g}) \)-action defined by
\begin{equation} \label{discreteaction}
 (\mathbf{m}, \mathbf{n})  \cdot (\mathbf{x}, \mathbf{y}, [t]) := \left( \mathbf{x}+\mathbf{m}, \mathbf{y}+\mathbf{n},\ 
[t + \phi^\lambda_{(\mathbf m,\mathbf n)}(\mathbf x,\mathbf y)]\right),
\end{equation}
\[\phi^\lambda_{(\mathbf m,\mathbf n)}(\mathbf x,\mathbf y):=
\mathbf{n}\Theta_g \mathbf{x}^T+(\lambda-1)\,\mathbf{m}\Theta_g \mathbf{y}^T,
\]
where 
$$ (\mathbf{m}, \mathbf{n}) = (m_1, \cdots, m_{2g},  n_1, \cdots,  n_{2g}) \in  \mathbb{Z}^{2g}\times \mathbb{Z}^{2g} $$ acts on 
$$ (\mathbf{x}, \mathbf{y}, [t])=(x_1, \cdots, x_{2g},  y_1, \cdots,  y_{2g}, [t])\in \RR^{2g} \times \RR^{2g} \times \RR/\ZZ. $$ 
There is a one-to-one correspondence between the orbit space of the action of $\GG_\lambda$ on $\mu^{-1}(0)$ and the orbit space of 
the $ (\mathbb{Z}^{2g}\times \mathbb{Z}^{2g})$-action on \( \RR^{2g} \times \RR^{2g} \times \RR/\ZZ \) described in \eqref{discreteaction}. 
\end{lemma}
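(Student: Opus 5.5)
The plan is to reduce everything to cohomology classes on $\Sigma$ by fixing a global gauge on the flat locus, and then to track exactly how each generator of $\GG_\lambda$ moves the surviving finite-dimensional data. First, on $\mu^{-1}(0)$ we have $F_A=0$. Hence $\langle c_1(Z),[\Sigma]\rangle=0$ and $Z\cong\Sigma\times\bbT$, and we fix a trivialization with angular form $d\theta$. Since $H=0$ as well, the gerbe data may be taken to vanish, so $B$ is a global $\bbT$-invariant $2$-form. We then write
\[
A=d\theta+a,\qquad B=b_1+A\wedge b_0,
\]
with $a,b_0\in\Omega^1(\Sigma)$ and $b_1\in\Omega^2(\Sigma)$, using the global form of \eqref{eq:local-decomp}. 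The condition $F_A=0$ gives $da=0$. The condition $H=dB=0$ gives $F_A\wedge b_0-A\wedge db_0=0$, hence $db_0=0$; the equation $db_1=0$ is automatic on a surface. The invariants we expect to survive are
\[
\mathbf x=[a],\qquad \mathbf y=[b_0]\in H^1(\Sigma,\RR)\cong\RR^{2g},\qquad t=\int_\Sigma b_1 \bmod \ZZ,
\]
all written in a fixed symplectic basis, so that $\int_\Sigma\alpha\wedge\beta=\mathbf p\,\Theta_g\,\mathbf q^T$ for closed forms with classes $\mathbf p,\mathbf q$ (up to a global sign convention to be fixed once).

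Second, I would show that the identity component of the gauge group kills everything except these invariants. Take $g=e^{2\pi i f}$ and $\mathcal L$ topologically trivial with $\eta=d\omega$, $\omega\in\Omega^1(Z)^\bbT$. These move $a$, $b_0$ and $b_1$ by arbitrary exact forms, and the $\lambda$-twist $\lambda\,df\wedge b_0$ is itself exact because $b_0$ is closed. Consequently $(a,b_0,b_1)$ is determined modulo this subgroup by its classes $([a],[b_0],\int_\Sigma b_1)\in\RR^{2g}\times\RR^{2g}\times\RR$. The remaining, discrete part of the gauge group is generated by the following elements:
\begin{enumerate}
\item[$\bullet$] $g$ with $u$ a harmonic form of integral class $\mathbf m$;
\item[$\bullet$] $\mathcal L$ with $\eta=\eta_1+d\theta\wedge\eta_0$, where $\eta_0$ has integral class $\mathbf n$;
\item[$\bullet$] $\mathcal L$ pulled back from $\Sigma$, which shifts $\int b_1$ by an integer and is therefore absorbed into $t\in\RR/\ZZ$.
\end{enumerate}
For these elements, integrality of $\eta$ on the cycles of $Z$ gives exactly $\int_\Sigma\eta_1\in\ZZ$ and $\eta_0$ integral.

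Third, I would compute the action on $(\mathbf x,\mathbf y,t)$; this is the heart of the statement. Under $g$, the connection becomes $A+u$, so $\mathbf x\mapsto\mathbf x+\mathbf m$. By \eqref{gautran}, $B\mapsto B+\lambda u\wedge b_0$. Re-decomposing relative to the new connection via
\[
A\wedge b_0=(A+u)\wedge b_0-u\wedge b_0
\]
shows that $b_1$ changes by $(\lambda-1)\,u\wedge b_0$, so $t\mapsto t+(\lambda-1)\,\mathbf m\Theta_g\mathbf y^T$. Under $\mathcal L$, rewrite
\[
\eta=(\eta_1-a\wedge\eta_0)+A\wedge\eta_0 .
\]
This gives $\mathbf y\mapsto\mathbf y+\mathbf n$ and shifts $t$ by $-\int_\Sigma a\wedge\eta_0$, which is $\mathbf n\Theta_g\mathbf x^T$ modulo $\ZZ$ after fixing the sign convention. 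Composing the two recovers $\phi^\lambda_{(\mathbf m,\mathbf n)}$. I would then check that the resulting map $\GG_\lambda\to\mathbb Z^{2g}\times\mathbb Z^{2g}$ has the identity component plus pullbacks from $\Sigma$ as its kernel on invariants. Its compatibility with \eqref{eq:semidirect-law} should follow from Lemma~\ref{lem:semidirect}. In particular, the cross term $\lambda u_2\wedge\iota_v\eta_1$ only contributes an integer to $t$, so the formula \eqref{discreteaction} indeed defines a group action.

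Finally, the map $(A,B)\mapsto$ the orbit of $(\mathbf x,\mathbf y,[t])$ is surjective, since every triple is realized by harmonic representatives. It is injective on orbits by the second and third steps combined. I expect the main obstacle to be the bookkeeping in the third step: the $b_1$-coordinate depends on the chosen connection, and both the sign in $\mathbf n\Theta_g\mathbf x^T$ and the $-1$ in $(\lambda-1)$ arise purely from this re-decomposition. I would fix orientation and basis conventions once, verify them on $g=1$ by direct computation, and then verify that the two shift formulas compose correctly.
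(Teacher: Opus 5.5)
Your proposal is correct and follows essentially the same route as the paper. You trivialize $Z$ on the flat locus, write $A=a_0+d\theta$ and $B=b_1+A\wedge b_0$, and take $([a_0],[b_0],[\int_\Sigma b_1])$ as the invariants. You then show that the part of the gauge group with $[u]=[\iota_v\eta]=0$ acts trivially on these invariants, and read off the fibre shift $\mathbf n\Theta_g\mathbf x^T+(\lambda-1)\mathbf m\Theta_g\mathbf y^T$ by re-decomposing $B$ relative to the new connection.

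The differences are minor. You compute the shifts separately for pure $g$ and pure $\mathcal L$ generators and then compose them, whereas the paper derives the single formula \eqref{afterchange} for a general element. You also state explicitly that the identity part acts transitively on the fibres of the invariant map, which the paper leaves implicit.
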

\begin{proof} 
Take an element \((A,B)\in \mu^{-1}(0)\). Fix a trivialization of \(Z\), and let \(\theta\) denote the resulting global vertical coordinate. Then we may write
\begin{equation}\label{expression}
\left\{
\begin{array}{rl}
A &= a_0+d\theta,\\[2pt]
B &= b_1 + A\wedge b_0,
\end{array}
\right.
\end{equation}
where \(a_0,b_0\in \Omega^{1}(\Sigma)\) and \(b_1\in \Omega^{2}(\Sigma)\). Note that, since
$\iota_vA=1$ and $b_0,b_1$ are basic,
\be\label{eq:iota-B}
\iota_vB=(\iota_vA)\,b_0-A\wedge(\iota_vb_0)=b_0 .
\ee

In this description the datum \((A,B)\) is encoded by the triple \((a_0,b_0,b_1)\). 
Moreover, the condition \((A,B)\in \mu^{-1}(0)\) forces \(a_0\) and \(b_0\) to be closed:
\[
a_0\in \Omega^{1}_{\mathrm{cl}}(\Sigma),\qquad b_0\in \Omega^{1}_{\mathrm{cl}}(\Sigma).
\]
Indeed $F_A=da_0$, while
$H=db_1+da_0\wedge b_0-a_0\wedge db_0-d\theta\wedge db_0$, whose first three terms are basic
forms of degree $3$ on the surface $\Sigma$ and hence vanish. Thus, $H=0$ is equivalent to
$db_0=0$, and $b_1\in\Omega^2(\Sigma)$ is unconstrained.

Now choose an element \((g,(L,\nabla^{L}))\in \GG_\lambda\), and recall the notation $u=g^{-1}dg$,
$\eta=(\nabla^{L})^{2}$. By \eqref{gautran} and \eqref{eq:iota-B}, the transformed pair
\((\widetilde A,\widetilde B)\) is given by
\[
\widetilde A = A + u,\qquad \widetilde B = B+\eta+\lambda\,u\wedge b_0 .
\]
By \eqref{eq:iota-twist}, $\iota_v(u\wedge b_0)=0$, hence
\[
\widetilde b_0=\iota_v\widetilde B=b_0+\iota_v\eta .
\]
Expressing \((\widetilde A,\widetilde B)\) again in the form \eqref{expression} and subtracting,
\[
\widetilde b_1=\widetilde B-\widetilde A\wedge\widetilde b_0
=\bigl(b_1+A\wedge b_0+\eta+\lambda\,u\wedge b_0\bigr)
-(A+u)\wedge(b_0+\iota_v\eta),
\]
so that the triple \((a_0,b_0,b_1)\) is carried to \((\widetilde a_0,\widetilde b_0,\widetilde b_1)\) with
\begin{equation}\label{afterchange}
\left\{
\begin{array}{rl}
\widetilde a_0 &= a_0 + u,\\[4pt]
\widetilde b_0 &= b_0 + \iota_v \eta,\\[4pt]
\widetilde b_1 &= b_1
+ \bigl(\eta - d\theta\wedge \iota_v\eta - u\wedge \iota_v\eta\bigr)
+ (\iota_v\eta)\wedge a_0
+(\lambda-1)\, u\wedge b_0 .
\end{array}
\right.
\end{equation}
Here we used $A\wedge\iota_v\eta=d\theta\wedge\iota_v\eta+a_0\wedge\iota_v\eta$ and
$-a_0\wedge\iota_v\eta=(\iota_v\eta)\wedge a_0$, and the $\lambda$-dependence enters solely
through the cancellation of $\lambda\,u\wedge b_0$ against the term $-u\wedge b_0$ coming from
$-\widetilde A\wedge\widetilde b_0$. 
By \eqref{eq:iota-eta}, \(\iota_v\eta\) is a closed \(1\)-form with integral periods. Likewise \(u\) has integral periods, and the \(2\)-form
\[
\eta - d\theta\wedge \iota_v\eta - u\wedge \iota_v\eta
\]
descends to \(\Sigma\) and represents an integral cohomology class.

Finally, fix a symplectic basis \(\{\theta_1,\dots,\theta_g,\eta_1,\dots,\eta_g\}\) of \(H^{1}(\Sigma;\RR)\), normalised so that
$\int_\Sigma\alpha\wedge\beta=[\alpha]\,\Theta_g\,[\beta]^T$. Write the cohomology classes as
\be \label{expression2}
[a_0]=\sum_{i=1}^{g}\bigl(x_i\theta_i + x_{i+g}\eta_i\bigr),\qquad
[b_0]=\sum_{i=1}^{g}\bigl(y_i\theta_i + y_{i+g}\eta_i\bigr),
\ee
and similarly
\be \label{expression1}
[u]=\sum_{i=1}^{g}\bigl(m_i\theta_i + m_{i+g}\eta_i\bigr),\qquad
[\iota_v\eta]=\sum_{i=1}^{g}\bigl(n_i\theta_i + n_{i+g}\eta_i\bigr),
\ee
and set $[t]:=[\int_\Sigma b_1]\in\RR/\ZZ$.

The subgroup $\GG^0<\GG_\lambda$ of elements with $[u]=0$ and $[\iota_v\eta]=0$ acts on
$(a_0,b_0,b_1)$ by shifting $a_0$ and $b_0$ by exact $1$-forms and $b_1$ by an exact $2$-form
together with an integral class. Note that for such elements $u=df_1$ is exact and, on the flat
locus, $\lambda\,u\wedge b_0=\lambda\,d(f_1b_0)$ is exact as well, so that $\GG^0$ is independent
of $\lambda$. Hence $\GG^0$ acts trivially on $(\mathbf x,\mathbf y,[t])$, and the residual action
is that of $\GG_\lambda/\GG^0\cong\ZZ^{2g}\times\ZZ^{2g}$. Passing to \eqref{afterchange} and discarding
the integral classes which vanish in $\RR/\ZZ$, these being
$[\eta-d\theta\wedge\iota_v\eta]$ and $[u\wedge\iota_v\eta]$, the latter with periods
$\mathbf m\Theta_g\mathbf n^T\in\ZZ$, the surviving shift of the fibre coordinate is
\[
\Delta t=\int_\Sigma\Bigl((\iota_v\eta)\wedge a_0+(\lambda-1)\,u\wedge b_0\Bigr)
=\mathbf n\Theta_g\mathbf x^T+(\lambda-1)\,\mathbf m\Theta_g\mathbf y^T
=\phi^\lambda_{(\mathbf m,\mathbf n)}(\mathbf x,\mathbf y).
\]
That \eqref{discreteaction} is indeed an action is verified directly: composing
$(\mathbf m,\mathbf n)$ and then $(\mathbf m',\mathbf n')$ shifts the fibre by
$\phi^\lambda_{(\mathbf m,\mathbf n)}(\mathbf x,\mathbf y)
+\phi^\lambda_{(\mathbf m',\mathbf n')}(\mathbf x+\mathbf m,\mathbf y+\mathbf n)$, which differs from
$\phi^\lambda_{(\mathbf m+\mathbf m',\mathbf n+\mathbf n')}(\mathbf x,\mathbf y)$ by
$\mathbf n'\Theta_g\mathbf m^T+(\lambda-1)\mathbf m'\Theta_g\mathbf n^T\in\ZZ$, using the integrality of $\Theta_g$.
\end{proof}

\begin{theorem} \label{main1}
Let $\lambda\in\ZZ$ and let $\Theta_g$ be the
standard symplectic matrix
\[
\Theta_g =
\begin{pmatrix}
0 & I_g \\
- I_g & 0
\end{pmatrix}
\in Sp(2g,\RR).
\]
\begin{enumerate}
\item[(i)] For $\lambda\ne0$, the moduli space $\mu^{-1}(0)/\GG_\lambda$ is a
 Heisenberg contact manifold of dimension $4g+1$, that is a principal circle bundle
over the torus $T^{2g}\times T^{2g}$ with first Chern class
\[
c_1(\mu^{-1}(0)/\GG_\lambda)
=
-\,\lambda\sum_{j,k} (\Theta_g)_{jk} \, \alpha_j \cup \beta_k,
\]
where $\alpha_j \in H^1(T^{2g}, \mathbb{Z})$ is the dual basis to the coordinate $x_j$ and
$\beta_k \in H^1(T^{2g}, \mathbb{Z})$ is the dual basis to the coordinate $y_k$.

\item[(ii)] For $\lambda=0$, the moduli space is a trivial circle bundle over $T^{2g}\times T^{2g}$, 
\[
\mu^{-1}(0)/\GG_\lambda\ \cong\ T^{2g}\times T^{2g}\times \RR/\ZZ\ =\ T^{4g+1}.
\]
\end{enumerate}
\end{theorem}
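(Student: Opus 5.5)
By Lemma~\ref{lem:finite-model}, it suffices to study the quotient
\[
M_\lambda:=\bigl(\RR^{2g}\times\RR^{2g}\times\RR/\ZZ\bigr)\big/(\ZZ^{2g}\times\ZZ^{2g})
\]
under the action \eqref{discreteaction}. The plan is to show that $M_\lambda$ is a principal $\RR/\ZZ$-bundle over $T^{2g}\times T^{2g}$, and then to compute its Chern class from the cocycle defining the action.

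First, the action is free and properly discontinuous. On the $\RR^{4g}$-factor it is simply translation by the lattice $\ZZ^{4g}$. The $\RR/\ZZ$-action $[t]\mapsto[t+s]$ commutes with \eqref{discreteaction}, because the fibre shift $\phi^\lambda_{(\mathbf m,\mathbf n)}$ does not depend on $t$. Hence $M_\lambda$ is a smooth compact manifold of dimension $4g+1$. The projection to $\RR^{4g}/\ZZ^{4g}=T^{2g}\times T^{2g}$ makes it a principal circle bundle.

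The bundle is the quotient of the trivial bundle $\RR^{4g}\times\RR/\ZZ$ by lattice translations lifted through the multipliers $e^{2\pi i\phi^\lambda_{(\mathbf m,\mathbf n)}(\mathbf x,\mathbf y)}$. This is the standard Appell--Humbert description. I will compute $c_1$ in one of two ways. The first is to take the invariant connection form $dt-\tfrac12(\ldots)$, or more simply $dt-\bigl(\lambda\, d\mathbf{y}\,\Theta_g\mathbf x^T$-type linear form$\bigr)$ chosen so that it descends. The second is to read off the factor of automorphy: the class is the alternating form
\[
E\bigl((\mathbf m,\mathbf n),(\mathbf m',\mathbf n')\bigr)=c(\gamma,\gamma')-c(\gamma',\gamma),
\]
where $c$ is the integer discrepancy computed at the end of the proof of Lemma~\ref{lem:finite-model},
\[
c=\mathbf n'\Theta_g\mathbf m^T+(\lambda-1)\,\mathbf m'\Theta_g\mathbf n^T.
\]
Antisymmetrizing gives
\[
\mathbf n'\Theta_g\mathbf m^T-\mathbf n\Theta_g\mathbf m'^T+(\lambda-1)\bigl(\mathbf m'\Theta_g\mathbf n^T-\mathbf m\Theta_g\mathbf n'^T\bigr).
\]
Using $\Theta_g^T=-\Theta_g$, this simplifies to $\lambda\bigl(\mathbf m'\Theta_g\mathbf n^T-\mathbf m\Theta_g\mathbf n'^T\bigr)$. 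Equivalently, differentiating the descending connection gives the curvature $\lambda\sum(\Theta_g)_{jk}\,dx_j\wedge dy_k$, up to sign. Identifying $H^2(T^{4g},\ZZ)$ with alternating forms on $\ZZ^{4g}$ then yields
\[
c_1=-\lambda\sum_{j,k}(\Theta_g)_{jk}\,\alpha_j\cup\beta_k .
\]

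I expect the main obstacle to be the sign and orientation conventions: the sign of $c_1$ depends on whether one uses the left or right action, and on the convention relating the factor of automorphy to $c_1$. I would fix these by checking the case $g=1$, $\lambda=1$ against the standard Heisenberg nilmanifold, which has $c_1=\pm\alpha\cup\beta$ with a known sign.

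For $\lambda\neq0$, the curvature form $\lambda\sum(\Theta_g)_{jk}\,dx_j\wedge dy_k$ is nondegenerate on $\RR^{4g}$. Its matrix is the block matrix with off-diagonal blocks $\pm\lambda\Theta_g$, and this is invertible. So the connection form $\alpha$ satisfies $\alpha\wedge(d\alpha)^{2g}\neq0$, and $\alpha$ is a contact form, i.e.\ the Boothby--Wang construction applies. Moreover $M_\lambda$ is the quotient of a Heisenberg group by a lattice, which justifies the name Heisenberg contact manifold.

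For part (ii), at $\lambda=0$ the class vanishes, so the bundle is topologically trivial. To obtain an explicit diffeomorphism $M_0\cong T^{4g+1}$, I would exhibit a global section. The action is then
\[
\phi^0=\mathbf n\Theta_g\mathbf x^T-\mathbf m\Theta_g\mathbf y^T .
\]
I will look for a function $s(\mathbf x,\mathbf y)$ with $s(\mathbf x+\mathbf m,\mathbf y+\mathbf n)-s(\mathbf x,\mathbf y)\equiv\phi^0\pmod{\ZZ}$. The natural first attempt is $s=-\mathbf x\Theta_g\mathbf y^T$, checking the $\bmod\ \ZZ$ identity using the integrality of $\Theta_g$. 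If this fails by a non-integral bilinear term, I will instead fall back on the vanishing of $c_1$ to conclude triviality of the bundle.
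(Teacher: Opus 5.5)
Your proposal is correct and follows essentially the paper's route. You invoke Lemma~\ref{lem:finite-model}, use freeness and proper discontinuity of the lattice action (which commutes with fibre rotation) to get a principal circle bundle, compute $c_1$ and the contact condition from an invariant connection with constant curvature $-\lambda\sum_{j,k}(\Theta_g)_{jk}\,dx_j\wedge dy_k$, and observe $c_1=0$ at $\lambda=0$. Your antisymmetrized cocycle $E$ is a valid cross-check, and it equals $-\lambda\sum_{j,k}(\Theta_g)_{jk}(\alpha_j\wedge\beta_k)(\gamma,\gamma')$ with exactly the paper's sign.

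Two details to tighten. First, a connection of the form $dt\pm\lambda\,\mathbf x\Theta_g\,d\mathbf y^T$ does not descend in the coordinate $t$; use the paper's $\alpha=dt+(1-\lambda)\,\mathbf x\Theta_g\,d\mathbf y^T-\mathbf y\Theta_g\,d\mathbf x^T$, which becomes $dt'-\lambda\,\mathbf x\Theta_g\,d\mathbf y^T$ in the coordinate $t'=t+\mathbf x\Theta_g\mathbf y^T$. Second, at $\lambda=0$ your first guess $s=-\mathbf x\Theta_g\mathbf y^T$ already works, since the discrepancy is $-\mathbf m\Theta_g\mathbf n^T\in\ZZ$, so no fallback is needed.
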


\begin{proof}

\noindent
{\em (i)}
By Lemma~\ref{lem:finite-model} the moduli space is the quotient of
$\mathbb R^{2g}\times\mathbb R^{2g}\times\mathbb R/\mathbb Z$ by the action
\eqref{discreteaction}. That action is free, since
$(\mathbf m,\mathbf n)\cdot(\mathbf x,\mathbf y,[t])=(\mathbf x,\mathbf y,[t])$ already forces
$\mathbf m=\mathbf n=\mathbf 0$ from the first two components, and it is properly discontinuous,
hence
\[
\mu^{-1}(0)/\GG_\lambda= 
(\mathbb R^{2g}\times \mathbb R^{2g}\times \mathbb R/\mathbb Z)
/(\mathbb Z^{2g}\times \mathbb Z^{2g})
\] 
is a smooth principal circle bundle over  $T^{2g}\times T^{2g}$.

To determine its Chern class we construct an invariant connection on
$\mathbb R^{2g}\times \mathbb R^{2g} \times \mathbb R/\mathbb Z $. Let
\[
\alpha
=
dt+(1-\lambda)\,\mathbf x\Theta_g d\mathbf y^{T}
-\mathbf y\Theta_g d\mathbf x^{T}.
\]
Since $(\mathbf m,\mathbf n)^*dt=dt+d\phi^\lambda_{(\mathbf m,\mathbf n)}$ with
\[
d\phi^\lambda_{(\mathbf m,\mathbf n)}
=
\mathbf n\Theta_g d\mathbf x^{T}
+(\lambda-1)\,
\mathbf m\Theta_g d\mathbf y^{T},
\]
one computes
\[
(\mathbf m,\mathbf n)^*\alpha
= dt+\mathbf n\Theta_g d\mathbf x^T+(\lambda-1)\mathbf m\Theta_g d\mathbf y^T
+(1-\lambda)(\mathbf x+\mathbf m)\Theta_g d\mathbf y^T
-(\mathbf y+\mathbf n)\Theta_g d\mathbf x^T
=\alpha ,
\]
the terms in $\mathbf m$ and $\mathbf n$ cancelling in pairs. Hence \(\alpha\) descends to a
connection \(1\)-form on \(\mu^{-1}(0)/\GG_\lambda\). Its curvature is, using $\Theta_g^T=-\Theta_g$,
\[
d\bigl(\mathbf y\Theta_gd\mathbf x^T\bigr)
=\sum_{j,k}(\Theta_g)_{jk}\,dy_j\wedge dx_k
=-\sum_{j,k}(\Theta_g)_{jk}\,dx_k\wedge dy_j
= d\mathbf x \Theta_g d\mathbf y^{T},
\]
hence
\[
d\alpha
=
(1-\lambda)\,d\mathbf x \Theta_g d\mathbf y^{T}-d\mathbf x \Theta_g d\mathbf y^{T}
=
- \lambda\,
\sum_{j,k}
(\Theta_g)_{jk}\,
dx_j\wedge dy_k .
\]
Therefore $c_1(\mu^{-1}(0)/\GG_\lambda)=[d\alpha]$, the fibre coordinate $t\in\RR/\ZZ$ having period $1$.
With $\alpha_j,\beta_k\in H^1(T^{2g},\mathbb Z)$ dual to \(x_j\) and \(y_k\),
\[
c_1(\mu^{-1}(0)/\GG_\lambda)
=
-\lambda
\sum_{j,k}
(\Theta_g)_{jk}
\,
\alpha_j\cup\beta_k = 
-\lambda
\sum_{i=1}^{g}
\bigl(
\alpha_i\cup\beta_{i+g}
-
\alpha_{i+g}\cup\beta_i
\bigr).
\]
Introducing a symplectic basis
\[
e_i=\alpha_i,
\qquad
f_i=\beta_{i+g},
\qquad
(1\le i\le g),
\]
and
\[
e_{g+i}=-\alpha_{g+i},
\qquad
f_{g+i}=\beta_i,
\qquad
(1\le i\le g),
\]
one has
\[
c_1(\mu^{-1}(0)/\GG_\lambda)
=
-\lambda
\sum_{j=1}^{2g}
e_j\wedge f_j .
\]
For $\lambda\neq0$ the $2$-form $d\alpha$ is nondegenerate on $\RR^{4g}$, so
$\alpha\wedge(d\alpha)^{2g}\neq0$ and \(\mu^{-1}(0)/\GG_\lambda\) is a  Heisenberg contact
manifold of dimension \(4g+1\), with contact 1-form $\alpha$. 

\medskip
\noindent
{\em (ii)} For $\lambda=0$, the Chern class $c_1(\mu^{-1}(0)/\GG_\lambda)$ is zero and the circle bundle trivialises,
\[
\mu^{-1}(0)/\GG_\lambda\ \cong\ T^{2g}\times T^{2g}\times \RR/\ZZ\ =\ T^{4g+1} .
\]
In particular $\mu^{-1}(0)/\GG_\lambda$ carries no contact structure.
\end{proof}

\subsection{The topology of the moduli space}\label{sec:Ktheory}
Throughout this
subsection $\lambda\in\ZZ\setminus\{0\}$ is fixed and $M_{|\lambda|}\xrightarrow{\ \pi\ }T^{4g}$
denotes the principal circle bundle underlying $\mu^{-1}(0)/\GG_\lambda$ at level $\lambda$, with Euler
class
\[
c_1(M_{|\lambda|})= \pm \lambda\,\omega\in H^2(T^{4g};\Z),
\qquad
\omega=\sum_{j=1}^{2g} e_j\wedge f_j ,
\]
We write $V:=H^1(T^{4g};\Z)\cong\Z^{4g}$, so that
$H^r(T^{4g};\Z)\cong\Lambda^r V$ is free of rank $\binom{4g}{r}$, and introduce the integral
Lefschetz operator
\[
L_r:\Lambda^r V\longrightarrow \Lambda^{r+2}V,\qquad L_r(\xi)=\omega\wedge\xi,
\]
with $L=\bigoplus_r L_r$. Note that the sign of the Euler class is immaterial for the discussion below, since
$\coker(-\lambda L_r)=\coker(\lambda L_r)$ and $\ker(-\lambda L_r)=\ker(\lambda L_r)$.

The fundamental group of $M_{|\lambda|}$ is the integer Heisenberg lattice
$H(|\lambda|,\dots,|\lambda|)$ of \cite{LP99}, namely a central extension of
$\ZZ^{2g}\times\ZZ^{2g}$ by $\ZZ$ twisted by the 2-cocycle
\be\label{eq:cocycle}
c_\lambda((\mathbf m,\mathbf n),(\mathbf m',\mathbf n'))
=
\mathbf n'\Theta_g\mathbf m^{T}
+(\lambda-1)\,
\mathbf m'\Theta_g\mathbf n^{T}.
\ee

The integral cohomology of $M_{|\lambda|}$ was computed by Lee and Packer in \cite{LP}.

\begin{theorem}[{\cite[Thm.~2.1]{LP}}]\label{thm:cohomology}
Let $\lambda\in\ZZ\setminus\{0\}$ and $k \in \mathbb N$, then
\[
H^k(M_{|\lambda|};\Z)\ \cong\ \coker\big(\lambda L_{k-2}\big)\ \oplus\ \ker \big(\lambda L_{k-1}\big)
\ \cong\ \Z^{\,b_k}\ \oplus\ \bigoplus_{j\ge 1}\Z_{|\lambda|j}^{\,\varepsilon_{k-2}(j)},
\]
with   rank
\[
b_k=
\begin{cases}
\binom{4g}{k}-\binom{4g}{k-2}, & 0\le k\le 2g,\\[2mm]
\binom{4g}{k-1}-\binom{4g}{k+1}, & 2g+1\le k\le 4g+1,
\end{cases}
\]
and torsion multiplicities 
\be\label{eq:eps-closed}
\varepsilon_r(j)\ =\ \binom{4g}{\,r+2-2j\,}-\binom{4g}{\,r-2j\,}
\ee
for $0\le r\le 2g-1$, and extended to $2g\le r\le 4g-2$ by $\varepsilon_r(j)=\varepsilon_{4g-2-r}(j)$.
\end{theorem}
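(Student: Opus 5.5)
The plan is to run the Gysin sequence of the circle bundle $\pi\colon M_{|\lambda|}\to T^{4g}$ and then compute the Smith normal form of the integral Lefschetz operator. Since the Euler class is $\pm\lambda\omega$, the Gysin sequence reads
\[
\cdots\to H^{k-2}(T^{4g})\xrightarrow{\ \lambda L_{k-2}\ } H^{k}(T^{4g})\xrightarrow{\ \pi^*\ } H^k(M_{|\lambda|})\xrightarrow{\ \pi_*\ } H^{k-1}(T^{4g})\xrightarrow{\ \lambda L_{k-1}\ } H^{k+1}(T^{4g})\to\cdots
\]
It yields a short exact sequence
\[
0\to\coker(\lambda L_{k-2})\to H^k(M_{|\lambda|})\to\ker(\lambda L_{k-1})\to0 .
\]
The kernel is a subgroup of the free group $\Lambda^{k-1}V$, hence free, so the sequence splits. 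This gives the first isomorphism of the statement.

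Next I would fix the rank of each piece by working rationally. By the hard Lefschetz theorem for the symplectic form $\omega$ on $\Lambda^\bullet(V\otimes\Q)$:
\begin{itemize}
\item $L_r$ is injective for $r\le 2g-1$;
\item $L_r$ is surjective for $r\ge 2g-1$.
\end{itemize}
Hence $\coker(\lambda L_{k-2})$ has rank $\binom{4g}{k}-\binom{4g}{k-2}$ for $k\le 2g$ and is torsion for $k\ge 2g+1$. Likewise $\ker(\lambda L_{k-1})$ is zero for $k\le 2g$ and has rank $\binom{4g}{k-1}-\binom{4g}{k+1}$ for $k\ge 2g+1$. This produces $b_k$.

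The remaining task, and the main obstacle, is the torsion of $\coker(L_r)$ over $\Z$, that is, the elementary divisors of $L_r\colon\Lambda^rV\to\Lambda^{r+2}V$. Multiplying by $\lambda$ multiplies every elementary divisor by $|\lambda|$. Once we show $L_r$ has elementary divisors $j$ with multiplicity $\varepsilon_r(j)$, we get $\Z_{|\lambda|j}^{\varepsilon_r(j)}$ (with $j=1$ giving $\Z_{|\lambda|}$ factors). Here I would use the integral $\mathfrak{sl}_2$-structure.
\begin{enumerate}
\item Split $V=\bigoplus_{i=1}^{2g}\langle e_i,f_i\rangle$, so that $\Lambda^\bullet V\cong\bigotimes_i\Lambda^\bullet\langle e_i,f_i\rangle$ and $L=\sum_i L^{(i)}$.
\item Each factor is the representation $1\oplus\langle e_i,f_i\rangle\oplus\langle e_i\wedge f_i\rangle$, with $L^{(i)}$ sending $1\mapsto e_i\wedge f_i$.
\item Use the adjoint contraction $\Lambda$ to find an integral basis adapted to the primitive decomposition. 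On the degree-$r$ part, $r=p+2s$, $L$ acts on $L^s(\text{primitive of degree }p)$ by the scalar $s+1$ up to unimodular change of basis.
\item This gives elementary divisors $j=s+1$ with multiplicity $\dim P^{r-2s}=\binom{4g}{r-2s}-\binom{4g}{r-2s-2}$, which matches $\varepsilon_r(j)$ after reindexing $s=j-1$.
\end{enumerate}
The delicate point is that integrality of the primitive decomposition fails in general. I would instead cite Lee--Packer \cite[Thm.~2.1]{LP}, where exactly this Smith normal form is computed via the Heisenberg-lattice description of $\pi_1$ recorded in \eqref{eq:cocycle}.

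Finally, the range $2g\le r\le 4g-2$ follows from Poincaré duality on $T^{4g}$. The duality pairing identifies $L_r$ with the transpose of $L_{4g-2-r}$, and transposition preserves elementary divisors, which gives $\varepsilon_r(j)=\varepsilon_{4g-2-r}(j)$. Torsion in $\ker(\lambda L_{k-1})$ is absent since it is free, completing the decomposition.
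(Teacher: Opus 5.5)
Your proposal is correct and follows essentially the paper's route. The paper imports the statement wholesale from \cite[Thm.~2.1]{LP}, and the framework it uses around that citation (a Gysin sequence with connecting map $\lambda L$, and hard Lefschetz over $\Q$ for the ranks, as in the proof of Theorem~\ref{thm:cohomology-K}) is exactly your reduction, with the only nontrivial input, namely the Smith normal form of $L_r$ for $r\le 2g-1$, taken from Lee--Packer in both cases. Your transpose argument for $2g\le r\le 4g-2$ and your remark that $\lambda$ merely rescales the elementary divisors of the level-independent $L_r$ are valid supplements; the primitive-decomposition heuristic of your step~3 should be dropped, since, as you note yourself, that decomposition is not integral.
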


The $K$-theory of $M_{|\lambda|}$ has been computed by Aslaksen--Lee--Packer
\cite{ALP}, but only for $\lambda =1$. Their strategy is to run the $K$-theoretic Gysin sequence of the circle bundle
$M_{|\lambda|}\to T^{4g}$ in parallel with its cohomological counterpart. In the cohomological Gysin sequence
the connecting map is the Lefschetz operator $\lambda L$, and passing  to
$K$-theory replaces the integral Euler class $\lambda\omega$ by the $K$-theoretic Euler class
$\Theta_\lambda:=1-e^{\lambda\omega}$. Since $K^\bullet(T^{4g})$ is torsion-free and the  Chern character
restricts to an isomorphism of integral lattices
$K^\bullet(T^{4g})\xrightarrow{\ \cong\ }H^\bullet(T^{4g};\Z)\cong\Lambda^\bullet V$, the entire
computation takes place in $\Lambda^\bullet V$ and the $K$-groups of $M_{|\lambda|}$ are read off
from the kernel and
cokernel of multiplication by $\Theta_\lambda$. The key point is to compare the
connecting map $\Theta_\lambda$ with the cohomological one $\lambda L$
{\em over} $\Z$, then $\coker(\lambda L)$, whose torsion is the cohomology torsion of
Theorem~\ref{thm:cohomology},  controls the torsion of the $K$-groups as well.

Aslaksen--Lee--Packer establish such an integral comparison at $\lambda=1$ by decomposing
$\Lambda^\bullet V$ into subspaces on which the Lefschetz operator acts, and exhibiting on each
block a unimodular change of basis conjugating the $K$-theoretic Euler class $e^{\omega}-1$ to
the Lefschetz operator $L$ \cite[Prop.~2.3, Prop.~2.6]{ALP}. The construction proceeds by
induction over the blocks and relies on an operator identity of the form $[X,L]=L$ that is 
homogeneous of degree one in $L$ \cite[Lem.~2.5]{ALP}. This identity is scale-invariant and for
that reason cannot relate $\lambda L$ to $e^{\lambda\omega}-1$ when $\lambda\ne1$. Namely, since
$e^{\lambda\omega}-1=\lambda L+\tfrac{\lambda^2}{2}L^2+\cdots$, the ratio of the $L^2$-coefficient
to the $L$-coefficient is $\lambda/2$, which varies with the level, whereas a conjugation built
from $[X,L]=L$ can only reproduce the fixed value $\tfrac12$ of the case $\lambda=1$.

To remove this restriction, we construct a single $\Z$-linear automorphism $\Psi_\lambda$ of the whole
exterior algebra $\Lambda^\bullet V$ intertwining $\lambda L$ with $e^{\lambda\omega}-1$. In place of the blockwise
conjugation at $\lambda=1$ \cite[Prop.~2.6]{ALP}, which it recovers in an equivalent form, one
obtains a single intertwiner valid at every level. 

\begin{lemma}\label{lem:conj}
Let $\lambda\in\ZZ$.  There is a $\Z$-linear automorphism
$\Psi=\Psi_\lambda$ of $\Lambda^\bullet V$ that preserves $\Lambda^{\mathrm{ev}}V$ and
$\Lambda^{\mathrm{odd}}V$, preserves the degree filtration and induces the identity on its
associated graded, and satisfies
\[
\Psi\circ(\lambda L)=\bigl(e^{\lambda\omega}-1\bigr)\circ\Psi .
\]
\end{lemma}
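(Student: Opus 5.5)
The plan is to exploit the fact that $\omega=\sum_{j=1}^{2g}\omega_j$ with $\omega_j:=e_j\wedge f_j$ a sum of commuting square-zero operators, rather than to conjugate the single operator $L$ inside its $\mathfrak{sl}_2$-structure as in \cite{ALP}. First, a rational version of the statement is immediate: $e^{\lambda\omega}-1=\lambda L\cdot\varphi(\lambda L)$ with $\varphi(x)=(e^x-1)/x$, so $\Psi=\varphi(\lambda L)$ works over $\Q$. It is not integral, however, since the coefficients $1/(k+1)!$ are not compensated by the divided powers $\omega^k/k!$, which are the only integral ones. The integral construction therefore has to give up commuting with $L$ and use the multiplicative structure $e^{\lambda\omega}=\prod_j(1+\lambda\omega_j)$, which holds integrally because $\omega_j^2=0$ and the $\omega_j$ commute.

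Next I would decompose $\Lambda^\bullet V\cong\bigotimes_{j=1}^{2g}W_j$, with $W_j=\Lambda^\bullet(\Z e_j\oplus\Z f_j)$. On $W_j$, multiplication by $\omega_j$ sends $1\mapsto\omega_j$ and kills $e_j,f_j,\omega_j$. Hence $\Lambda^\bullet V$ splits, compatibly with parity and with the degree filtration, as a direct sum over subsets $T\subset\{1,\dots,2g\}$ of a free $\Z$-module of ``spectator'' factors (a choice of $e_j$ or $f_j$ for $j\notin T$) tensored with the truncated polynomial ring $R_T=\Z[\omega_j : j\in T]/(\omega_j^2)$. Every operator in sight acts only on the $R_T$ factor. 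It therefore suffices to construct, for each $n=|T|$, a $\Z$-linear automorphism $\Psi_T$ of $R_T$ that is unitriangular with respect to the monomial grading and satisfies
\[
\Psi_T\circ\lambda\sigma_1=\Bigl(\textstyle\prod_{j\in T}(1+\lambda\omega_j)-1\Bigr)\circ\Psi_T,\qquad \sigma_1=\textstyle\sum_{j\in T}\omega_j .
\]
Since $\Psi_T$ preserves degrees modulo higher terms and the grading of $R_T$ is even, parity is automatically preserved.

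I would build $\Psi_T$ by induction on $n$, adding one factor at a time. The base case $n=1$ is trivial, because $\lambda\omega_1=(1+\lambda\omega_1)-1$. For the inductive step, write $R_{T\cup\{k\}}=R_T\otimes R_{\{k\}}$, $M=\prod_{j\in T}(1+\lambda\omega_j)-1$ and $c=\lambda\omega_k$. The tensor product $\Psi_T\otimes\mathrm{id}$ intertwines $\lambda\sigma_1$ with $M+c$. It then remains to intertwine $M+c$ with $M+c+Mc$ by a unitriangular integral map commuting appropriately, using $c^2=0$. The model computation with two square-zero operators $a,b$ on $\Z[a,b]/(a^2,b^2)$ already indicates the shape: $1\mapsto1$, $a\mapsto a+ab$, $b\mapsto b$, $ab\mapsto ab$ intertwines $a+b$ with $a+b+ab$. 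The natural guess in general is
\[
\Phi(x\otimes1)=x\otimes1+(\text{correction})\otimes c,\qquad \Phi(x\otimes c)=x\otimes c,
\]
with the correction chosen so that $\Phi\circ(M+c)=(M+c+Mc)\circ\Phi$. Unwinding this gives a linear equation on $R_T$ involving $M$ alone, which I would try to solve with correction $=$ an integral polynomial in $M$ applied to a preimage. The candidate suggested by the model case is correction$(x)=x$ restricted to the part of $x$ in the image of $M$, or equivalently a choice of $M$-splitting. The composite $\Psi_{T\cup\{k\}}=\Phi\circ(\Psi_T\otimes\mathrm{id})$ is unitriangular, and so is the full $\Psi$.

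The main obstacle is exactly this inductive step over $\Z$. Solving for the correction requires dividing by $M$ on its image, which is integral only if $\ker M$ and $\operatorname{im}M$ behave well (saturated) on the lattice $R_T$. Because $M=\lambda(\sigma_1+\lambda\sigma_2+\cdots)$ is divisible by $\lambda$, I expect the right normalisation to be to factor out $\lambda$ and work with $M/\lambda$, which is unitriangular-equivalent to $\sigma_1$. If the naive ansatz fails, the fallback is to choose the $e_j,f_j$ ordering cleverly and define $\Phi$ on the monomial basis $\omega_I$ explicitly, by $\omega_I\mapsto\sum_{J\supseteq I}\lambda^{|J|-|I|}(\text{integer coefficient})\,\omega_J$, determining the integer coefficients recursively from the intertwining equation and checking integrality combinatorially.
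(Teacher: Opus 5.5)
Your reduction to the blocks $R_T$ is correct and is the same as the paper's pair-free decomposition $\Lambda^\bullet V=\bigoplus_S\mu_S\wedge R$ with $\mu_S\wedge R\cong R/I_S\cong R_T$. The gap is the inductive step. That step is the entire content of the lemma, and you leave it unresolved.

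With your ansatz $\Phi(x\otimes 1)=x\otimes 1+X(x)\otimes c$ and $\Phi(x\otimes c)=x\otimes c$, the intertwining condition is exactly $X(Mx)-M\,X(x)=Mx$, that is, $[X,M]=M$ on $R_T$. Your candidate takes $X=P$, a projection onto (the saturation of) $\operatorname{im}M$. This gives $[P,M]=M-MP$, so it works only if $M^2=0$. But $M^2=2\lambda^2\sum_{i<j}\omega_i\omega_j+\cdots\neq 0$ as soon as $|T|\ge 2$ and $\lambda\neq 0$. For example, with $T=\{1,2\}$ and $\lambda=1$ you get $M=\omega_1+\omega_2+\omega_1\omega_2$ and $M^2=2\omega_1\omega_2$. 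So the two-variable model does not generalise the way you guessed. Your fallback of solving recursively for integer coefficients is a plan, not an argument.

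Your rational warm-up is also wrong. Since $\varphi(\lambda L)$ commutes with $L$, you have $\varphi(\lambda L)\circ\lambda L=e^{\lambda\omega}-1$, whereas $(e^{\lambda\omega}-1)\circ\varphi(\lambda L)=\lambda L\,\varphi(\lambda L)^2$. Conjugating by any function of $L$ leaves $\lambda L$ unchanged.

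The missing idea is to make the intertwiner multiplicative and to put the correction on the new generator rather than on $R_T$. Keep $R_T$ fixed and send $\omega_k\mapsto\omega_k(1+M)$. This defines a ring endomorphism of $R_T\otimes R_{\{k\}}$ because $(\omega_k(1+M))^2=0$, and it is unitriangular, hence an automorphism over $\mathbb{Z}$. A ring map automatically carries multiplication by $M+c$ to multiplication by $M+c(1+M)=M+c+Mc$. No division by $M$ and no saturation question ever arises.

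Unrolling this induction gives exactly the paper's $\psi(x_i)=x_i\prod_{j<i}(1+\lambda x_j)$. For this map, $\psi(\lambda\sum_i x_i)=\prod_i(1+\lambda x_i)-1$ by telescoping. The intertwining on $\Lambda^\bullet V$ then follows blockwise from $\Psi(h\cdot v)=\psi(h)\cdot\Psi(v)$.

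If you prefer to keep your own ansatz, you can rescue it with the inductive hypothesis. Set $X:=\Psi_T N\Psi_T^{-1}$, where $N(\omega_C)=|C|\,\omega_C$. This $X$ is integral and preserves the length filtration, so $\Phi$ stays unitriangular. It also satisfies $[X,M]=\Psi_T[N,\lambda\sigma_1]\Psi_T^{-1}=\Psi_T\,\lambda\sigma_1\,\Psi_T^{-1}=M$.
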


\begin{proof} 
For $1\le i\le 2g$ write $x_i:=e_i\wedge f_i\in\Lambda^2V$,
so that $\omega=\sum_{i=1}^{2g}x_i$. Because each $x_i$ is a wedge of two
distinct generators, the $x_i$ commute and
satisfy $x_i^2=0$. Let
\[
  R:=\mathbb Z[x_1,\dots,x_{2g}]\big/\bigl(x_1^2,\dots,x_{2g}^2\bigr)
\]
act on $\Lambda^\bullet V$ by wedge multiplication. The operator $\lambda L$
is precisely multiplication by the element
$\lambda\omega=\lambda\sum_i x_i\in R$, and since the $x_i$ square to zero the
exponential is a finite product,
\begin{equation}\label{eq:exp}
  e^{\lambda\omega}=\prod_{i=1}^{2g}e^{\lambda x_i}
   =\prod_{i=1}^{2g}\bigl(1+\lambda x_i\bigr)\in R,
\end{equation}
the middle equality using that the $x_i$ commute and
$e^{\lambda x_i}=1+\lambda x_i$ because $x_i^2=0$. Thus the whole comparison
of $\lambda L$ with $e^{\lambda\omega}-1$ takes place inside the commutative
square-zero algebra $R$, and reduces to producing a ring automorphism of $R$
carrying $\lambda\omega$ to $e^{\lambda\omega}-1$.

The ring $R$ is free over $\mathbb Z$ on the square-free monomials
$x_C:=\prod_{i\in C}x_i$, $C\subseteq\{1,\dots,2g\}$. Define
\[
  \psi(x_i):=x_i\prod_{j<i}\bigl(1+\lambda x_j\bigr).
\]
Since $x_i\cdot\psi(x_i)=x_i^2\prod_{j<i}(1+\lambda x_j)=0$, the elements
$\psi(x_i)$ again satisfy $\psi(x_i)^2=0$, so by the universal property of
$R$ this assignment extends to a ring endomorphism $\psi\colon R\to R$.

For every $1\le m\le 2g$ we have the elementary factorisation, 
\begin{equation}\label{eq:peel}
  \prod_{i\le m}\bigl(1+\lambda x_i\bigr)-1
  =\Bigl[\prod_{i\le m-1}\bigl(1+\lambda x_i\bigr)-1\Bigr]
   +\lambda x_m\prod_{j\le m-1}\bigl(1+\lambda x_j\bigr).
\end{equation}

Set $P_m:=\prod_{i\le m}\bigl(1+\lambda x_i\bigr)-1$, with the convention $P_0=\prod_{\varnothing}-1=0$. Then \eqref{eq:peel} reads
$P_m=P_{m-1}+\lambda\,\psi(x_m)$, and because $\psi$ is $\mathbb Z$-linear and a ring homomorphism, we get
\be \label{eq:conj}
\psi\bigl(\lambda\sum_i x_i\bigr)=\sum_i\lambda\,\psi(x_i)
   =\sum_{m=1}^{2g}\bigl(P_m-P_{m-1}\bigr)
   =P_{2g}-P_0
   =\prod_{i=1}^{2g}\bigl(1+\lambda x_i\bigr)-1 = e^{\lambda\omega}-1 .
\ee
Grade $R$ by monomial length. From the definition,
$\psi(x_C)=x_C+(\text{strictly longer monomials})$, so in the monomial basis
$\psi=\mathrm{id}+N$ with $N$ strictly length-raising, hence nilpotent.
Therefore $\psi$ is unitriangular and hence a $\mathbb Z$-linear
automorphism of $R$.

We can lift $\psi$ uniformly from $R$ to $\Lambda^\bullet V$ via the pair-free
decomposition \cite[\S1]{LP}. The monomial basis of $\Lambda^\bullet V$ consists, up to sign, of the
elements $\mu_S\wedge x_C$, where $\mu_S$ is a pair-free monomial (at
most one factor from each pair $\{e_i,f_i\}$),
$\operatorname{supp}S\subseteq\{1,\dots,2g\}$ records the occupied pairs, and
$C\subseteq\{1,\dots,2g\}\setminus\operatorname{supp}S$. Since
$x_i\wedge\mu_S=0$ for $i\in\operatorname{supp}S$, one
has an $R$-module decomposition
\[
  \Lambda^\bullet V=\bigoplus_S \mu_S\wedge R,
  \qquad
  \mu_S\wedge R\cong R/I_S,\quad I_S:=\bigl(x_i:i\in\operatorname{supp}S\bigr),
\]
the isomorphism being $r+I_S\mapsto\mu_S\wedge r$. Under $L=\omega\wedge(-)$
each block $R/I_S$ is carried to itself by multiplication by $\omega$.
For $i\in\operatorname{supp}S$ we have $\psi(x_i)\in(x_i)\subseteq I_S$, so
$\psi$ preserves $I_S$ and induces a unitriangular ring automorphism
$\psi_S$ of $R/I_S$. Define
\[
  \Psi\bigl(\mu_S\wedge r\bigr):=\mu_S\wedge\psi_S(r),
\]
extended additively over the blocks. Being blockwise unitriangular, $\Psi$
is a $\mathbb Z$-linear automorphism of $\Lambda^\bullet V$. Moreover, for
$h\in R$,
\begin{equation}\label{eq:intertwine}
  \Psi\bigl(h\cdot(\mu_S\wedge r)\bigr)
   =\mu_S\wedge\psi_S(hr)
   =\mu_S\wedge\psi_S(h)\,\psi_S(r)
   =\psi(h)\cdot\Psi\bigl(\mu_S\wedge r\bigr),
\end{equation}
where the middle equality uses that $\psi_S$ is a ring homomorphism and the
last uses $\psi_S(h)=\psi(h)\bmod I_S$.
Since $\psi_S(r)-r$ raises monomial length and multiplication by $x_C$
raises the exterior degree by $2|C|$, the operator $\Psi-\mathrm{id}$
strictly raises the exterior degree by even amounts. Consequently
$\Psi$ preserves the parity splitting
$\Lambda^\bullet V=\Lambda^{\mathrm{ev}}V\oplus\Lambda^{\mathrm{odd}}V$,
preserves the degree filtration, and induces the identity on the associated
graded. Finally, combining \eqref{eq:intertwine} (with
$h=\lambda\sum_i x_i$) and \eqref{eq:conj}, for every
$v\in\Lambda^\bullet V$
\[
  \Psi(\lambda L\,v)
   =\Psi\Bigl(\bigl(\lambda\textstyle\sum_i x_i\bigr)\cdot v\Bigr)
   =\psi\Bigl(\lambda\textstyle\sum_i x_i\Bigr)\cdot\Psi(v)
   =\bigl(e^{\lambda\omega}-1\bigr)\cdot\Psi(v).
\]
\end{proof}
We can now state the $K$-theory of $M_{|\lambda|}$ for all levels $\lambda$. 
\begin{theorem}\label{thm:cohomology-K}
Let $\lambda\in\ZZ\setminus\{0\}$ and $M_{|\lambda|}=\mu^{-1}(0)/\GG_\lambda$. 
\[
K^{0}(M_{|\lambda|})\ \cong\ \Z^{\binom{4g+1}{2g}}\oplus \bigoplus_{\substack{k\ \mathrm{even}\\ j\ge1}} \Z_{|\lambda|j}^{\,\varepsilon_{k-2}(j)},
\qquad
K^{1}(M_{|\lambda|})\ \cong\ \Z^{\binom{4g+1}{2g}}\oplus \bigoplus_{\substack{k\ \mathrm{odd}\\ j\ge1}} \Z_{|\lambda|j}^{\,\varepsilon_{k-2}(j)},
\]
where $\varepsilon_r(j)$ is the torsion multiplicity \eqref{eq:eps-closed}.
\end{theorem}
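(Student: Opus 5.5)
We will run the $K$-theoretic Gysin sequence of the circle bundle $\pi\colon M_{|\lambda|}\to T^{4g}$, in the form used by Aslaksen--Lee--Packer. For a principal circle bundle with line bundle $\mathcal E$ of first Chern class $\lambda\omega$, there is a six-term exact sequence
\[
K^0(T^{4g})\xrightarrow{\,1-[\mathcal E]\,}K^0(T^{4g})\xrightarrow{\pi^*}K^0(M_{|\lambda|})\to K^1(T^{4g})\xrightarrow{\,1-[\mathcal E]\,}K^1(T^{4g})\xrightarrow{\pi^*}K^1(M_{|\lambda|})\to K^0(T^{4g}).
\]
It arises from the Thom isomorphism for the associated disc bundle. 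It yields short exact sequences
\[
0\to\coker(\Theta_\lambda|_{K^i})\to K^i(M_{|\lambda|})\to\ker(\Theta_\lambda|_{K^{i+1}})\to0 .
\]
Since $K^\bullet(T^{4g})$ is torsion free, the Chern character identifies $K^0(T^{4g})$ and $K^1(T^{4g})$ integrally with $\Lambda^{\mathrm{ev}}V$ and $\Lambda^{\mathrm{odd}}V$. Under this identification multiplication by $[\mathcal E]$ becomes $e^{\lambda\omega}$, because the Chern character is multiplicative and $\mathrm{ch}(\mathcal E)=e^{c_1}$. Thus $\Theta_\lambda$ corresponds, up to sign, to multiplication by $e^{\lambda\omega}-1$ on each parity summand.

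Next I apply Lemma~\ref{lem:conj}. Since $\Psi_\lambda$ is a $\Z$-linear automorphism preserving parity, it identifies $\ker$ and $\coker$ of $(e^{\lambda\omega}-1)$ on $\Lambda^{\mathrm{ev}}V$ (resp.\ $\Lambda^{\mathrm{odd}}V$) with those of $\lambda L$ on the same summand. Because $\lambda L$ is homogeneous of degree $2$, these split as $\bigoplus_{k\ \mathrm{even}}\coker(\lambda L_{k-2})$ and $\bigoplus_{k}\ker(\lambda L_{k-1})$ over the appropriate parities. Therefore $K^0(M_{|\lambda|})$ is an extension of $\bigoplus_{k\,\mathrm{odd}}\ker(\lambda L_{k-1})$ by $\bigoplus_{k\,\mathrm{even}}\coker(\lambda L_{k-2})$. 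The indexing is checked by matching with the cohomological Gysin sequence, which is how Theorem~\ref{thm:cohomology} arises. Similarly, $K^1$ is such an extension with the parities swapped, shifted by one.

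The extension problem is easy here. Each $\ker(\lambda L_{k-1})$ is a subgroup of a free group, hence free, so every such sequence splits. Comparing with Theorem~\ref{thm:cohomology}, we get $K^n(M_{|\lambda|})\cong\bigoplus_{k\equiv n}H^k(M_{|\lambda|};\Z)$. The torsion is then $\bigoplus_{k\equiv n,\,j\ge1}\Z_{|\lambda|j}^{\varepsilon_{k-2}(j)}$ as claimed. For the free rank I will sum $b_k$ over $k\equiv n \pmod 2$. Both sums telescope; for instance, $\sum_{k\le 2g,\,k\equiv n}\bigl(\binom{4g}{k}-\binom{4g}{k-2}\bigr)$ collapses to a single binomial coefficient, and the upper range does likewise. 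Combining the two, Pascal's rule gives $\binom{4g}{2g}+\binom{4g}{2g-1}=\binom{4g+1}{2g}$ for either parity. Alternatively, the rank equals half the Euler-characteristic-free total Betti sum, which rationally is also the rank from $\Lambda^\bullet V\otimes\Q$.

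The main obstacle is the identification of the $K$-theoretic connecting map with $e^{\lambda\omega}-1$ \emph{integrally}. The Chern character lands integrally in $\Lambda^\bullet V$ for tori, since they have torsion-free cohomology generated by degree-one classes. I will justify this by noting that $K^\bullet(T^{4g})$ is the exterior algebra on classes whose Chern characters are the $e_i,f_i$. Products of the line bundles on the $\mathbb T^2$ factors realise $e^{x_i}$ with $x_i$ integral, so $\mathrm{ch}$ is a ring isomorphism onto $\Lambda^\bullet V$. After that, Lemma~\ref{lem:conj} does the essential work. The remaining care is bookkeeping of degrees and signs, which is harmless since $\ker$ and $\coker$ are insensitive to sign.
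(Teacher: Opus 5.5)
Your proposal is correct and follows the paper's proof essentially step for step: the $K$-theoretic Gysin sequence, the integral Chern character on $T^{4g}$, Lemma~\ref{lem:conj} to replace $1-e^{\lambda\omega}$ by $\lambda L$ parity by parity, splitting because the kernel is free, and a telescoping rank count (you sum the $b_k$ rather than using the Hard Lefschetz ranks of $L_r$ directly, which is equivalent). One indexing slip to fix: by your own short exact sequence the kernel term in $K^0$ is $\ker(\lambda L|_{\Lambda^{\mathrm{odd}}V})=\bigoplus_{k\,\mathrm{even}}\ker(\lambda L_{k-1})$, not $\bigoplus_{k\,\mathrm{odd}}\ker(\lambda L_{k-1})$ (taken literally, the latter would give rank $2\binom{4g}{2g}$), but your final identification $K^n(M_{|\lambda|})\cong\bigoplus_{k\equiv n}H^k(M_{|\lambda|};\Z)$ is the correct one.
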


\begin{proof}
The $K$-theoretic Gysin sequence for the circle bundle $M_{|\lambda|}\to T^{4g}$ reads
\[
\cdots\to K^{i}(T^{4g})\xrightarrow{\,\cdot\,(1-e^{\lambda\omega})\,}K^{i}(T^{4g})
\xrightarrow{\pi^*}K^{i}(M_{|\lambda|})\xrightarrow{\pi_!}K^{i+1}(T^{4g})\to\cdots.
\]
Transporting the
computation to $\Lambda^\bullet V$,  the connecting map  becomes  wedge product by
$\Theta_\lambda=1-e^{\lambda\omega}$. By Lemma~\ref{lem:conj} there is a parity and
filtration preserving $\Z$-linear automorphism $\Psi$ of $\Lambda^\bullet V$ with
\begin{equation}\label{eq:theta}
\Theta_\lambda=-\,\Psi\circ(\lambda L)\circ\Psi^{-1},
\end{equation}
so on each parity $\ker\Theta_\lambda=\Psi(\ker \lambda L)\cong\ker(\lambda L)$ and $\im\Theta_\lambda=\Psi(\im \lambda L)$, and $\Psi$ induces
an isomorphism $\coker\Theta_\lambda\cong\coker(\lambda L)$ compatible with the degree
filtration. Since $\ker\Theta_\lambda\cong\ker(\lambda L)$ is free, being the kernel of a
homomorphism of free abelian groups, splitting the Gysin sequence by parity gives
\[
K^{0}(M_{|\lambda|})\cong\coker\big(\Theta_\lambda|_{\Lambda^{\mathrm{ev}}}\big)\oplus\ker\big(\Theta_\lambda|_{\Lambda^{\mathrm{odd}}}\big),
\quad
K^{1}(M_{|\lambda|})\cong\coker\big(\Theta_\lambda|_{\Lambda^{\mathrm{odd}}}\big)\oplus\ker\big(\Theta_\lambda|_{\Lambda^{\mathrm{ev}}}\big).
\]
Using \eqref{eq:theta} to replace $\Theta_\lambda$ by $\lambda L$, the degree $r$ summand of
$\coker(\lambda L)$ is $\coker(\lambda L_r)$, whose torsion is $\bigoplus_j\Z_{|\lambda|j}^{\varepsilon_r(j)}$
by Theorem~\ref{thm:cohomology}. Summing over $r$ of each
parity identifies $\Tors K^{i}$ with $\bigoplus_{k\equiv i}\Tors H^k(M_{|\lambda|})$. 

For the free part, by the Hard Lefschetz theorem over $\Q$, $L_r$ is injective for $r<2g$ and surjective for
$r\ge 2g$. Since $\Lambda^\bullet V$ is torsion-free, the ranks are the same over $\Z$, 
\begin{equation}\label{eq:rho}
\rank L_r=
\begin{cases}
\binom{4g}{r}, & 0\le r<2g,\\[2pt]
\binom{4g}{r+2}, & 2g\le r\le 4g-2.
\end{cases}
\end{equation}
These contributions telescope and using
$\binom{4g}{k}=\binom{4g}{4g-k}$ over a fixed parity, we get
\[
\rank K^0(M_{|\lambda|})=\rank K^1(M_{|\lambda|})=\binom{4g}{2g}+\binom{4g}{2g-1}=\binom{4g+1}{2g}.
\]
\end{proof}


\section{Interaction with T-duality} \label{RelT}

\subsection{Review of T-duality for principal circle bundles in background flux} \label{rev}

In \cite{BEM04a, BEM04b}, spacetime $Z$ was compactified in one direction.
More precisely, $Z$
is a principal $\bbT$-bundle over $X$

\[
\begin{CD}
\bbT @>>> Z \\
&& @V\pi VV \\
&& X \end{CD}
\]
 classified up to isomorphism by its first Chern class
{ $c_1(Z)\in H^2(X,\ZZ)$}. Assume that spacetime $Z$ is endowed with an $H$-flux which is
a representative in the
degree 3 Deligne cohomology of $Z$, that is
$H\in\Omega^3(Z)$ with integral periods (for simplicity, we drop factors of $\frac{1}{2\pi i}$),
together with
the following data. Consider a local trivialization $U_\alpha \times \TT$ of $Z\to X$, where
$\{U_\alpha\}$ is a good cover of $X$. Let $H_\alpha = H\Big|_{ U_\alpha \times \TT}
= d B_\alpha$, where $B_\alpha \in \Omega^2(U_\alpha \times \TT)$ and finally, $B_\alpha -B_\beta = F_{\alpha\beta}
\in \Omega^1(U_{\alpha\beta} \times \TT)$.
 Then the choice of $H$-flux entails that we are given a local trivialization
 as above and locally defined 2-forms $B_\alpha$ on it, together with closed 2-forms $F_{\alpha\beta}$ defined on 
 double overlaps,  that is, $(H, B_\alpha, F_{\alpha\beta})$. Also the first Chern class
 of $Z\to X$
  is represented in integral cohomology by $(F, A_\alpha)$ where
$\{A_\alpha\}$ is a connection 1-form on $Z\to X$ and $F = dA_\alpha$ is the curvature 2-form of $\{A_\alpha\}$.

The {{T-dual}}  is another principal
$\bbT$-bundle over $M$, denoted by $\widehat Z$,
  {}
\[
\begin{CD}
\widehat \bbT @>>> \widehat Z \\
&& @V\widehat \pi VV     \\
&& X \end{CD}
\]
To define it, we see that $\pi_* (H_\alpha) = d \pi_*(B_\alpha) = d {\widehat A}_\alpha$,
 so that $\{{\widehat A}_\alpha\}$ is a connection 1-form whose curvature $ d {\widehat A}_\alpha = \widehat F_\alpha =  \pi_*(H_\alpha)$
 that is, $\widehat F = \pi_* H$. So let $\widehat Z$ denote the principal
$\bbT$-bundle over $M$ whose first Chern class is  $\,\, c_1(\widehat Z) = [\pi_* H, \pi_*(B_\alpha)] \in H^2(X; \ZZ) $.

The Gysin
sequence for $Z$ enables us to define a T-dual $H$-flux
$[\widehat H]\in H^3(\widehat Z,\ZZ)$, satisfying
\[
c_1(Z) = \widehat \pi_* \widehat H \,,
\]
 where $\pi_* $
and similarly $\widehat\pi_*$, denote the pushforward maps.
Note that $ \widehat H$ is not fixed by this data, since any integer
degree 3 cohomology class on $X$ that is pulled back to $\widehat Z$
also satisfies the requirements. However, $ \widehat H$ is
determined uniquely (up to cohomology) upon imposing
the condition $[H]=[\widehat H]$ on the correspondence space $Z\times_X \widehat Z$
as will be explained now.

The {\em correspondence space} (sometimes called the doubled space) is defined as
$$
Z\times_X  \widehat Z = \{(x, \widehat x) \in Z \times \widehat Z: \pi(x)=\widehat\pi(\widehat x)\}.
$$
Then we have the following commutative diagram,
\begin{equation*} \label{eqn:correspondence}
\xymatrix @=6pc @ur { (Z, [H]) \ar[d]_{\pi} &
(Z\times_X  \widehat Z, [H]=[\widehat H]) \ar[d]_{\widehat p} \ar[l]^{p} \\ X & (\widehat Z, [\widehat H])\ar[l]^{\widehat \pi}}
\end{equation*}
By requiring that
$$
p^*[H]={\widehat p}^*[\widehat H] \in H^3(Z\times_X  \widehat Z, \ZZ),
$$
determines $[\widehat H] \in H^3(  \widehat Z, \ZZ)$  uniquely, via an application of the Gysin sequence.
An alternate way to see this is is explained below.

Let $(H, B_\alpha, F_{\alpha\beta}, L_{\alpha\beta})$ denote a gerbe with connection on $Z$.
We also choose a connection 1-form $A$ on $Z$.
Then define 
\be \label{newconn} \widehat A_\alpha = -\iota_v B_\alpha\ee 
on the chart $U_\alpha$ and
the connection 1-form $\widehat A= \widehat A_\alpha +d\widehat\theta_\alpha$
on the chart $U_\alpha\times  \widehat \TT$, where $\theta_\alpha$ is the coordinate on $\widehat \TT$. 
In this way we get a T-dual circle bundle
$\widehat Z \to X$ with connection 1-form $\widehat A$.

Without loss of generality, we can assume that $H$ is $\TT$-invariant. Consider
\[
\Omega = H - A\wedge F_{\widehat A}
\]
where  $F_{\widehat A} = d {\widehat A}$ and $F_{A} = d {A}$ are the curvatures of $A$
and $\widehat A$ respectively. One checks that the contraction $i_v(\Omega)=0$ and
the Lie derivative $L_v(\Omega)=0$ so that $\Omega$ is a basic 3-form on $Z$, that is
$\Omega$ comes from the base $X$.

Setting
\be \label{split}
\widehat H = F_A\wedge {\widehat A} + \Omega
\ee
this defines the T-dual flux 3-form. One verifies that $\widehat H$ is a closed 3-form on $\widehat Z$.
It follows that on the correspondence space, one has as desired,
\[
\widehat H = H + d (A\wedge \widehat A ).
\]

Our next goal is to determine the T-dual curving or B-field.
The Buscher rules imply that on the open sets $U_\alpha \times \TT\times \widehat \TT$ of the
correspondence space $Z\times_X \widehat Z$, one has
\begin{equation} \label{buscher}
\widehat B_\alpha = B_\alpha + A\wedge \widehat A - d\theta_\alpha \wedge d\widehat \theta _\alpha\,,
\end{equation}
Note that
\[
\iota_v \widehat B_\alpha = \iota_v
\left( B_\alpha + A\wedge \widehat A - d\theta_\alpha \wedge d\widehat \theta _\alpha\right) =
-\widehat A_\alpha + \widehat A - d\widehat \theta_\alpha = 0
\]
so that $\widehat B_\alpha$ is indeed a 2-form on $\widehat Z$ and not just on the correspondence
space. Obviously, $d \widehat B_\alpha = \widehat H$. Following the descent equations one arrives at the complete
T-dual gerbe with connection, $(\widehat H, \widehat B_\alpha, \widehat F_{\alpha\beta}, \widehat L_{\alpha\beta})$.
cf. \cite{BMPR}.

The rules
for transforming the Ramond-Ramond (RR) fields can be encoded in the {\cite{BEM04a, BEM04b}} generalization of
{\em Hori's formula}
  {}
\begin{equation} \label{eqn:Hori}
T_*G =  \int_{\bbT} e^{ -A \wedge \widehat A }\ G \,,
\end{equation}
 where $G \in \Omega^\bullet(Z)^\bbT$ is the total RR fieldstrength,
\begin{center}
$G\in\Omega^{even}(Z)^\bbT \quad$ for {   { Type IIA}};\\
$G\in\Omega^{odd}(Z)^\bbT \quad$ for {   { Type IIB}},\\
\end{center}
and where the right hand side of equation \eqref{eqn:Hori} is an invariant differential form on $Z\times_X\widehat Z$, and
the integration is along the $\bbT$-fiber of $Z$.

Recall that the twisted cohomology
is defined as the cohomology of the complex
$$H^\bullet(Z, H) = H^\bullet(\Omega^\bullet(Z), d_H=d+ H\wedge).$$
By the identity \eqref{eqn:Hori}, $T_*$ maps $d_H$-closed forms $G$ to $d_{\widehat
H}$-closed forms $T_*G$.
 So T-duality $T_*$  induces a map on twisted cohomologies,
$$
T : H^\bullet(Z, H) \to H^{\bullet +1}(\widehat Z, \widehat H).
$$
Define the Riemannian metrics on $Z$ and $\widehat Z$ respectively by
\be \label{R}
h_{A, R}=\pi^*g_X+R^2\, A\odot A,\qquad \widehat h_{\widehat A, 1/R}=\widehat\pi^*g_X+1/{R^2} \,\widehat A\odot\widehat A.
\ee
where $h_X$ is a Riemannian metric on $X$.
 Then $h_{A, R}$ is $\TT$-invariant and the length of each circle fibre is $R$; $\widehat h_{\widehat A, 1/R}$
 is $\widehat\TT$-invariant and the length of each circle fibre is $1/R$.

 The following theorem summarizes the main consequence of T-duality for principal circle bundles in a background flux.

\begin{theorem}[T-duality isomorphism \cite{BEM04a,BEM04b}]\label{thm:T-duality}
In the notation above, and
with the above choices of Riemannian metrics and flux forms, the map \eqref{eqn:Hori}
$$
T\colon\Omega^{\overline k}(Z)^\TT\to\Omega^{\overline{k+1}}(\widehat Z)^{\widehat\TT},
$$
for $k=0,1$, (where $\overline k$ denotes the parity of $k$) are isometries, inducing isomorphisms on twisted
cohomology groups,
\[
T : H^\bullet(Z, H) \stackrel{\cong}{\longrightarrow} H^{\bullet +1}(\widehat Z, \widehat H).
\]
Therefore under T-duality one has the exchange,
\begin{center}
{$R \Longleftrightarrow 1/R$}\quad and \quad
{{   {background H-flux} $\Longleftrightarrow$   {Chern class}}}
\end{center}
Moreover there is also an isomorphism of twisted K-theories,
\[
T : K^\bullet(Z, H) \to K^{\bullet +1}(\widehat Z, \widehat H),
\]
such that the following diagram commutes,
\[
\xymatrix @=4pc 
{ K^\bullet(Z, H) \ar[d]_{Ch_H}  \ar[r]^{T} & K^{\bullet +1}(\widehat Z, \widehat H)  \ar[d]^{Ch_{\widehat H}}   
\\ H^\bullet(Z, H)  \ar[r]_{T}& H^{\bullet +1}(\widehat Z, \widehat H)}
\]
\end{theorem}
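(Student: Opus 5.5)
The plan is to reduce everything to the fibrewise decomposition of $\TT$-invariant forms and to check the three assertions in turn: isometry, intertwining of twisted differentials, and invertibility. The $K$-theoretic statement will then be deduced by a Mayer--Vietoris/naturality argument, in the spirit of \cite{BEM04a,BEM04b}.

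\emph{Step 1: the Hori map in components.} Every $G\in\Omega^\bullet(Z)^\TT$ can be written uniquely as $G=G_0+A\wedge G_1$ with $G_0,G_1$ basic, namely $G_1=\iota_vG$ and $G_0=G-A\wedge G_1$. Since $(A\wedge\widehat A)^2=0$, we have $e^{-A\wedge\widehat A}=1-A\wedge\widehat A$. Integrating along the $\TT$-fibre, which amounts to extracting the coefficient of $A$, gives
\[
T_*G=G_1-\widehat A\wedge G_0
\]
up to a global sign fixed by the orientation convention. This is manifestly $\widehat\TT$-invariant on $\widehat Z$, and it exchanges the horizontal and vertical components.

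\emph{Step 2: isometry.} For $h_{A,R}$, the pointwise norm is $|G|^2=|G_0|^2+R^{-2}|G_1|^2$, where the norms on the right are taken for $h_X$, and the volume form is $R\,A\wedge\pi^*\dvol_{h_X}$. Integrating over the fibre therefore gives
\[
\|G\|^2=R\int_X\bigl(|G_0|^2+R^{-2}|G_1|^2\bigr).
\]
For $\widehat h_{\widehat A,1/R}$, the same computation applied to $T_*G$, whose components are $(G_1,-G_0)$, gives $R^{-1}\int_X\bigl(|G_1|^2+R^{2}|G_0|^2\bigr)$. The two expressions agree, and this is exactly where the exchange $R\leftrightarrow1/R$ is forced.

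\emph{Step 3: chain map.} Next I verify that $T_*\circ d_H=-\,d_{\widehat H}\circ T_*$, or the corresponding identity with the sign absorbed into the convention. For this I use $H=\Omega+A\wedge F_{\widehat A}$ and $\widehat H=\Omega+F_A\wedge\widehat A$ from \eqref{split}, with $\Omega$ basic. I also use $dA=F_A$, $d\widehat A=F_{\widehat A}$, and $dG=dG_0+F_A\wedge G_1-A\wedge dG_1$. The identity then reduces to comparing the basic parts and the $\widehat A$-parts, which is a short bookkeeping exercise. Equivalently, and more conceptually, on the correspondence space one uses $\widehat H=H+d(A\wedge\widehat A)$, so that $e^{-A\wedge\widehat A}$ intertwines $d_H$ and $d_{\widehat H}$, together with the fact that fibre integration commutes with $d$ up to sign. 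Hence $T_*$ descends to a map $H^\bullet(Z,H)\to H^{\bullet+1}(\widehat Z,\widehat H)$.

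\emph{Step 4: invertibility.} By the symmetry of the construction, $(\widehat Z,\widehat H)$ has T-dual $(Z,H)$. Applying the dual Hori map to $T_*G=(G_1,-G_0)$ returns $(G_0,G_1)$ up to an overall sign, so $\widehat T_*\circ T_*=\pm\mathrm{id}$ already at the level of invariant forms. Combined with the standard fact that averaging over $\TT$ induces an isomorphism on twisted cohomology (the group $\TT$ is connected), this yields the isomorphism on twisted cohomology.

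\emph{Step 5: $K$-theory, the main obstacle.} Define $T=\widehat p_!\circ(\text{iso of twists})\circ p^*$ on the correspondence space, using the isomorphism of gerbes $p^*\cG_H\cong\widehat p^*\cG_{\widehat H}$ induced by the trivialisation $A\wedge\widehat A$. Commutativity of the square with $Ch_H$ and $Ch_{\widehat H}$ follows from the twisted Riemann--Roch formula for $\widehat p_!$; the Todd class is trivial because the fibre is a circle with trivial tangent bundle. Bijectivity is the hard step. I would first check it when $X$ is a point or a contractible chart, where everything reduces to $K^\bullet(\TT)\cong K^{\bullet+1}(\widehat\TT)$. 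I would then glue over a good cover using naturality of $T$ with respect to restriction and the Mayer--Vietoris sequences, concluding by the five lemma and induction on the number of open sets.
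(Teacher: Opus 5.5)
The paper does not prove this theorem. It is quoted from \cite{BEM04a,BEM04b} as background, so there is no internal argument to compare with. Your sketch reconstructs the standard proof from those references, with the Mayer--Vietoris bijectivity argument of \cite{BunkeSchick} for $K$-theory, and the explicit computations check out. With $T_*G=G_1-\widehat A\wedge G_0$ one gets $\|G\|^2=R\int_X|G_0|^2+R^{-1}\int_X|G_1|^2=\|T_*G\|^2$. Expanding with $H=\Omega+A\wedge F_{\widehat A}$ and $\widehat H=\Omega+F_A\wedge\widehat A$ gives exactly $T_*\circ d_H=-d_{\widehat H}\circ T_*$. Finally, $(G_0,G_1)\mapsto(G_1,-G_0)$ is visibly invertible on invariant forms.

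Two steps need a better reason than the one you give.

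In Step~4, connectedness of $\TT$ does not by itself show that invariant forms compute $H^\bullet(Z,H)$. The Cartan formula becomes $d_H\iota_v+\iota_vd_H=\mathcal L_v+(\iota_vH)\wedge$, and $\iota_vH=\pi^*F_{\widehat A}$ is closed but generally not exact on $Z$, since it represents $\pi^*c_1(\widehat Z)$. So the usual homotopy does not close up. The argument does work mode by mode. The operator $P:=\mathcal L_v+(\iota_vH)\wedge$ commutes with $d_H$, because $H$ is invariant and $d\iota_vH=0$. On each non-zero Fourier mode $P$ is invertible, since $\mathcal L_v$ is invertible there and $(\iota_vH)\wedge$ is nilpotent and commutes with it. Hence $\iota_v\circ P^{-1}$ contracts the non-invariant part. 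Alternatively, compare the degree-filtration spectral sequences, whose $E_1$-pages are untwisted de Rham cohomology.

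In Step~5, a $2$-form does not by itself specify an isomorphism $p^*\mathcal G_H\cong\widehat p^*\mathcal G_{\widehat H}$; such isomorphisms form a torsor under line bundles on $Z\times_X\widehat Z$. What your local step actually uses is that the chosen isomorphism restricts on each fibre $\TT\times\widehat\TT$ to the Poincar\'e bundle. In the Buscher data this is the role of the term $d\theta_\alpha\wedge d\widehat\theta_\alpha$ in \eqref{buscher}. With that choice, over a chart $T$ is the Fourier--Mukai transform $K^\bullet(\TT)\to K^{\bullet+1}(\widehat\TT)$. With a topologically trivial fibrewise kernel, $\widehat p_!p^*$ factors through $K^\bullet(\mathrm{pt})$ and is not an isomorphism.

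The five-lemma induction also needs two further ingredients: $T$ must commute, up to sign, with the Mayer--Vietoris connecting maps, and you need a finite good cover (e.g.\ compact $X$). Both are standard, as in \cite{BunkeSchick}.
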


\subsection{T-duality of the  twisted Yang-Mills functional } \label{TYM}

Recall that the metrics on the T-dual spaces are rescaled as in \eqref{R}:
\[
h_{A,R}
=
\pi^*h_X+R^2A\odot A,
\qquad
\widehat h_{\widehat A,1/R}
=
\widehat\pi^*h_X+\frac1{R^2}\widehat A\odot\widehat A.
\]

To incorporate the radius parameter \(R\), we introduce the rescaled twisted Yang--Mills functional
\[
\mathrm{YM}(A,B)_R
:=
\sqrt R
\left(
\int_\Sigma |F_A|^2\,d\operatorname{vol}_h
+
\int_Z H\wedge *_{h_{A,R}}H
\right),
\]
and define
\(
\mathrm{YM}(\widehat A,\widehat B)_{1/R}
\)
analogously on the dual side.

The normalization by \(\sqrt R\) is chosen so that the functional transforms symmetrically under the exchange
\begin{center}
$R \Longleftrightarrow 1/R$.
\end{center}

The following result shows that this normalization is precisely the one compatible with T-duality.

\begin{theorem}\label{mainYM}
One has
\[
\mathrm{YM}(A,B)_R
=
\mathrm{YM}(\widehat A,\widehat B)_{1/R}.
\]
\end{theorem}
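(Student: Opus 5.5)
Over the surface $X=\Sigma$ we compute both sides term by term and see that the two terms get exchanged by T-duality. Everything rests on the basic relations of Section~\ref{rev}, namely $F_{\widehat A}=\pi_*H$ and $F_A=\widehat\pi_*\widehat H$. Because $\Sigma$ is two-dimensional, any $\bbT$-invariant $3$-form on $Z$ has top degree. So we may write $H=f\,A\wedge\pi^*\dvol_h$ with $f\in C^\infty(\Sigma)$, and then $\pi_*H=\iota_vH=f\,\dvol_h$. Likewise the basic $3$-form $\Omega$ in \eqref{split} vanishes, so $\widehat H=F_A\wedge\widehat A$. Writing $F_A=f_A\,\dvol_h$, this gives $\widehat\pi_*\widehat H=\pm f_A\,\dvol_h$, the sign coming from the orientation convention. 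Under T-duality the pair of base functions is therefore sent to $(f,f_A)\mapsto(\pm f,\, f_A)$: the curvature function of one side becomes the flux function of the other.

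Next we evaluate the flux term with the rescaled metric $h_{A,R}$. In this metric $\dvol_{h_{A,R}}=R\,A\wedge\pi^*\dvol_h$, so $H=(f/R)\,\dvol_{h_{A,R}}$. The fibre has length $1$ in the normalization $\int_{\bbT}A=1$ used in Proposition~\ref{YMprop}, and we obtain
\[
\int_Z H\wedge *_{h_{A,R}}H=\int_Z \frac{f^2}{R^2}\,R\,A\wedge\pi^*\dvol_h=\frac1R\int_\Sigma f^2\,\dvol_h .
\]
On the dual side, $\widehat h_{\widehat A,1/R}$ gives in the same way $\int_{\widehat Z}\widehat H\wedge *\widehat H=R\int_\Sigma f_A^2\,\dvol_h$. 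The base terms are simply $\int_\Sigma|F_A|^2=\int_\Sigma f_A^2\,\dvol_h$ and $\int_\Sigma|F_{\widehat A}|^2=\int_\Sigma f^2\,\dvol_h$.

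Now we collect the pieces. The left side is $\sqrt R\bigl(\|f_A\|^2+R^{-1}\|f\|^2\bigr)$ and the right side is $R^{-1/2}\bigl(\|f\|^2+R\|f_A\|^2\bigr)$. Both equal $\sqrt R\,\|f_A\|^2+R^{-1/2}\|f\|^2$, which proves the theorem.

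The main obstacle is normalization bookkeeping rather than any conceptual difficulty. One has to fix whether the $R$-scaling enters through the fibre length, i.e.\ whether $\dvol$ acquires a factor $R$, and through the Hodge star, and do this consistently on both sides. One must also make sure that $|F_A|^2$ on the base carries no $R$-dependence, and that the identification $F_{\widehat A}=\pi_*H$ uses the same fibre-integration normalization as in the computation of $\int_Z H\wedge *H$. We would fix these conventions once at the start, using $\iota_v\dvol_{h_A}=\pi^*\dvol_h$ from the proof of Proposition~\ref{YMmoment}, and then check that the factors $R$ and $R^{-1}$ land exactly as above. If a different convention produced $R^{\pm2}$ instead, the prefactor $\sqrt R$ would not symmetrize the expression, so consistency with the stated $\sqrt R$ serves as a check on the conventions. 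The sign ambiguity in $\widehat\pi_*\widehat H$ is harmless, since only squares appear.
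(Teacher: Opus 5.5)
Your proposal is correct and is essentially the paper's proof. Both arguments use $\dim\Sigma=2$ to kill the basic $3$-form $\Omega$, so that $\widehat H=\widehat A\wedge F_A$ and symmetrically $H=A\wedge F_{\widehat A}$. Both then use $\dvol_{h_{A,R}}=R\,A\wedge\pi^*\dvol_h$ and its dual analogue to obtain
\[
\int_Z H\wedge *_{h_{A,R}}H=R^{-1}\int_\Sigma|F_{\widehat A}|^2\,\dvol_h,
\qquad
\int_{\widehat Z}\widehat H\wedge *\widehat H=R\int_\Sigma|F_A|^2\,\dvol_h ,
\]
so the $\sqrt R$ prefactor symmetrizes the sum. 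The only difference is that you carry out both flux computations explicitly in terms of the pair $(f,f_A)$, whereas the paper does one and gets the other by ``interchanging the roles.''
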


\begin{proof}
Since \(\Sigma\) is two-dimensional, we may write
\[
F_A=f_A\,d\operatorname{vol}_h.
\]

Using the decomposition \eqref{split}, together with the fact that
\(\widehat H\) is basic, we obtain
\[
\widehat H
=
\widehat A\wedge F_A
=
\widehat A\wedge f_A\,d\operatorname{vol}_h.
\]

By the definition of the rescaled metric
\(
\widehat h_{\widehat A,1/R},
\)
the corresponding volume form satisfies
\[
\widehat A\wedge d\operatorname{vol}_h
=
R\,d\operatorname{vol}_{\widehat h_{\widehat A,1/R}},
\]
and therefore
\[
\widehat H
=
R\,f_A\,d\operatorname{vol}_{\widehat h_{\widehat A,1/R}}.
\]

It follows that
\[
\int_{\widehat Z}
\widehat H\wedge
*_{\widehat h_{\widehat A,1/R}}
\widehat H
=
R
\int_\Sigma
|F_A|^2\,d\operatorname{vol}_{h_{\widehat A,1/R}}.
\]

Interchanging the roles of the dual spaces yields
\[
\int_Z
H\wedge
*_{h_{A,R}}
H
=
\frac{1}{R}
\int_\Sigma
|F_{\widehat A}|^2\,d\operatorname{vol}_h.
\]

Substituting these identities into the definition of
\(
\mathrm{YM}(\widehat A,\widehat B)_{1/R}
\),
we obtain
\begin{align*}
\mathrm{YM}(\widehat A,\widehat B)_{1/R}
&=
\frac1{\sqrt R}
\left(
\int_\Sigma
|F_{\widehat A}|^2\,d\operatorname{vol}_h
+
\int_{\widehat Z}
\widehat H
\wedge
*_{\widehat h_{\widehat A,1/R}}
\widehat H
\right)
\\
&=
\frac1{\sqrt R}
\left(
R\int_Z
H\wedge *_{h_{A,R}}H
+
R\int_\Sigma
|F_A|^2\,d\operatorname{vol}_h
\right)
\\
&=
\sqrt R
\left(
\int_\Sigma
|F_A|^2\,d\operatorname{vol}_h
+
\int_Z
H\wedge *_{h_{A,R}}H
\right)
\\
&=
\mathrm{YM}(A,B)_R.
\end{align*}
\end{proof}

\subsection{T-dualizable moduli stack of flat mixed fields} \label{Tmoduli}

In this section we examine the action of T-duality on the space of flat mixed fields. 
As recalled in Section~\ref{rev}, the T-duality correspondence carries a triple \((Z,A,B)\) to its dual \((\widehat Z,\widehat A,\widehat B)\).
We first show that it defines a self-map on $\mu^{-1}(0)$.

\begin{lemma}\label{lem:buscher-level}
The Buscher rules \eqref{newconn}--\eqref{buscher}  induce a self-map
\[
  \mathcal T:\mu^{-1}(0)\longrightarrow \mu^{-1}(0),\qquad
  \mathcal T(a_0,b_0,b_1)=(-b_0,\,-a_0,\,b_1+a_0\wedge b_0),
\]
that is independent of the level $\lambda$ and satisfies $\mathcal T^2=\mathrm{id}$.
\end{lemma}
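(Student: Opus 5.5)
We work on the flat locus in the global description \eqref{expression} of Lemma~\ref{lem:finite-model}: fix trivializations $Z\cong\Sigma\times\TT$ and $\widehat Z\cong\Sigma\times\widehat\TT$ with global fibre coordinates $\theta,\widehat\theta$. We take the gerbe data trivial, so that $B=b_1+A\wedge b_0$ is a global form with $A=a_0+d\theta$, and $a_0,b_0$ closed. The plan is to feed this into \eqref{newconn} and \eqref{buscher}, and then read off the triple $(\widehat a_0,\widehat b_0,\widehat b_1)$ by rewriting the result in the same normal form on $\widehat Z$.

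\emph{Dual connection.} By \eqref{eq:iota-B} we have $\iota_vB=b_0$. Hence \eqref{newconn} gives $\widehat A_\alpha=-b_0$ globally, so $\widehat A=-b_0+d\widehat\theta$ and $\widehat a_0=-b_0$. In particular $F_{\widehat A}=-db_0=0$.

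\emph{Dual $B$-field.} Substituting into \eqref{buscher},
\[
\widehat B=b_1+(a_0+d\theta)\wedge b_0+(a_0+d\theta)\wedge(-b_0+d\widehat\theta)-d\theta\wedge d\widehat\theta .
\]
The $d\theta$ terms cancel, as they must since $\iota_v\widehat B=0$, and we are left with $\widehat B=b_1+a_0\wedge d\widehat\theta$. Now write
\[
a_0\wedge d\widehat\theta=-d\widehat\theta\wedge a_0=\widehat A\wedge(-a_0)+b_0\wedge(-a_0)\cdot(-1),
\]
i.e. $a_0\wedge d\widehat\theta=\widehat A\wedge(-a_0)-b_0\wedge a_0$. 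Since $-b_0\wedge a_0=a_0\wedge b_0$, this yields $\widehat B=(b_1+a_0\wedge b_0)+\widehat A\wedge(-a_0)$, so that $\widehat b_0=-a_0$ and $\widehat b_1=b_1+a_0\wedge b_0$. The main care point is exactly this sign bookkeeping with the graded wedge, together with checking the convention $\iota_{\widehat v}\widehat A=1$.

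\emph{Flatness and the remaining claims.} Flatness is preserved: $F_{\widehat A}=-db_0=0$, and $\widehat H=d\widehat B$ vanishes by the computation in Lemma~\ref{lem:finite-model}, because $\widehat b_0=-a_0$ is closed. Identifying $\widehat Z$ with $Z$ via the trivializations ($\widehat\theta\leftrightarrow\theta$), we obtain a self-map of $\mu^{-1}(0)$. Level independence is immediate, since neither \eqref{newconn} nor \eqref{buscher} involves $\lambda$. For involutivity, apply $\mathcal T$ twice:
\[
(a_0,b_0,b_1)\mapsto(-b_0,-a_0,\,b_1+a_0\wedge b_0)\mapsto\bigl(a_0,\,b_0,\,b_1+a_0\wedge b_0+(-b_0)\wedge(-a_0)\bigr).
\]
Since $b_0\wedge a_0=-a_0\wedge b_0$, the last entry is $b_1$, so $\mathcal T^2=\mathrm{id}$.

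One subtlety deserves a remark. On $\mu^{-1}(0)$ the bundle $Z$ is flat but need not a priori be trivial. We rely on the standing assumption $\langle c_1(Z),[\Sigma]\rangle=0$ (so $Z$ is trivial over a surface) together with $[\widehat\pi_*]$-data: $c_1(\widehat Z)=[\pi_*H]=0$. Hence both bundles are trivial, and the identification $\widehat Z\cong Z$ is legitimate.
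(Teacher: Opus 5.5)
Your proof is correct and follows the paper's argument essentially step by step. You trivialize $Z$, substitute into \eqref{newconn} and \eqref{buscher} to get $\widehat A=-b_0+d\widehat\theta$ and $\widehat B=b_1+a_0\wedge d\widehat\theta$, read off $\widehat b_0=-a_0$ and $\widehat b_1=b_1+a_0\wedge b_0$, and verify $\mathcal T^2=\mathrm{id}$ by skew-symmetry; the only deviation is that the paper checks $\widehat H=0$ via $\Omega=0$ and \eqref{split} rather than via $d\widehat b_0=0$, which is inessential. One small error: your first rewriting of $a_0\wedge d\widehat\theta$ carries a stray factor $(-1)$ on the $b_0$ term, though the line beginning ``i.e.'' states the correct identity.
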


\begin{proof}
 On the flat locus, we trivialize $Z$ and write $A=a_0+d\theta$, $B=b_1+A\wedge b_0$
as in \eqref{expression}, so that $\iota_vB=b_0$ by \eqref{eq:iota-B}. Since
$H=0$ and $F_A=0$, by
\eqref{newconn} the dual connection is $\widehat A=-b_0+d\widehat\theta$, hence
$\widehat a_0=-b_0$, and $\Omega=H-A\wedge F_{\widehat A}=0$, so
$\widehat H=0$ by \eqref{split}. The dual curving \eqref{buscher} is
\[
  \widehat B
  = B+A\wedge\widehat A-d\theta\wedge d\widehat\theta
  = b_1+A\wedge b_0+A\wedge(-b_0+d\widehat\theta)-d\theta\wedge d\widehat\theta
  = b_1+a_0\wedge d\widehat\theta,
\]
using $A=a_0+d\theta$. Contracting with the dual vertical field
$\widehat v=\partial_{\widehat\theta}$ gives $\widehat b_0=\iota_{\widehat v}\widehat B=-a_0$,
and therefore
\[
  \widehat b_1
  = \widehat B-\widehat A\wedge\widehat b_0
  = \bigl(b_1+a_0\wedge d\widehat\theta\bigr)-(-b_0+d\widehat\theta)\wedge(-a_0)
  = b_1+a_0\wedge b_0,
\]
the terms $a_0\wedge d\widehat\theta$ cancelling. Hence
$\mathcal T(a_0,b_0,b_1)=(-b_0,-a_0,b_1+a_0\wedge b_0)$.

By Proposition~\ref{YMprop}(i), together with \eqref{eq:iota-twist},
\eqref{eq:d-twist} and Lemma~\ref{lem:cocycle-repair}, both $F_A$ and $H$ are
strictly invariant under the level-$\lambda$ action, so the flat locus
$\mu^{-1}(0)$ is the same subset of $\Ca$ for every $\lambda$. Under a level-$\lambda$
gauge transformation the dual connection changes by
\[
  \widehat A=-\iota_vB
  \ \longmapsto\
  -\iota_v\bigl(B+\eta+\lambda\,u\wedge\iota_vB\bigr)
  =\widehat A-\iota_v\eta ,
\]
again by \eqref{eq:iota-twist}. Since $\iota_v\eta$ is closed with integral
periods \eqref{eq:iota-eta}, this is an ordinary gauge transformation of the dual
bundle,  $F_{\widehat A}$ is unchanged. Thus $\mathcal T$ maps $\mu^{-1}(0)$ to itself at every level.

Applying $\mathcal T$ twice,
\[
  \mathcal T^2(a_0,b_0,b_1)
  =\mathcal T(-b_0,-a_0,b_1+a_0\wedge b_0)
  =\bigl(a_0,\ b_0,\ b_1+a_0\wedge b_0+(-b_0)\wedge(-a_0)\bigr)
  =(a_0,b_0,b_1),
\]
since $(-b_0)\wedge(-a_0)=b_0\wedge a_0=-a_0\wedge b_0$. Hence $\mathcal T^2=\mathrm{id}$.
\end{proof}

What the level $\lambda$ does change is the \emph{equivariance} of $\mathcal T$, which is precisely the
content of the following definition. We allow, a priori, the two sides of the correspondence to
be twisted at different levels $\lambda$ on $Z$ and $\lambda'$ on $\widehat Z$; the analysis
below shows that consistency forces $\lambda'=\lambda$.

\begin{definition}\label{def:Tdualisable}
We say that a gauge transformation \((g,(L,\nabla^{L}))\in\GG_\lambda\) is
\emph{T-dualizable at \((A,B)\)} if there exists a gauge transformation
\((\widehat g,(\widehat L,\nabla^{\widehat L}))\in\GG_{\lambda'}\) on the dual side such that
\begin{equation}\label{eq:T-equivariance}
  \mathcal T \bigl((g,(L,\nabla^{L}))\cdot(A,B)\bigr)
   =
  (\widehat g,(\widehat L,\nabla^{\widehat L}))\cdot \mathcal T(A,B).
\end{equation}
\end{definition}

Denote by \(\GG^T_{(A,B)}< \GG_\lambda\) the subgroup consisting of all T-dualizable gauge transformations at \((A,B)\), i.e.
\[
  \GG^T_{(A,B)}
  :=
  \Bigl\{(g,(L,\nabla^{L}))\in \GG_\lambda \,\Bigm|\, (g,(L,\nabla^{L})) \text{ is T-dualizable at } (A,B)\Bigr\}.
\]

Introduce the equivalence relation $\underset{T}{\sim}$ on $\mu^{-1}(0)$ by declaring
\[
(A,B)\underset{T}{\sim}(A',B')
\quad\Longleftrightarrow\quad
(A',B')
=
h\cdot(A,B)
\text{ for some }
h\in\GG^{T}_{(A,B)}.
\]

Recall from
Lemma~\ref{lem:finite-model} the level-$\lambda$ cocycle
$\phi^\lambda_{(\mathbf m,\mathbf n)}(\mathbf x,\mathbf y)
=\mathbf n\Theta_g\mathbf x^T+(\lambda-1)\mathbf m\Theta_g\mathbf y^T$, and  
\begin{equation}\label{eq:phi-zero}
  \phi^{0}_{(\mathbf m,\mathbf n)}(\mathbf x,\mathbf y)
  =\mathbf n\Theta_g\mathbf x^T-\mathbf m\Theta_g\mathbf y^T ,
\end{equation}
is the cocycle of the untwisted action.

\begin{proposition}\label{prop:obstruction}
Let \((A,B)\in\mu^{-1}(0)\) have classes $[a_0]=\mathbf x$, $[b_0]=\mathbf y$, and let
$h=(g,(L,\nabla^L))\in\GG_\lambda$ have $[u]=\mathbf m$, $[\iota_v\eta]=\mathbf n$. Then
\eqref{eq:T-equivariance} forces
\[
  [\widehat u]=-\mathbf n,\qquad [\iota_{\widehat v}\widehat\eta]=-\mathbf m ,
\]
independently of $\lambda$ and $\lambda'$, and the residual obstruction to
\eqref{eq:T-equivariance} in $\RR/\ZZ$ is
\begin{equation}\label{eq:residual}
  (\lambda-1)\,\mathbf m\Theta_g\mathbf y^T-(\lambda'-1)\,\mathbf n\Theta_g\mathbf x^T .
\end{equation}
Hence,
\begin{enumerate}[(i)]
\item the obstruction is proportional to $\phi^{0}$, uniformly in $(\mathbf m,\mathbf n)$, if and
only if $\lambda'=\lambda$, in which case \eqref{eq:residual} equals
$-(\lambda-1)\,\phi^{0}_{(\mathbf m,\mathbf n)}(\mathbf x,\mathbf y)$ modulo $\ZZ$,
\item the obstruction vanishes identically  if and only if $\lambda=\lambda'=1$.
\end{enumerate}
\end{proposition}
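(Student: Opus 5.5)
The plan is to compute both sides of \eqref{eq:T-equivariance} explicitly in the coordinates $(a_0,b_0,b_1)$ of \eqref{expression}. I will use the transformation formula \eqref{afterchange} from the proof of Lemma~\ref{lem:finite-model}, once at level $\lambda$ on $Z$ and once at level $\lambda'$ on $\widehat Z$, together with the formula $\mathcal T(a_0,b_0,b_1)=(-b_0,-a_0,b_1+a_0\wedge b_0)$ of Lemma~\ref{lem:buscher-level}. Write $h=(g,\mathcal L)$ with $u,\eta$, and $\widehat h=(\widehat g,\widehat{\mathcal L})$ with $\widehat u,\widehat\eta$. Let $E$ and $\widehat E$ denote the closed $2$-forms on $\Sigma$ with integral periods that appear in \eqref{afterchange}, namely $\eta-d\theta\wedge\iota_v\eta-u\wedge\iota_v\eta$ and its hatted analogue.

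\emph{Left-hand side.} Applying $h$ at level $\lambda$ and then $\mathcal T$ gives the triple
\[
\bigl(-(b_0+\iota_v\eta),\ -(a_0+u),\ b_1+E+\iota_v\eta\wedge a_0+(\lambda-1)u\wedge b_0+(a_0+u)\wedge(b_0+\iota_v\eta)\bigr).
\]
Since $\iota_v\eta\wedge a_0+a_0\wedge\iota_v\eta=0$, the third entry simplifies to
\[
b_1+a_0\wedge b_0+E+\lambda\,u\wedge b_0+u\wedge\iota_v\eta .
\]

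\emph{Right-hand side.} Applying $\mathcal T$ and then $\widehat h$ at level $\lambda'$ gives
\[
\bigl(-b_0+\widehat u,\ -a_0+\iota_{\widehat v}\widehat\eta,\ b_1+a_0\wedge b_0+\widehat E-\iota_{\widehat v}\widehat\eta\wedge b_0-(\lambda'-1)\,\widehat u\wedge a_0\bigr).
\]

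\emph{Matching.} Comparing the first two components forces the equalities of forms $\widehat u=-\iota_v\eta$ and $\iota_{\widehat v}\widehat\eta=-u$. In particular $[\widehat u]=-\mathbf n$ and $[\iota_{\widehat v}\widehat\eta]=-\mathbf m$, and no $\lambda$ or $\lambda'$ enters this step. Substituting these into the third component, the right-hand side becomes
\[
b_1+a_0\wedge b_0+\widehat E+u\wedge b_0+(\lambda'-1)\,\iota_v\eta\wedge a_0 .
\]
The difference of the two third components is therefore
\[
(\lambda-1)\,u\wedge b_0-(\lambda'-1)\,\iota_v\eta\wedge a_0+(\text{integral closed forms}).
\]

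\emph{Reducing to a class in $\RR/\ZZ$.} The remaining point, which I regard as the only delicate one, is to justify that only $\int_\Sigma$ of this difference modulo $\ZZ$ is a genuine obstruction. The freedom left in $\widehat h$ once $\widehat u$ and $\iota_{\widehat v}\widehat\eta$ are fixed consists of the basic part of $\widehat\eta$. I will argue, as in the analysis of the subgroup $\GG^0$ in Lemma~\ref{lem:finite-model}, that this freedom shifts $\widehat b_1$ by an arbitrary closed $2$-form with integral periods on $\Sigma$. Exact discrepancies are absorbed by the $\GG^0$-type elements; the integral class $[u\wedge\iota_v\eta]$, whose period is $\mathbf m\Theta_g\mathbf n^T\in\ZZ$, drops out. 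On the surface $\Sigma$, a closed $2$-form is determined by its integral up to exact forms. Integrating with the normalisation $\int_\Sigma\alpha\wedge\beta=[\alpha]\Theta_g[\beta]^T$ then yields the residual obstruction \eqref{eq:residual}.

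\emph{Consequences (i) and (ii).} Recall $\phi^0_{(\mathbf m,\mathbf n)}(\mathbf x,\mathbf y)=\mathbf n\Theta_g\mathbf x^T-\mathbf m\Theta_g\mathbf y^T$. Requiring \eqref{eq:residual} to equal $c\,\phi^0$ for all $(\mathbf m,\mathbf n)$ and all $(\mathbf x,\mathbf y)$ forces $\lambda-1=-c$ and $\lambda'-1=-c$, hence $\lambda=\lambda'$. Conversely, when $\lambda=\lambda'$ the obstruction is $-(\lambda-1)\phi^0$, which proves (i). For (ii), choose generic real $\mathbf x,\mathbf y$ and test on $(\mathbf m,0)$ and on $(0,\mathbf n)$. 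Vanishing modulo $\ZZ$ for all such choices forces both coefficients $\lambda-1$ and $\lambda'-1$ to vanish, so $\lambda=\lambda'=1$; the converse is immediate.
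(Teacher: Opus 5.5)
Your proposal is correct and is essentially the paper's argument. The paper runs the same comparison (apply $\mathcal T$ after the level-$\lambda$ action, apply the level-$\lambda'$ action after $\mathcal T$, match the first two components, subtract the third) directly in the reduced coordinates $(\mathbf x,\mathbf y,[t])$ using the cocycles $\phi^{\lambda}$ and $\phi^{\lambda'}$; you carry it out on the forms $(a_0,b_0,b_1)$ via \eqref{afterchange} and then integrate, which additionally makes explicit, through the freedom in the basic part of $\widehat\eta$, why only the period of the discrepancy modulo $\ZZ$ is a genuine obstruction.
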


\begin{proof}
Write both sides of \eqref{eq:T-equivariance} in the coordinates of
Lemma~\ref{lem:finite-model}, using Lemma~\ref{lem:buscher-level} for $\mathcal T$.
The left-hand side is $\mathcal T$ applied to
$(\mathbf x+\mathbf m,\ \mathbf y+\mathbf n,\
[t+\phi^{\lambda}_{(\mathbf m,\mathbf n)}(\mathbf x,\mathbf y)])$, namely
\[
  \Bigl(-(\mathbf y+\mathbf n),\ -(\mathbf x+\mathbf m),\
  \bigl[t+\phi^{\lambda}_{(\mathbf m,\mathbf n)}(\mathbf x,\mathbf y)
  +(\mathbf x+\mathbf m)\Theta_g(\mathbf y+\mathbf n)^T\bigr]\Bigr).
\]
The right-hand side is $\widehat h$ applied to $\mathcal T(\mathbf x,\mathbf y,[t])
=(-\mathbf y,-\mathbf x,[t+\mathbf x\Theta_g\mathbf y^T])$, namely
\[
  \Bigl(-\mathbf y+\widehat{\mathbf m},\ -\mathbf x+\widehat{\mathbf n},\
  \bigl[t+\mathbf x\Theta_g\mathbf y^T
  +\phi^{\lambda'}_{(\widehat{\mathbf m},\widehat{\mathbf n})}(-\mathbf y,-\mathbf x)\bigr]\Bigr).
\]
Matching the first two components gives $\widehat{\mathbf m}=-\mathbf n$ and
$\widehat{\mathbf n}=-\mathbf m$, as claimed.
Substituting and subtracting, the fibre components differ by
\[
  \phi^{\lambda}_{(\mathbf m,\mathbf n)}(\mathbf x,\mathbf y)
  +(\mathbf x+\mathbf m)\Theta_g(\mathbf y+\mathbf n)^T-\mathbf x\Theta_g\mathbf y^T
  -\phi^{\lambda'}_{(-\mathbf n,-\mathbf m)}(-\mathbf y,-\mathbf x).
\]
Expanding both cocycles and using $\mathbf a\Theta_g\mathbf b^T=-\mathbf b\Theta_g\mathbf a^T$
throughout, the terms $\mathbf n\Theta_g\mathbf x^T$ and $\mathbf x\Theta_g\mathbf n^T$ cancel
in pairs and  $\mathbf m\Theta_g\mathbf n^T$ is an integer since $\Theta_g$ is integral, so one is left with \eqref{eq:residual}. For (i), the
expression \eqref{eq:residual} is a multiple of
$\phi^{0}=\mathbf n\Theta_g\mathbf x^T-\mathbf m\Theta_g\mathbf y^T$ for all
$(\mathbf m,\mathbf n,\mathbf x,\mathbf y)$ precisely when the coefficients of
$\mathbf m\Theta_g\mathbf y^T$ and  $\mathbf n\Theta_g\mathbf x^T$ are opposite, i.e.\
$\lambda-1=\lambda'-1$, and then equal to $-(\lambda-1)\phi^{0}$ modulo $\ZZ$. For (ii), the
$(\mathbf x,\mathbf y)$-dependent part of \eqref{eq:residual} vanishes identically iff
$\lambda=1$ and $\lambda'=1$.
\end{proof}

\begin{remark}\label{rem:matched-levels}
Since $\mathcal T$ exchanges $A$ with $\widehat A$, and hence the roles of
$\GG_1$ and $\GG_2$, requiring $\mathcal T$  to relate the level-$\lambda$-theory to a dual theory of the same form forces the two sides to be twisted at the same level.
For $\lambda'\ne\lambda$ the $\mathbf x$- and $\mathbf y$-obstructions in
\eqref{eq:residual} decouple and one obtains a coherent but finer stratification with no
description of the quotient in terms of the single homomorphism \eqref{eq:l-map} below.
\end{remark}

Assuming $\lambda'=\lambda$ henceforth, Proposition~\ref{prop:obstruction} shows that the T-dualizability
condition at $(A,B)$ reads
\[
  (\lambda-1)\,\phi^{0}_{(\mathbf m,\mathbf n)}(\mathbf x,\mathbf y)\in\ZZ .
\]
Accordingly we introduce, for $(\mathbf x,\mathbf y)\in\RR^{2g}\times\RR^{2g}$,
\[
G_{(\mathbf{x},\mathbf{y})}
:=
\left\{
(\mathbf m,\mathbf n)
\in
\mathbb Z^{2g}\times\mathbb Z^{2g}
\;\Big|\;
  (\lambda-1)\,\phi^{0}_{(\mathbf m,\mathbf n)}(\mathbf x,\mathbf y) 
\in\mathbb Z
\right\},
\]
and note that $G_{(\mathbf x,\mathbf y)}=G_{(\mathbf x+\mathbf a,\mathbf y+\mathbf b)}$ for
$(\mathbf a,\mathbf b)\in\ZZ^{2g}\times\ZZ^{2g}$.

\medskip

Let \(\GG_0< \GG_\lambda\) be the subgroup
\[
  \GG_0
  :=
  \Bigl\{(g,(L,\nabla^{L}))\in \GG_\lambda \,\Bigm|\,
    g=e^{2\pi i f_1},\ 
    \mathrm{hol}^{\,v}(\nabla^{L})=e^{2\pi i f_2},
    \ \text{for some } f_1,f_2\in C^\infty(\Sigma,\RR)
  \Bigr\},
\]
which by the proof of Lemma~\ref{lem:finite-model} is independent of $\lambda$.
There is an exact sequence
\[
  1 \longrightarrow \GG_0 \xrightarrow{\;i\;} \GG_\lambda
    \xrightarrow{\;h\;} \ZZ^{2g}\times \ZZ^{2g} \longrightarrow 0,
\]
where
\[
  h:\GG_\lambda\longrightarrow \ZZ^{2g}\times \ZZ^{2g},\qquad
  (g,(L,\nabla^{L})) \longmapsto (\mathbf m,\mathbf n)
  =(m_1,\dots,m_{2g},\,n_1,\dots,n_{2g}),
\]
is an epimorphism, with the integers \(m_i,n_i\) defined as in~\eqref{expression1} after fixing a symplectic basis \(\{\theta_1,\dots,\theta_g,\eta_1,\dots,\eta_g\}\) of \(H^{1}(\Sigma;\RR)\). Note
that $h$ is a homomorphism onto an abelian group even though $\GG_\lambda$ is a semi-direct
product, since by Lemma~\ref{lem:semidirect} the twist alters $\eta$ by
$\lambda\,u\wedge\iota_v\eta$, whose class $\lambda\,\mathbf m\Theta_g\mathbf n^T$ is integral
and therefore invisible to $[\iota_v\eta]$.

\medskip

Given \((A,B)\in \mu^{-1}(0)\), express \(A\) and \(B\) as in~\eqref{expression},
Write the cohomology classes as
\[ [a_0]=\sum_{i=1}^{g}\bigl(x_i\theta_i + x_{i+g}\eta_i\bigr)=(x_1, \cdots, x_{2g}),\qquad
[b_0]=\sum_{i=1}^{g}\bigl(y_i\theta_i + y_{i+g}\eta_i\bigr)=(y_1, \cdots, y_{2g}).
\]

We then introduce the homomorphism
\begin{equation}\label{eq:l-map}
  l_{(A,B)}:\ \ZZ^{2g}\times \ZZ^{2g}\longrightarrow \RR/\ZZ,\qquad
  (\mathbf m,\mathbf n)\longmapsto \bigl\{(\lambda-1)\bigl(\mathbf n\,\Theta_g[a_0]^{T}
  - \mathbf m\,\Theta_g[b_0]^{T}\bigr)\bigr\},
\end{equation}
where the notation \(\{\cdot\}\) denotes the class modulo \(\ZZ\), and \(\Theta_g\), \(a_0\), \(b_0\) are as in Section~\ref{rev}. Thus $l_{(A,B)}=(\lambda-1)\cdot l^{0}_{(A,B)}$, where
$l^{0}_{(A,B)}$ is the homomorphism of the untwisted theory  and $G_{(\mathbf x,\mathbf y)}=\ker l_{(A,B)}$.

For $(\mathbf x,\mathbf y)\in\RR^{2g}\times\RR^{2g}$, define the relation $\approx_T$ on $\RR^{2g}\times\RR^{2g}\times\RR/\ZZ$ by
\[
  (\mathbf x,\mathbf y,[t])\ \approx_T\ (\mathbf x',\mathbf y',[t'])
  \iff
  \exists\,(\mathbf m,\mathbf n)\in G_{(\mathbf x,\mathbf y)}:\
    \mathbf x'=\mathbf x+\mathbf m,\ \ \mathbf y'=\mathbf y+\mathbf n,\ \
    [t']=\bigl[t+\phi^{\lambda}_{(\mathbf m,\mathbf n)}(\mathbf x,\mathbf y)\bigr].
\]

\begin{theorem}\label{main2}
(i) For every \( (A,B)\in\mu^{-1}(0) \),
\[
\GG^{T}_{(A,B)} \;\cong\; \GG_{0} \oplus \ker l_{(A,B)}.
\]

(ii) The relation $\approx_T$ is an equivalence relation, and the map
$\Phi(A,B)=\bigl([a_0],[b_0],[\textstyle\int_\Sigma b_1]\bigr)$ descends to a bijection
\[
  \mu^{-1}(0)\big/\underset{T}{\sim}
  \ \cong\
  \bigl(\RR^{2g}\times\RR^{2g}\times\RR/\ZZ\bigr)\big/\approx_T .
\]

(iii) The T-duality map $\mathcal T:\mu^{-1}(0)\to\mu^{-1}(0)$ induces a well-defined
involution $\mathcal T_*$ on the quotient stack, which under the identification in (ii) is
\[
  \mathcal T_*\bigl[(\mathbf x,\mathbf y),[t]\bigr]
  =\bigl[(-\mathbf y,-\mathbf x),\ [t+\mathbf x\Theta_g\mathbf y^T]\bigr].
\]
\end{theorem}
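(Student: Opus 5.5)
For (i), I will use Proposition~\ref{prop:obstruction} with $\lambda'=\lambda$. It says that $h=(g,(L,\nabla^L))\in\GG_\lambda$ is T-dualizable at $(A,B)$ if and only if the residual obstruction $-(\lambda-1)\phi^0_{(\mathbf m,\mathbf n)}(\mathbf x,\mathbf y)$ vanishes in $\RR/\ZZ$. Here $(\mathbf m,\mathbf n)=h(g,(L,\nabla^L))$ is the image under the epimorphism $h$ of the exact sequence $1\to\GG_0\to\GG_\lambda\to\ZZ^{2g}\times\ZZ^{2g}\to0$. This vanishing condition is exactly $(\mathbf m,\mathbf n)\in\ker l_{(A,B)}=G_{(\mathbf x,\mathbf y)}$, so $\GG^T_{(A,B)}=h^{-1}(\ker l_{(A,B)})$.

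This gives the short exact sequence $1\to\GG_0\to\GG^T_{(A,B)}\to\ker l_{(A,B)}\to0$. To split it I will choose, for each basis element of the free abelian group $\ker l_{(A,B)}\subset\ZZ^{4g}$, a preimage built from harmonic representatives. Concretely, take $g$ with $u$ harmonic of class $\mathbf m$, and take $L=\pi^*$ of a line bundle with a flat connection with vertical holonomy class $\mathbf n$. Since $\ker l$ is free, any set-theoretic lift on a basis extends to a homomorphic section. The identification with a direct sum needs some care, because $\GG_\lambda$ is a semi-direct product. I will read ``$\oplus$'' as a split extension, and check that the $\lambda u\wedge\iota_v\eta$ twist stays inside $\GG_0$ modulo integral classes. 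I expect this to be the main obstacle: I must make sure the section is a homomorphism, i.e.\ that the commutator terms $\lambda\,u_2\wedge\iota_v\eta_1$ can be absorbed.

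For (ii), I first show that $\approx_T$ is an equivalence relation. Reflexivity holds since $0\in G$. For symmetry and transitivity, $G_{(\mathbf x,\mathbf y)}$ is a subgroup and is invariant under integral translations, so $G_{(\mathbf x+\mathbf m,\mathbf y+\mathbf n)}=G_{(\mathbf x,\mathbf y)}$. The cocycle identity checked in Lemma~\ref{lem:finite-model} then shows that composition and inverses stay in the same group, with the correct fibre shift. I then push $\underset{T}{\sim}$ through $\Phi$. By the proof of Lemma~\ref{lem:finite-model}, $\GG_0$ acts trivially on $(\mathbf x,\mathbf y,[t])$ and fills out the fibres of $\Phi$. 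An element of $\GG^T_{(A,B)}$ therefore acts through its image $(\mathbf m,\mathbf n)\in G_{(\mathbf x,\mathbf y)}$ by \eqref{discreteaction}. The same remark shows $\GG^T_{h\cdot(A,B)}=\GG^T_{(A,B)}$, so $\underset{T}{\sim}$ is itself an equivalence relation. Combined with the surjectivity of $\Phi$ (any class vector can be realized by harmonic forms plus a suitable $b_1$), this gives the bijection.

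For (iii), I use Lemma~\ref{lem:buscher-level}: in coordinates, $\mathcal T$ sends $(\mathbf x,\mathbf y,[t])$ to $(-\mathbf y,-\mathbf x,[t+\mathbf x\Theta_g\mathbf y^T])$, where the last component comes from $\int_\Sigma a_0\wedge b_0=[a_0]\Theta_g[b_0]^T$. To see that this is well defined on classes, I use that $(\mathbf m,\mathbf n)\in G_{(\mathbf x,\mathbf y)}$ makes the residual \eqref{eq:residual} vanish. Then $\mathcal T(h\cdot p)=\widehat h\cdot\mathcal T(p)$ with $\widehat{\mathbf m}=-\mathbf n$ and $\widehat{\mathbf n}=-\mathbf m$. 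Since $(-\mathbf n,-\mathbf m)\in G_{(-\mathbf y,-\mathbf x)}$ (the defining expression of $G$ is symmetric under this swap up to sign), $\widehat h$ is T-dualizable at $\mathcal T(p)$, and so $\mathcal T$ respects $\underset{T}{\sim}$. Finally, $\mathcal T^2=\mathrm{id}$ from Lemma~\ref{lem:buscher-level} makes $\mathcal T_*$ an involution.
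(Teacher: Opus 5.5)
Your reduction in (i) to $\GG^T_{(A,B)}=h^{-1}(\ker l_{(A,B)})$ via Proposition~\ref{prop:obstruction}, with kernel $\GG_0$, follows the paper's proof, and so do your arguments for (ii) and (iii). The problem is the splitting you then try to build in (i): it cannot be completed.

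First, the lift as written cannot realise $\mathbf n\neq\mathbf 0$. A bundle pulled back from $\Sigma$ has trivial vertical holonomy, and a flat one has $\eta=0$, so in either case $[\iota_v\eta]=0$. You need a non-basic bundle, e.g.\ one with curvature $\eta=d\theta\wedge\nu$ for $\nu$ harmonic of integral class $\mathbf n$.

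More seriously, a free abelian group is free only among abelian groups. A basis lift therefore extends to a homomorphism only if the chosen lifts commute pairwise in $\GG_\lambda$, and they cannot be made to. Take any lifts $\tilde a=(g_a,\mathcal L_a)$ and $\tilde b=(g_b,\mathcal L_b)$ of $a=(\mathbf m_a,\mathbf n_a)$ and $b=(\mathbf m_b,\mathbf n_b)$. By \eqref{eq:semidirect-law}, the products $\tilde a\tilde b$ and $\tilde b\tilde a$ differ by $\pi^*\bigl(\xi_{g_a,\mathcal L_b}\otimes\xi_{g_b,\mathcal L_a}^{-1}\bigr)^{\otimes\lambda}$. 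This is a line bundle pulled back from $\Sigma$ of degree $\lambda\bigl(\mathbf m_a\Theta_g\mathbf n_b^T-\mathbf m_b\Theta_g\mathbf n_a^T\bigr)$. That integer depends only on $a$ and $b$, and $\pi^*$ is injective on $H^2$ because $Z$ is trivial on the flat locus. So no correction of the lifts by elements of $\GG_0$ can remove it.

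Concretely, at $\lambda=1$ we have $\ker l_{(A,B)}=\ZZ^{2g}\times\ZZ^{2g}$. Taking $\mathbf m_a$ and $\mathbf n_b$ to be the first and $(g+1)$-st standard basis vectors, with $\mathbf n_a=\mathbf m_b=\mathbf 0$, gives degree $1$. At any rational point where $\ker l_{(A,B)}$ has full rank, the multiples $Na$, $Nb$ give degree $\lambda N^2\neq0$.

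So the commutator terms you hoped to absorb are a genuine Heisenberg-type obstruction. At such points $1\to\GG_0\to\GG^T_{(A,B)}\to\ker l_{(A,B)}\to0$ admits no homomorphic section. Even where a section exists, you would get a semidirect rather than a direct product, since conjugation by lifts twists the non-abelian group $\GG_0$ by $\pi^*\xi^{\otimes\lambda}$.

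The paper's proof makes no attempt at a splitting. It only observes that the T-dualizable transformations are exactly those whose integral part lies in $\ker l_{(A,B)}$, and that those with trivial integral part form $\GG_0$. It then reads $\GG_0\oplus\ker l_{(A,B)}$ in this extension sense. That is all (ii) and (iii) use, since $\GG_0$ acts trivially on $(\mathbf x,\mathbf y,[t])$ and $\GG^T_{(A,B)}$ acts only through its image in $G_{(\mathbf x,\mathbf y)}$. You should drop the section and state (i) as the extension; the literal direct-sum reading cannot be reached by any choice of lifts.
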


\begin{proof}
\emph{(i)} As in the proof of Lemma~\ref{lem:finite-model}, write a flat mixed field as
$A=a_0+d\theta$, $B=b_1+A\wedge b_0$ and identify $(A,B)$ with the triple $(a_0,b_0,b_1)$.
By Lemma~\ref{lem:buscher-level}, $\mathcal T(a_0,b_0,b_1)=(-b_0,-a_0,b_1+a_0\wedge b_0)$,
and by \eqref{afterchange} the level-$\lambda$ action of $h=(g,(L,\nabla^L))$ is
\[
  h\cdot(a_0,b_0,b_1)=\Bigl(a_0+u,\ b_0+\iota_v\eta,\
  b_1+\bigl(\eta-A\wedge\iota_v\eta-u\wedge\iota_v\eta\bigr)+(\lambda-1)\,u\wedge b_0\Bigr).
\]
Imposing \eqref{eq:T-equivariance} and comparing the first two components gives, as in
Proposition~\ref{prop:obstruction},
$\widehat g^{\,-1}d\widehat g=-\iota_v\eta$ and $\iota_{\widehat v}\widehat\eta=-u$,
the exchange of the two gauge parameters; this imposes the integrality conditions on the
corresponding closed $1$-forms and is independent of the level. The remaining equation is an
equality of $2$-forms on $\Sigma$ modulo integral classes, whose obstruction is, by
Proposition~\ref{prop:obstruction},
\[
  (\lambda-1)\bigl(\mathbf n\Theta_g[a_0]^T-\mathbf m\Theta_g[b_0]^T\bigr)\in\ZZ ,
\]
that is $(\mathbf m,\mathbf n)\in\ker l_{(A,B)}$ in the notation of \eqref{eq:l-map}. The
subgroup of gauge transformations with trivial integral part is $\GG_0$, whence
$\GG^{T}_{(A,B)}\cong\GG_0\oplus\ker l_{(A,B)}$.

\emph{(ii)} We first check $\approx_T$ is an equivalence relation. Reflexivity holds since
$(\mathbf 0,\mathbf 0)\in G_{(\mathbf x,\mathbf y)}$ and $\phi^{\lambda}_{(\mathbf 0,\mathbf 0)}=0$;
the admissible set is leaf-invariant, $G_{(\mathbf x,\mathbf y)}=G_{(\mathbf x+\mathbf a,\mathbf y+\mathbf b)}$
for $(\mathbf a,\mathbf b)\in\ZZ^{2g}\times\ZZ^{2g}$, since the defining quantity changes by
$(\lambda-1)(\mathbf n\Theta_g\mathbf a^T-\mathbf m\Theta_g\mathbf b^T)\in\ZZ$; and composing
$(\mathbf m,\mathbf n)$ then $(\mathbf m',\mathbf n')$ shifts the fibre by
$\phi^{\lambda}_{(\mathbf m+\mathbf m',\mathbf n+\mathbf n')}(\mathbf x,\mathbf y)
+(\lambda-1)\mathbf m'\Theta_g\mathbf n^T-\mathbf m\Theta_g\mathbf n'^{T}$, whose last two terms
lie in $\ZZ$, giving transitivity and (with $(\mathbf m',\mathbf n')=(-\mathbf m,-\mathbf n)$)
symmetry.

Now represent every element of $\mu^{-1}(0)$ by a triple $(a_0,b_0,b_1)$ with $a_0,b_0$ closed.
Fixing a symplectic basis and setting $[a_0]=\mathbf x$, $[b_0]=\mathbf y$,
$[t]=[\int_\Sigma b_1]\in\RR/\ZZ$ defines
\[
  \Phi:\mu^{-1}(0)\longrightarrow\RR^{2g}\times\RR^{2g}\times\RR/\ZZ,\qquad
  (A,B)\longmapsto(\mathbf x,\mathbf y,[t]).
\]
Two triples have the same image under $\Phi$ iff they differ by an element of $\GG_0$. Indeed,
$\GG_0$ consists exactly of the transformations with $[u]=[\iota_v\eta]=0$, which by
\eqref{afterchange} fix $(\mathbf x,\mathbf y)$ and shift $[\int_\Sigma b_1]$ by an integer,
hence fix $[t]$. Conversely a transformation fixing $(\mathbf x,\mathbf y,[t])$ has
$[u]=[\iota_v\eta]=0$ and so lies in $\GG_0$. As $b_1\in\Omega^2(\Sigma)$ is unconstrained on
$\mu^{-1}(0)$, every value of $[t]$ is attained, so $\Phi$ is a bijection
\[
  \mu^{-1}(0)/\GG_0\ \xrightarrow{\ \cong\ }\ \RR^{2g}\times\RR^{2g}\times\RR/\ZZ .
\]
It remains to pass from $\GG_0$ to the full T-dualizable group. By part~(i) the residual group
$\GG^{T}_{(A,B)}/\GG_0\cong\ker l_{(A,B)}=G_{(\mathbf x,\mathbf y)}$ acts on the image through
its integral part $(\mathbf m,\mathbf n)$, sending, by \eqref{afterchange}, $(\mathbf x,\mathbf y)$
to $(\mathbf x+\mathbf m,\mathbf y+\mathbf n)$ and shifting $[t]$ by
$\phi^{\lambda}_{(\mathbf m,\mathbf n)}(\mathbf x,\mathbf y)$; this is precisely $\approx_T$.
Since $\underset{T}{\sim}$ is generated by $\GG_0$ together with these integral-part
transformations, $\Phi$ carries it exactly to $\approx_T$, and therefore descends to a
bijection
\[
  \mu^{-1}(0)\big/\underset{T}{\sim}\ \cong\
  \bigl(\RR^{2g}\times\RR^{2g}\times\RR/\ZZ\bigr)\big/\approx_T .
\]

\emph{(iii)} Suppose $(A,B)\underset{T}{\sim}(A',B')$, say $(A',B')=h\cdot(A,B)$ with
$h\in\GG^T_{(A,B)}$, and let $\widehat h$ realise \eqref{eq:T-equivariance}. Then
$\mathcal T(A',B')=\widehat h\cdot\mathcal T(A,B)$, and it remains to check
$\widehat h\in\GG^T_{\mathcal T(A,B)}$. By Proposition~\ref{prop:obstruction} the obstruction
for $\widehat h$ at $\mathcal T(A,B)$, whose classes are $[\widehat a_0]=-\mathbf y$,
$[\widehat b_0]=-\mathbf x$ and whose integral part is
$(\widehat{\mathbf m},\widehat{\mathbf n})=(-\mathbf n,-\mathbf m)$, is
\[
  (\lambda-1)\bigl(\widehat{\mathbf n}\Theta_g[\widehat a_0]^T
  -\widehat{\mathbf m}\Theta_g[\widehat b_0]^T\bigr)
  =-(\lambda-1)\,\phi^{0}_{(\mathbf m,\mathbf n)}(\mathbf x,\mathbf y),
\]
which differs by a sign from the integral quantity of part~(i) and so also lies in $\ZZ$.
Hence $\widehat h\in\GG^T_{\mathcal T(A,B)}$, so $\mathcal T(A',B')\underset{T}{\sim}\mathcal T(A,B)$
and $\mathcal T$ induces a well-defined map $\mathcal T_*$ on the quotient. In the coordinates
of part~(ii), $\mathcal T$ sends $(\mathbf x,\mathbf y,[t])$ to
$(-\mathbf y,-\mathbf x,[t+\mathbf x\Theta_g\mathbf y^T])$ by Lemma~\ref{lem:buscher-level},
which is the stated formula, and it respects $\approx_T$ by the computation just given.
Finally $\mathcal T_*^2=\mathrm{Id}$ follows from
$\mathbf x\Theta_g\mathbf y^T+(-\mathbf y)\Theta_g(-\mathbf x)^T=0$ by skew-symmetry.
\end{proof}

The following theorem characterises how the structure of the T-dualizable subgroup, and hence of the quotient stack, varies with the
level. 

\begin{theorem}\label{thm:level-stack}
Let $\lambda\in\ZZ\setminus\{0\}$.
\begin{enumerate}[(i)]
\item If $\lambda=1$,  the relation $\underset{T}{\sim}$ becomes ordinary
gauge equivalence, so 
\[
  \mu^{-1}(0)\big/\underset{T}{\sim}\ =\ \mu^{-1}(0)\big/\GG_\lambda
\]
and T-duality descends to an involutive contactomorphism of the moduli space
\[
  \mathcal T_*\bigl[(\mathbf x,\mathbf y),[t]\bigr]
  =\bigl[(-\mathbf y,-\mathbf x),[t+\mathbf x\Theta_g\mathbf y^T]\bigr].
\]
\item  If $\lambda\ne1$,  the rank of the
isotropy is the same at every level $\lambda$,
\[
  \operatorname{rank}G_{(\mathbf x,\mathbf y)}=4g-\operatorname{rank}\operatorname{im}(l^{0}_{(A,B)})
\]
and the level dependence is  captured by the
finite group $\ZZ^{4g}/G_{(\mathbf x,\mathbf y)}\cong\operatorname{im}(l_{(A,B)})$. At a point where
$\operatorname{Tors}\operatorname{im}(l^{0}_{(A,B)})\cong\ZZ/N$, with $N$ the smallest positive
integer for which $N\,\phi^{0}_{(\mathbf m,\mathbf n)}(\mathbf x,\mathbf y)\in\ZZ$ for all
$(\mathbf m,\mathbf n)\in\ZZ^{2g}\times\ZZ^{2g}$, we have
\[
  \operatorname{Tors}\operatorname{im}(l_{(A,B)})\ \cong\ \ZZ\big/\tfrac{N}{\gcd(N,\,\lambda-1)}.
\]
\end{enumerate}
\end{theorem}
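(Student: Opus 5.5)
Both parts reduce to the finite model of Lemma~\ref{lem:finite-model} and the homomorphism $l_{(A,B)}=(\lambda-1)\,l^{0}_{(A,B)}$ of \eqref{eq:l-map}. We take Theorem~\ref{main2} as given, so that $\GG^T_{(A,B)}\cong \GG_0\oplus\ker l_{(A,B)}$ and the stack is $(\RR^{2g}\times\RR^{2g}\times\RR/\ZZ)/\approx_T$.

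\emph{Part (i).} For $\lambda=1$ the map $l_{(A,B)}$ is identically zero, so $G_{(\mathbf x,\mathbf y)}=\ZZ^{2g}\times\ZZ^{2g}$ at every point. Hence $\GG^T_{(A,B)}$ is all of $\GG_\lambda$, using the exact sequence $1\to\GG_0\to\GG_\lambda\to\ZZ^{4g}\to0$, and $\approx_T$ is exactly the orbit relation of \eqref{discreteaction}. This identifies the stack with $\mu^{-1}(0)/\GG_1$.

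The formula for $\mathcal T_*$ is then Theorem~\ref{main2}(iii). It remains to show that it is a contactomorphism. I would pull back the contact form of Theorem~\ref{main1} at $\lambda=1$, namely $\alpha=dt-\mathbf y\Theta_g d\mathbf x^T$, under the lift $F(\mathbf x,\mathbf y,t)=(-\mathbf y,-\mathbf x,t+\mathbf x\Theta_g\mathbf y^T)$. This gives
\[
F^*\alpha=dt+d(\mathbf x\Theta_g\mathbf y^T)-\mathbf x\Theta_g d\mathbf y^T=dt+d\mathbf x\,\Theta_g\mathbf y^T .
\]
Skew-symmetry rewrites the last term as $-\mathbf y\Theta_g d\mathbf x^T$, so $F^*\alpha=\alpha$. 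Thus $\mathcal T_*$ is in fact a strict contactomorphism, and it is involutive by Theorem~\ref{main2}(iii).

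\emph{Part (ii).} The relation $G_{(\mathbf x,\mathbf y)}=\ker l_{(A,B)}$ gives $\ZZ^{4g}/G\cong\operatorname{im} l_{(A,B)}$ by the first isomorphism theorem. Because $\lambda-1\neq0$, we have $\ker l_{(A,B)}\supseteq\ker l^0_{(A,B)}$, and the quotient $\ker l/\ker l^0$ embeds in the $(\lambda-1)$-torsion of $\RR/\ZZ$ via $l^0$. That torsion subgroup is finite, so $\operatorname{rank}\ker l=\operatorname{rank}\ker l^0=4g-\operatorname{rank}\operatorname{im} l^0$, and this value is independent of $\lambda$.

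For the torsion statement, note that $\operatorname{im} l=(\lambda-1)\operatorname{im} l^0$, since multiplication by $\lambda-1$ is an endomorphism of the finitely generated group $\operatorname{im} l^0\subset\RR/\ZZ$. Its torsion subgroup is the cyclic group $\ZZ/N$ generated by $1/N$. Multiplication by $\lambda-1$ on $\ZZ/N$ has image of order $N/\gcd(N,\lambda-1)$.

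\emph{Main obstacle.} The step I expect to need care is showing that $\operatorname{Tors}((\lambda-1)I)=(\lambda-1)\operatorname{Tors}(I)$ for $I=\operatorname{im} l^0$. Applied to torsion elements, multiplication by $\lambda-1$ gives the containment $\supseteq$. For the reverse containment, suppose $(\lambda-1)z$ is torsion with $z\in I$. Then $z$ lies in the divisible-hull preimage, and since $I$ is a finitely generated subgroup of $\RR/\ZZ$, I would split $I\cong\ZZ^r\oplus\ZZ/N$. In this splitting, a torsion multiple of $z$ forces the free component of $z$ to vanish, because $\lambda-1\neq0$ and the free part is torsion-free. Hence $z\in\ZZ/N$. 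Finally I would check that $N$ as defined in the statement (the minimal $N$ with $N\phi^0\in\ZZ$ for all $(\mathbf m,\mathbf n)$) agrees with the order of the torsion subgroup in the situation considered. I would take this as the stated hypothesis of the theorem, which is the natural reading when $\operatorname{im} l^0$ is finite, and the resulting order is $N/\gcd(N,\lambda-1)$.
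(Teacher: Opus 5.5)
Your proposal is correct and follows the paper's proof essentially step for step. In (i) you use $l_{(A,B)}\equiv0$ at $\lambda=1$ and the same pullback computation showing that $\alpha=dt-\mathbf y\Theta_g\,d\mathbf x^T$ is preserved. In (ii) you use $\operatorname{im} l_{(A,B)}=(\lambda-1)\operatorname{im} l^{0}_{(A,B)}$ together with the splitting of $I=\operatorname{im} l^{0}_{(A,B)}$ into a free part and $\ZZ/N$. The only cosmetic difference is that you get the rank from $\ker l^{0}_{(A,B)}\subseteq\ker l_{(A,B)}$ with finite quotient, whereas the paper uses $\operatorname{rank}\bigl((\lambda-1)I\bigr)=\operatorname{rank}I$.

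Your caution about $N$ is justified. A least $N$ with $N\phi^{0}_{(\mathbf m,\mathbf n)}(\mathbf x,\mathbf y)\in\ZZ$ for all $(\mathbf m,\mathbf n)$ exists only when $\operatorname{im} l^{0}_{(A,B)}$ is finite, so reading $N$ as $|\operatorname{Tors}\operatorname{im} l^{0}_{(A,B)}|$ is the right interpretation; the paper conflates the two. Likewise, the word ``finite'' attached to $\ZZ^{4g}/G_{(\mathbf x,\mathbf y)}\cong\operatorname{im} l_{(A,B)}$ holds only at such points, and neither your argument nor the paper's establishes more than that isomorphism.
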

\begin{proof}
(i) At $\lambda=1$ we have $l_{(A,B)}=(\lambda-1)\,l^{0}_{(A,B)}\equiv0$ by \eqref{eq:l-map}, so
$G_{(\mathbf x,\mathbf y)}=\ker l_{(A,B)}=\ZZ^{2g}\times\ZZ^{2g}$ for every $(A,B)$ and by Theorem~\ref{main2}(i),
 this gives that  $\GG^T_{(A,B)}=\GG_0\oplus(\ZZ^{2g}\times\ZZ^{2g})=\GG_\lambda$ is
the full gauge group. Hence $\underset{T}{\sim}$ is ordinary gauge equivalence and
$\mu^{-1}(0)/\underset{T}{\sim}=\mu^{-1}(0)/\GG_\lambda$. Moreover $\phi^{1}_{(\mathbf m,\mathbf n)}(\mathbf x,\mathbf y)
=\mathbf n\Theta_g\mathbf x^T$ is the  cocycle of Lemma~\ref{lem:finite-model} at $\lambda=1$,
so under the identification of Theorem~\ref{main2}(ii) the relation $\approx_T$ is exactly the
 relation presenting $\mu^{-1}(0)/\GG_\lambda$. Hence $\mathcal T_*$ descends to the moduli space.
Its formula is that of Theorem~\ref{main2}(iii), and it preserves the contact form
$\alpha=dt-\mathbf y\Theta_g\,d\mathbf x^T$, since
\[
  \sigma^*\alpha=d\bigl(t+\mathbf x\Theta_g\mathbf y^T\bigr)-(-\mathbf x)\Theta_g\,d(-\mathbf y)^T
  =dt+d\mathbf x\,\Theta_g\mathbf y^T+\mathbf x\Theta_g\,d\mathbf y^T-\mathbf x\Theta_g\,d\mathbf y^T
  =dt-\mathbf y\Theta_g\,d\mathbf x^T=\alpha ,
\]
using $d\mathbf x\,\Theta_g\mathbf y^T=-\mathbf y\Theta_g\,d\mathbf x^T$, so $\mathcal T_*$ is an
involutive contactomorphism of the moduli space.

(ii) Let $k:=\lambda-1\ne0$. Since $l_{(A,B)}=k\cdot l^{0}_{(A,B)}$, its image is
$\operatorname{im}(l_{(A,B)})=k\cdot\operatorname{im}(l^{0}_{(A,B)})$, which gives the first
assertion. For the rank, write $S:=\operatorname{im}(l^{0}_{(A,B)})\le\RR/\ZZ$, a finitely
generated subgroup. Multiplication by $k$ on $\RR/\ZZ$ has kernel $\tfrac1k\ZZ/\ZZ$, a finite
group of order $|k|$, so $S\cap\ker(k\cdot)$ is finite and
\[
  \operatorname{rank}\bigl(\operatorname{im}(l_{(A,B)})\bigr)
  =\operatorname{rank}(kS)
  =\operatorname{rank}(S)-\operatorname{rank}\bigl(S\cap\ker(k\cdot)\bigr)
  =\operatorname{rank}(S).
\]
Thus $\operatorname{rank}\operatorname{im}(l_{(A,B)})=\operatorname{rank}\operatorname{im}(l^{0}_{(A,B)})$,
independently of $\lambda\ne1$. Since $l_{(A,B)}\colon\ZZ^{2g}\times\ZZ^{2g}\to\RR/\ZZ$ has
$G_{(\mathbf x,\mathbf y)}=\ker l_{(A,B)}$, this kernel is free. Hence $\ZZ^{4g}/G_{(\mathbf x,\mathbf y)}\cong\operatorname{im}(l_{(A,B)})$ and the rank–nullity identity
over $\ZZ$  yields
\[
  \operatorname{rank}G_{(\mathbf x,\mathbf y)}
  =4g-\operatorname{rank}\operatorname{im}(l_{(A,B)})
  =4g-\operatorname{rank}\operatorname{im}(l^{0}_{(A,B)}).
\]

For the finite part, decompose $S=S_{\mathrm{free}}\oplus\operatorname{Tors}S$ with
$\operatorname{Tors}S\cong\ZZ/N$ finite cyclic, a finite subgroup of $\RR/\ZZ$ being cyclic.
Here $N$ is the order of $\operatorname{Tors}S$,  that is the smallest positive integer with
$N\,\phi^{0}_{(\mathbf m,\mathbf n)}(\mathbf x,\mathbf y)\in\ZZ$ for all $(\mathbf m,\mathbf n)\in\ZZ^{2g}\times\ZZ^{2g}$.
Multiplication by $k$ respects this decomposition; on $S_{\mathrm{free}}$ it is injective, so
$k\,S_{\mathrm{free}}$ is again free of the same rank and contributes no torsion, while on
$\ZZ/N$ its image is the subgroup generated by $k$, namely $\ZZ/(N/\gcd(N,k))$. Hence
$\operatorname{Tors}\operatorname{im}(l_{(A,B)})\cong\ZZ/(N/\gcd(N,\lambda-1))$.
\end{proof}

\subsection{The action groupoid and the isotropy cocycle}\label{Groupoid}
In this section we give a groupoid presentation of the T-dualizable moduli stack and
describe the induced action of T-duality on it, realised as an automorphism of the
associated groupoid. Throughout, $\lambda\in\ZZ\setminus\{0\}$ is the level of
the action \eqref{gautran} and, as in Remark~\ref{rem:matched-levels}, both sides of the
T-duality correspondence are twisted at the same level.

A feature of the twisted theory, invisible at level $0$, is that two distinct
$\RR/\ZZ$-valued cocycles are in play; the cocycle $\phi^{\lambda}$ of
Lemma~\ref{lem:finite-model}, which governs the deck action on the fibre and hence the topology
of the moduli space, and the cocycle $(\lambda-1)\phi^{0}$ measuring the failure of T-duality to be equivariant, with $\phi^{0}$ as in
\eqref{eq:phi-zero}. 

The moduli space of flat mixed fields is presented by the action Lie groupoid $\mathcal H$ with object space
\[
  \mathcal H^{(0)} \;=\; (\mathbb{R}^{2g}\times\mathbb{R}^{2g})\times \mathbb{R}/\mathbb{Z}
\]
and morphism space
\[
  \mathcal H^{(1)} \;=\; (\mathbb{R}^{2g}\times\mathbb{R}^{2g})\times(\mathbb{Z}^{2g}\times\mathbb{Z}^{2g})\times \mathbb{R}/\mathbb{Z},
\]
with structure maps
\[
  s\big((\mathbf{x},\mathbf{y}),(\mathbf{m},\mathbf{n}),[t]\big)=\big((\mathbf{x},\mathbf{y}),[t]\big),\qquad
  t\big((\mathbf{x},\mathbf{y}),(\mathbf{m},\mathbf{n}),[t]\big)=\big((\mathbf{x}+\mathbf{m},\mathbf{y}+\mathbf{n}),[t+\phi^{\lambda}_{(\mathbf{m},\mathbf{n})}(\mathbf{x},\mathbf{y})]\big),
\]
where, by Lemma~\ref{lem:finite-model},
\[
  \phi^{\lambda}_{(\mathbf{m},\mathbf{n})}(\mathbf{x},\mathbf{y}) \;=\; \mathbf{n}\,\Theta_g\,\mathbf{x}^{T}+(\lambda-1)\,\mathbf{m}\,\Theta_g\,\mathbf{y}^{T}.
\]
The action is free and proper,
and the orbit space is the moduli space
\[
  \mathcal H^{(0)}/\mathcal H \;=\;  \mu^{-1}(0)/\GG_\lambda \;=\; M_{|\lambda|}.
\]
 
The T-dualizability condition of Proposition~\ref{prop:obstruction}, namely
$(\lambda-1)\,\phi^{0}_{(\mathbf{m},\mathbf{n})}(\mathbf{x},\mathbf{y})\in\mathbb{Z}$, is encoded
by the continuous groupoid cocycle
\[
  \Phi\colon \mathcal H \longrightarrow \mathbb{R}/\mathbb{Z},\qquad
  \Phi\big((\mathbf{x},\mathbf{y},[t]),(\mathbf{m},\mathbf{n})\big)=\big[(\lambda-1)\,\phi^{0}_{(\mathbf{m},\mathbf{n})}(\mathbf{x},\mathbf{y})\big].
\]
A direct computation using
$\phi^{0}_{(\mathbf{m},\mathbf{n})}(\mathbf{x}+\mathbf{m}',\mathbf{y}+\mathbf{n}')=\phi^{0}_{(\mathbf{m},\mathbf{n})}(\mathbf{x},\mathbf{y})+\big(\mathbf{n}\Theta_g \mathbf{m}'^{T}-\mathbf{m}\Theta_g \mathbf{n}'^{T}\big)$
and the integrality of $\Theta_g$ shows that $\Phi$ is multiplicative,
$$\Phi(\gamma_1\gamma_2)=\Phi(\gamma_1)+\Phi(\gamma_2)$$ in $\mathbb{R}/\mathbb{Z}$,
so $\Phi$ is a well-defined $\mathbb{R}/\mathbb{Z}$-valued groupoid cocycle. Note that $\Phi$ is
identically zero precisely when $\lambda=1$.
 
 \begin{definition}
The \emph{T-dualizable groupoid} is the kernel subgroupoid of $\mathcal H$, 
\[
  \mathcal K \;=\; \ker\Phi
  \;=\;\Big\{\big((\mathbf{x},\mathbf{y},[t]),(\mathbf{m},\mathbf{n})\big)\;\big|\;  (\lambda-1)\phi^{0}_{(\mathbf{m},\mathbf{n})}(\mathbf{x},\mathbf{y})\in\mathbb{Z}\Big\}
  \;\rightrightarrows\; \mathbb{R}^{2g}\times\mathbb{R}^{2g}\times\mathbb{R}/\mathbb{Z}.
\]
 \end{definition}
The orbit space of $\mathcal K$ is the T-dualizable moduli stack of flat mixed fields, and the inclusion
$\mathcal K \hookrightarrow \mathcal H$ induces a canonical surjection from the stack to the
moduli space.  

To isolate the base structure of $\mathcal K$, it is convenient to work over the torus
$T^{2g}\times T^{2g}$. Writing $(\bar{\mathbf{x}},\bar{\mathbf{y}})\in T^{2g}\times T^{2g}$ for
the class of $(\mathbf{x},\mathbf{y})$, the defining condition
$(\lambda-1)\phi^{0}_{(\mathbf{m},\mathbf{n})}(\mathbf{x},\mathbf{y})\in\mathbb{Z}$ descends to
the torus. For another lift $(\mathbf{x}+\mathbf{p},\mathbf{y}+\mathbf{q})$ with
$(\mathbf{p},\mathbf{q})\in\mathbb{Z}^{2g}\times\mathbb{Z}^{2g}$,
\[
  (\lambda-1)\phi^{0}_{(\mathbf{m},\mathbf{n})}(\mathbf{x}+\mathbf{p},\,\mathbf{y}+\mathbf{q})
    = (\lambda-1)\phi^{0}_{(\mathbf{m},\mathbf{n})}(\mathbf{x},\mathbf{y})
      + (\lambda-1)\bigl(\mathbf{n}\Theta_g \mathbf{p}^T - \mathbf{m}\Theta_g \mathbf{q}^T\bigr),
\]
and the correction term lies in $\mathbb{Z}$ by integrality of $\Theta_g$, so membership in
$\mathcal K$ is independent of the chosen lift, and the isotropy group
\[
  G_{(\mathbf{x},\mathbf{y})}
    = \big\{(\mathbf{m},\mathbf{n})\in\mathbb{Z}^{2g}\times\mathbb{Z}^{2g} \ \big|\
        (\lambda-1)\phi^{0}_{(\mathbf{m},\mathbf{n})}(\mathbf{x},\mathbf{y})\in\mathbb{Z}\big\}
\]
depends only on the class $(\bar{\mathbf{x}},\bar{\mathbf{y}})$. Since
$(\mathbf{m},\mathbf{n})\in\mathbb{Z}^{2g}\times\mathbb{Z}^{2g}$ acts trivially on the base,
$\overline{\mathbf{x}+\mathbf{m}} = \bar{\mathbf{x}}$ and
$\overline{\mathbf{y}+\mathbf{n}} = \bar{\mathbf{y}}$, every morphism of $\mathcal K$ is a loop at
its base point $(\bar{\mathbf{x}},\bar{\mathbf{y}})$, carrying the fibre translation by
$\phi^{\lambda}_{(\mathbf{m},\mathbf{n})}(\mathbf{x},\mathbf{y})$. The  isotropy bundle of $\mathcal K$ over $T^{2g}\times T^{2g}$ is the bundle of abelian groups
\[
\mathcal{\widetilde{K}}^{(1)}
    = \coprod_{(\bar{\mathbf{x}},\bar{\mathbf{y}})\in T^{2g}\times T^{2g}}
      \{(\bar{\mathbf{x}},\bar{\mathbf{y}})\}\times G_{(\mathbf{x},\mathbf{y})}
\]
whose fibre over $(\bar{\mathbf{x}},\bar{\mathbf{y}})$ is the isotropy group
 $G_{(\mathbf{x},\mathbf{y})}=\ker l_{(A,B)}$, which varies with the arithmetic of
the base point. By Theorem~\ref{thm:level-stack}(ii) its rank is
$4g-\operatorname{rank}\bigl(\im l^{0}_{(A,B)}\bigr)$, independent of the level $\lambda\ne1$,
so the stratification described below is the same at every such level and only the finite
parts of the fibres depend on $\lambda$. 
The rank of $G_{(\mathbf{x},\mathbf{y})}$ is controlled by the number of rationally
independent coordinates of $(\bar{\mathbf{x}},\bar{\mathbf{y}})$. At a generic point, where
the coordinates are independent over $\mathbb{Q}$, the image $\im l_{(A,B)}$ is dense of full
rank $4g$ in $\mathbb{R}/\mathbb{Z}$, so $G_{(\mathbf{x},\mathbf{y})}$ has rank $0$ and (for
$\lambda\ne1$) is finite. At a fully rational point the image is finite cyclic and
$G_{(\mathbf{x},\mathbf{y})}$ is a finite-index sublattice of $\mathbb{Z}^{4g}$ of full rank
$4g$. Across the intermediate strata the rank drops step by step from $4g$ to $0$ as the
number of independent coordinates decreases. The fibre therefore jumps in rank along the
rational hyperplanes of the base, so $\widetilde{\mathcal K}^{(1)}\to T^{2g}\times T^{2g}$ is
not locally trivial and its total space is non-Hausdorff.

The T-duality map $\mathcal{T}_*$ on the stack is covered on $\mathcal H$ by the map
\[
  \Psi\big((\mathbf{x},\mathbf{y},[t]),(\mathbf{m},\mathbf{n})\big)
  =\Big(\big(\sigma(\mathbf{x},\mathbf{y}),[t+\mathbf{x}\Theta_g \mathbf{y}^{T}]\big),\;\tau(\mathbf{m},\mathbf{n})\Big),
\]
where $\sigma(\mathbf{x},\mathbf{y})=(-\mathbf{y},-\mathbf{x})$ and
$\tau(\mathbf{m},\mathbf{n})=(-\mathbf{n},-\mathbf{m})$, a bijection of the underlying object
and morphism sets. Its failure to intertwine the deck action is exactly the obstruction of
Proposition~\ref{prop:obstruction}: for the object map
$\psi_0(\mathbf{x},\mathbf{y},[t])=\big(\sigma(\mathbf{x},\mathbf{y}),[t+\mathbf{x}\Theta_g \mathbf{y}^{T}]\big)$,
the same computation gives
\[
  \psi_0\big((\mathbf{m},\mathbf{n})\cdot(\mathbf{x},\mathbf{y},[t])\big)
  -\tau(\mathbf{m},\mathbf{n})\cdot\psi_0\big((\mathbf{x},\mathbf{y},[t])\big)
  \;\equiv\;-(\lambda-1)\,\phi^{0}_{(\mathbf{m},\mathbf{n})}(\mathbf{x},\mathbf{y})\pmod{\mathbb{Z}}.
\]
Hence $\Psi$ is a groupoid automorphism of $\mathcal H$ precisely on the locus where this
vanishes. For $\lambda=1$ where stack and moduli space coincide, it is an automorphism of $\mathcal H$ itself and $\mathcal K=\mathcal H$,
while for $\lambda\ne1$ it is equivariant for the deck action exactly on
$G_{(\mathbf{x},\mathbf{y})}$ and so descends to an automorphism of $\mathcal K=\ker\Phi$.


\begin{thebibliography}{99}

\bibitem{A85}
M.~F.~Atiyah,
Circular symmetry and stationary-phase approximation,
Ast\'erisque, {\bf 131} (1985), 43--59.

\bibitem{Alvarez2}
E.~Alvarez, L.~Alvarez-Gaume, J.~L.~F.~Barbon and Y.~Lozano,
{\em Some Global Aspects of Duality in String Theory},
Nucl. Phys. B {\bf 415} (1994), 71--100.

\bibitem{AB}
M.~F.~Atiyah and R.~Bott,
{\em Yang-Mills equations over Riemann surfaces},
Philos. Trans. Roy. Soc. London Ser. A {\bf 308} (1983), no. 1505, 523--615.

\bibitem{B85}
J.-M.~Bismut,
Index theorem and equivariant cohomology on the loop space,
Comm. Math. Phys. {\bf 98} (1985), no. 2, 213--237.

\bibitem{BM00}
P.~Bouwknegt and V.~Mathai,
D-branes, B-fields and twisted $K$-theory,
J. High Energy Phys. {\bf 03} (2000), 007.

\bibitem{BCMMS}
P.~Bouwknegt, A.~Carey, V.~Mathai, M.~Murray and D.~Stevenson,
Twisted $K$-theory and $K$-theory of bundle gerbes,
Comm. Math. Phys. {\bf 228} (2002), 17--45.

\bibitem{BEM04a}
P.~Bouwknegt, J.~Evslin and V.~Mathai,
T-duality: topology change from $H$-flux,
Comm. Math. Phys. {\bf 249} (2004), 383--415.

\bibitem{BEM04b}
P.~Bouwknegt, J.~Evslin and V.~Mathai,
On the topology and flux of T-dual manifolds,
Phys. Rev. Lett. {\bf 92} (2004), 181601.

\bibitem{BMPR}
P.~Bouwknegt and V.~Mathai,
Review of T-duality, in progress.

\bibitem{Bry}
J.-L.~Brylinski,
{\em Loop spaces, characteristic classes and geometric quantization},
Progress in Mathematics, {\bf 107}, Birkh\"auser Boston, Inc., Boston, MA, 1993.

\bibitem{Bry98}
J.-L.~Brylinski,
{\em Categories of vector bundles and Yang-Mills equations},
Higher category theory, Contemp. Math., {\bf 230}, Amer. Math. Soc., Providence, RI, 1998, 83--98.

\bibitem{BunkeSchick}
U.~Bunke and T.~Schick,
On the topology of T-duality,
Rev. Math. Phys. {\bf 17} (2005), no. 1, 77--112.

\bibitem{Buscher}
T.~Buscher,
{\em A symmetry of the string background field equations},
Phys. Lett. B {\bf 194} (1987), no. 1, 59--62.

\bibitem{CG}
G.~R.~Cavalcanti and M.~Gualtieri,
Generalized complex geometry and T-duality,
in {\em A celebration of the mathematical legacy of Raoul Bott},
CRM Proc. Lecture Notes, {\bf 50}, Amer. Math. Soc., Providence, RI, 2010, 341--365.

\bibitem{DH82}
J.~J.~Duistermaat and G.~J.~Heckman,
On the variation in the cohomology of the symplectic form of the reduced phase space,
Invent. Math. {\bf 69} (1982), no. 2, 259--268.

\bibitem{DH83}
J.~J.~Duistermaat and G.~J.~Heckman,
Addendum to: On the variation in the cohomology of the symplectic form of the reduced phase space,
Invent. Math. {\bf 72} (1983), no. 1, 153--158.

\bibitem{GJP}
E.~Getzler, J.~D.~S.~Jones and S.~Petrack,
Differential forms on loop spaces and the cyclic bar complex,
Topology {\bf 30} (1991), no. 3, 339--371.

\bibitem{GS}
V.~Guillemin and S.~Sternberg,
{\em Symplectic techniques in physics},
Cambridge University Press, Cambridge, 1984.

\bibitem{FH08}
F.~Han,
{\em Supersymmetric QFTs, Super Loop Spaces and Bismut Chern Character},
Ph.D. thesis, University of California, Berkeley, 2008.

\bibitem{HM15}
F.~Han and V.~Mathai,
Exotic twisted equivariant cohomology of loop spaces, twisted Bismut-Chern character and T-duality,
Comm. Math. Phys. {\bf 337} (2015), no.~1, 127--150.

\bibitem{HM18}
F.~Han and V.~Mathai,
T-duality in an $H$-flux: exchange of momentum and winding,
Comm. Math. Phys. {\bf 363} (2018), no.~1, 333--350.

\bibitem{HM22}
F.~Han and V.~Mathai,
T-duality, vertical holonomy line bundles and loop Hori formulae,
Rev. Math. Phys. {\bf 33} (2022), 2250019, 25~pp.

\bibitem{HM24}
F.~Han and V.~Mathai,
T-duality with $H$-flux for 2d sigma models,
Comm. Math. Phys. {\bf 405} (2024), article no.~294.

\bibitem{HM25}
F.~Han and V.~Mathai,
Loop Hori formulae for T-duality and twisted Bismut-Chern character,
Adv. Theor. Math. Phys., to appear.

\bibitem{HLSUZ}
C.~M.~Hull, U.~Lindstrom, L.~Melo dos Santos, R.~von Unge and M.~Zabzine,
{\em Topological sigma models with $H$-flux},
J. High Energy Phys. {\bf 12} (2008), 057.

\bibitem{JP}
J.~D.~S.~Jones and S.~B.~Petrack,
The fixed point theorem in equivariant cohomology,
Trans. Amer. Math. Soc. {\bf 322} (1990), 35--49.

\bibitem{KL}
A.~Kapustin and Y.~Li,
{\em Topological sigma-models with $H$-flux and twisted generalized complex manifolds},
Adv. Theor. Math. Phys. {\bf 11} (2007), no. 2, 261--290.

\bibitem{Kleiman}
S.~L.~Kleiman,
The Picard scheme,
{\em Alexandre Grothendieck: a mathematical portrait}, International Press, Somerville, MA, 2014, 35--74.

\bibitem{Kostant}
B.~Kostant,
Quantization and unitary representations,
in {\em Lectures in modern analysis and applications III},
Lecture Notes in Math., {\bf 170}, Springer, Berlin, 1970, 87--208.

\bibitem{ALP}
H.~Aslaksen, S.~T.~Lee and J.~A.~Packer,
{\em $K$-theory for the integer Heisenberg groups},
$K$-Theory {\bf 16} (1999), no. 3, 201--227.

\bibitem{LP}
S.~T.~Lee and J.~A.~Packer,
{\em The cohomology of the integer Heisenberg groups},
J. Algebra {\bf 184} (1996), no. 1, 230--250.

\bibitem{LP99}
S.~T.~Lee and J.~A.~Packer,
{\em K-theory for $C^{*}$-algebras associated to lattices in Heisenberg Lie groups},
J. Operator Theory {\bf 41} (1999), no. 2, 291--319.

\bibitem{LRRUZ}
U.~Lindstr\"om, M.~Ro\v{c}ek, I.~Ryb, R.~von Unge and M.~Zabzine,
{\em T-duality and generalized K\"ahler geometry},
J. High Energy Phys. {\bf 02} (2008), 056.

\bibitem{MR05}
V.~Mathai and J.~Rosenberg,
T-duality for torus bundles with H-fluxes via noncommutative topology,
Comm. Math. Phys. {\bf 253} (2005), no. 3, 705--721.

\bibitem{MS}
V.~Mathai and D.~Stevenson,
Chern character in twisted $K$-theory: equivariant and holomorphic cases,
Comm. Math. Phys. {\bf 236} (2003), 161--186.

\bibitem{MW}
J.~Marsden and A.~Weinstein,
Reduction of symplectic manifolds with symmetry,
Rep. Mathematical Phys. {\bf 5} (1974), no. 1, 121--130.

\bibitem{Segal}
G.~Segal,
Topological structures in string theory,
Phil. Trans. R. Soc. Lond. A {\bf 359} (2001), 1389--1398.

\end{thebibliography}
\end{document}